\renewenvironment{acks}{%
  \makeatletter\if@ACM@anonymous\makeatother
  \else\makeatother 
  \begingroup
  \section*{Acknowledgments}
  \phantomsection\addcontentsline{toc}{section}{Acknowledgments}
}{%
  \endgroup
  \fi
}
\newcommand{\originalgrumbler}[2]{\begin{quote}\textcolor{blue}{\sl{\bf #1 says:} #2}\end{quote}}
\newcommand{\grumbler}[2]{\originalgrumbler{#1}{#2}}
\newcommand{\mike}[1]{\grumbler{Mike}{#1}}
\newcommand{\jake}[1]{\grumbler{Jake}{#1}}
\newcommand{\kaan}[1]{\grumbler{Kaan}{#1}}
\definecolor{darkgreen}{rgb}{0,0.4,0}
\newcommand{\sfsmaller}{}
\newcommand{\bench}[1]{\textsf{\sfsmaller#1}}
\newcommand{\code}[1]{\textsf{\sfsmaller#1}}
\newcommand{\mc}[3]{\multicolumn{#1}{#2}{#3}}
\newcommand{\eg}{e.g.\xspace}
\newcommand{\ie}{i.e.\xspace}
\newcommand{\etal}{et al.\xspace}
\newcommand{\thr}[1]{\textsf{\sfsmaller#1}}
\def\amsbb{\use@mathgroup \M@U \symAMSb}
\newtheorem*{theorem*}{Theorem}
\definecolor{darkgreen}{rgb}{0,0.4,0}
\definecolor{darkred}{rgb}{0.4,0,0}
\algnewcommand{\LineComment}[1]{\hfill \(\triangleright\) #1} 
\algnewcommand{\LineCommentx}[1]{\Statex \hskip\ALG@thistlm \(\triangleright\) #1}
\algnewcommand{\LineCommentxx}[1]{\Statex \hskip\ALG@tlm \(\triangleright\) #1}
\algnewcommand{\CaseComment}[1]{\hfill #1}
\algnewcommand{\lIf}[2] {\State \algorithmicif\ #1 \algorithmicthen\ #2} 
\algnewcommand{\lIfElse}[3] {\State \algorithmicif\ #1 \algorithmicthen\ #2 \algorithmicelse\ #3} 
\algnewcommand{\lElse}[1] {\State \algorithmicelse\ #1} 
\algnewcommand{\lForAll}[2]{\State \algorithmicforall\ #1 \algorithmicdo\ #2} 
\algnewcommand\algorithmicforeach{\textbf{foreach}} 
\algnewcommand{\lForEach}[2] {\State \algorithmicforeach\ #1 \algorithmicdo\ #2} 
\newcommand{\tikzmark}[1]{\tikz[remember picture, baseline] \node[inner sep=0pt, outer sep=0pt] (#1){};}
\newcommand{\textlink}[3]{
\begin{tikzpicture}[remember picture, overlay, >=stealth, shift={(0,0)}] 
	\draw[arrows=->] (#1) to node[sloped,anchor=center,above] {{\smaller #3}} (#2);
\end{tikzpicture}}
\newcommand{\textunderlink}[3]{
\begin{tikzpicture}[remember picture, overlay, >=stealth, shift={(0,0)}] 
	\draw[arrows=->] (#1) to node[sloped,anchor=center,below] {{\smaller #3}} (#2);
\end{tikzpicture}}
\newcommand{\textcurvelink}[5]{
\begin{tikzpicture}[remember picture, overlay, >=stealth, shift={(0,0)}] 
	\draw[arrows=->] (#1) to [out=#3,in=#4] node[sloped,anchor=center,above] {{\smaller #5}} (#2); 
\end{tikzpicture}}
\newcommand{\back}[4]{
\begin{tikzpicture}[remember picture, overlay, >=stealth, shift={(0,0)},circle dotted/.style={dash pattern=on 0.2mm off 1mm,line cap=round}] 
	\draw[circle dotted,line width=0.5mm,arrows=->] (#1) to [out=#3,in=#4] (#2); 
\end{tikzpicture}}
\newcommand{\forwardcurved}[4]{
\begin{tikzpicture}[remember picture, overlay, >=stealth, shift={(0,0)},circle dotted/.style={dash pattern=on 0.2mm off 1mm,line cap=round}]
	\draw[circle dotted,line width=0.5mm,arrows=->] (#1) to [out=#4,in=#3] (#2);
\end{tikzpicture}}
\newcommand{\init}[4]{
\begin{tikzpicture}[remember picture, overlay, >=stealth, shift={(0,0)}] 
	\draw[dashed,arrows=->] (#1) to [out=#3,in=#4] (#2); 
\end{tikzpicture}}
\newcommand{\num}[1]{}
\newcolumntype{H}{>{\setbox0=\hbox\bgroup}c<{\egroup}@{}} 
\newcolumntype{Z}{>{\setbox0=\hbox\bgroup}c<{\egroup}@{\hspace*{-\tabcolsep}}} 
\newcommand{\CS}[1]{\ensuremath{\mathit{CS(}#1)}\xspace}
\newcommand{\inCS}[2]{\ensuremath{#1\in\CS{#2}}}
\newcommand{\lockset}[1]{\ensuremath{\mathit{lockset}(#1)}\xspace}
\newcommand{\commonLocks}[2]{\ensuremath{\lockset{#1} \cap \lockset{#2}}}
\newcommand{\lastwr}[2]{\ensuremath{\mathit{lastwr}_{#2}(#1)}\xspace}
\newcommand{\nolastwr}{\ensuremath{\varnothing}\xspace}
\newcommand{\conflicts}[2]{\ensuremath{#1 \asymp #2}} 
\newcommand{\getLock}[1]{\ensuremath{\mathit{L}(#1)}\xspace}
\newcommand{\getAcquire}[1]{\ensuremath{\mathit{A}(#1)}\xspace}
\newcommand{\getRelease}[1]{\ensuremath{\mathit{R}(#1)}\xspace}
\newcommand{\getStatic}[1]{\textcolor{red}{\ensuremath{\mathit{srcloc}(#1)}}\xspace}
\newcommand{\getDeps}[1]{\ensuremath{\mathit{deps}(#1)}\xspace}
\DeclareDocumentCommand{\event}{mg}{\ensuremath{#1\IfValueT{#2}{#2}}\xspace}
\newcommand{\Gpath}[3]{\ensuremath{#1 \leadsto_{#3} #2}\xspace}
\newcommand{\edge}[2]{\ensuremath{(#1, #2)}\xspace}
\newcommand\Gv{\ensuremath{G}\xspace}
\newcommand{\dist}[2]{\ensuremath{\mathit{d(#1, #2)}}\xspace}
\newcommand{\CORE}{\textcolor{red}{CORE}\xspace}
\newcommand{\isCausal}[3]{\ensuremath{\mathit{causal(#1, #2, #3)}}\xspace}
\newcommand{\WBRdist}{\WBR-distance\xspace}
\newcommand{\Gdist}{\Gv-distance\xspace}
\newcommand{\LwFull}{Last writer\xspace}
\newcommand{\LW}{LW\xspace}
\newcommand{\LsFull}{Lock semantics\xspace}
\newcommand{\LS}{LS\xspace}
\newcommand{\tr}{\ensuremath{\mathit{tr}}\xspace}
\newcommand{\trPrime}{\ensuremath{\mathit{tr'}}\xspace}
\newcommand{\trDoublePrime}{\ensuremath{\mathit{tr''}}\xspace}
\newcommand{\PO}{PO\xspace}
\newcommand{\PoFull}{Program-order\xspace}
\newcommand{\HB}{HB\xspace}
\newcommand{\hbFull}{happens-before\xspace}
\newcommand{\CP}{CP\xspace}
\newcommand{\cpFull}{causally-precedes\xspace}
\newcommand{\WCP}{WCP\xspace}
\newcommand{\wcpFull}{weak-causally-precedes\xspace}
\newcommand{\WDC}{DC\xspace}
\newcommand{\wdcFull}{doesn't-commute\xspace}
\newcommand{\BR}{SDP\xspace}
\newcommand{\brFull}{strong-dependently-pre\-cedes\xspace}
\newcommand{\WBR}{WDP\xspace}
\newcommand{\wbrFull}{weak-dependently-pre\-cedes\xspace}
\newcommand{\DC}{\WDC}
\newcommand{\dcFull}{\wdcFull}
\newcommand{\DCOrdered}[2]{\WDCOrdered{#1}{#2}}
\newcommand\GetCOREReads{\textsc{Get\CORE{}Reads}\xspace}
\newcommand\checkWBRRace{\textsc{Vindicate\WBR{}Race}\xspace}
\newcommand\CheckWBRRace{\checkWBRRace}
\newcommand\checkDCRace{\textsc{Vindicate\DC{}Race}\xspace}
\newcommand\Def{Definition}
\newcommand{\ltTR}{\ensuremath{<_\textsc{\tr}}\xspace}
\newcommand{\ltTRPrime}{\ensuremath{<_\textsc{\trPrime}}\xspace}
\newcommand{\ltPO}{\ensuremath{\prec_\textsc{\tiny{\PO}}}\xspace}
\newcommand{\ltHB}{\ensuremath{\prec_\textsc{\tiny{\HB}}}\xspace}
\newcommand{\nltHB}{\ensuremath{\not\prec_\textsc{\tiny{\HB}}}\xspace}
\newcommand{\ltWCP}{\ensuremath{\prec_\textsc{\tiny{\WCP}}}\xspace}
\newcommand{\ltWDC}{\ensuremath{\prec_\textsc{\tiny{\WDC}}}\xspace}
\newcommand{\ltBR}{\ensuremath{\prec_\textsc{\tiny{\BR}}}\xspace}
\newcommand{\nltBR}{\ensuremath{\not\prec_\textsc{\tiny{\BR}}}\xspace}
\newcommand{\ltWBR}{\ensuremath{\prec_\textsc{\tiny{\WBR}}}\xspace}
\newcommand{\nltWBR}{\ensuremath{\not\prec_\textsc{\tiny{\WBR}}}\xspace}
\newcommand{\Ordered}[3]{\ensuremath{#1 #2 #3}}
\newcommand{\TROrdered}[2]{\Ordered{#1}{\ltTR}{#2}}
\newcommand{\TRPrimeOrdered}[2]{\Ordered{#1}{\ltTRPrime}{#2}}
\DeclareDocumentCommand{\POOrdered}{mmg}{\Ordered{#1}{\ltPO\IfValueT{#3}{#3}}{#2}}
\newcommand{\nHBOrdered}[2]{\Ordered{#1}{\nltHB}{#2}}
\newcommand{\HBOrdered}[2]{\Ordered{#1}{\ltHB}{#2}}
\newcommand{\WCPOrdered}[2]{\Ordered{#1}{\ltWCP}{#2}}
\newcommand{\WDCOrdered}[2]{\Ordered{#1}{\ltWDC}{#2}}
\newcommand{\BROrdered}[2]{\Ordered{#1}{\ltBR}{#2}}
\newcommand{\nBROrdered}[2]{\Ordered{#1}{\nltBR}{#2}}
\newcommand{\WBROrdered}[2]{\Ordered{#1}{\ltWBR}{#2}}
\newcommand{\nWBROrdered}[2]{\Ordered{#1}{\nltWBR}{#2}}
\newcommand{\Write}[1]{\ensuremath{\code{wr(#1)}}}
\newcommand{\Read}[1]{\ensuremath{\code{rd(#1)}}}
\newcommand{\Acquire}[1]{\ensuremath{\code{acq(#1)}}}
\newcommand{\Release}[1]{\ensuremath{\code{rel(#1)}}}
\newcommand{\Sync}[1]{\ensuremath{\code{sync(#1)}}}
\newcommand{\BrUniv}{\ensuremath{\code{br}}\xspace}
\newcommand\AcquireT[2]{\ensuremath{\Acquire{#1}^\thr{#2}}}
\newcommand\ReleaseT[2]{\ensuremath{\Release{#1}^\thr{#2}}}
\newcommand\WriteT[2]{\ensuremath{\Write{#1}^\thr{#2}}}
\newcommand\BrDepsOn[2]{\ensuremath{\mathit{brDepsOn}(#1, #2)}}
\newcommand{\addvc}{\sqcup}
\newcommand{\lessvc}{\sqsubseteq}
\newcommand{\Bfor}[3]{\textbf{foreach}\xspace \ensuremath{#1 \in #2} \textbf{do} #3}
\newcommand{\update}[2]{\ensuremath{#1 \gets #1 #2}}
\newcommand\notes[1]{\begin{quote}\textcolor{darkgreen}{\textbackslash \textbf{notes\{}} #1 \textcolor{darkgreen}{\}}\end{quote}}
\newcommand\later[1]{\begin{quote}\textcolor{darkgreen}{\textbackslash \textbf{later\{}} #1 \textcolor{darkgreen}{\}}\end{quote}}
\renewcommand{\mid}[0]{:}
\renewcommand{\grumbler}[2]{}
\renewcommand{\notes}[1]{}
\renewcommand{\later}[1]{}
\begin{document}

\title{Dependence-Aware, Unbounded Sound Predictive Race Detection%
\iftoggle{extended-version}{\title{{Dependence-Aware, Unbounded Sound Predictive Race Detection}\subtitle{}}}{}}

\titlenote{This material is based upon work supported by the National Science Foundation
under Grants CAREER-1253703, CCF-1421612, and XPS-1629126.}


\iftoggle{extended-version}{\subtitle{\normalsize\minibox[frame]{This extended arXiv version of an OOPSLA 2019 paper adds Appendices~\ref{sec:completeness-lemma}--\ref{sec:extended-results}
and \\ corrects a technical issue (see corrigendum on the ACM DL)}}{}}


\author{Kaan Gen\c{c}}
\affiliation{
  \institution{Ohio State University}            
  \country{USA}                    
}
\email{genc.5@osu.edu}          

\author{Jake Roemer}
\affiliation{
  \institution{Ohio State University}
  \country{USA}                    
}
\email{roemer.37@osu.edu}          

\author{Yufan Xu}
\affiliation{
  \institution{Ohio State University}
  \country{USA}                    
}
\email{xu.2882@osu.edu}          

\author{Michael D. Bond}
\affiliation{
  \institution{Ohio State University}
  \country{USA}                    
}
\email{mikebond@cse.ohio-state.edu}          

\begin{abstract}

Data races are a real problem for parallel software, yet hard to detect.
Sound predictive analysis observes a program execution and detects data races
that exist in some \emph{other, unobserved} execution.
However, existing predictive analyses miss races because they do not
scale to full program executions or do not precisely incorporate data and control dependence.

This paper introduces two novel, sound predictive approaches that incorporate data and control dependence and handle full program executions.
An evaluation using real, large Java programs shows that these approaches
detect more data races than the closest related approaches,
thus advancing the state of the art in sound predictive race detection.

\end{abstract}

\begin{CCSXML}
<ccs2012>
<concept>
<concept_id>10011007.10010940.10010992.10010998.10011001</concept_id>
<concept_desc>Software and its engineering~Dynamic analysis</concept_desc>
<concept_significance>300</concept_significance>
</concept>
<concept>
<concept_id>10011007.10011074.10011099.10011102.10011103</concept_id>
<concept_desc>Software and its engineering~Software testing and debugging</concept_desc>
<concept_significance>300</concept_significance>
</concept>
</ccs2012>
\end{CCSXML}
	
\ccsdesc[300]{Software and its engineering~Dynamic analysis}
\ccsdesc[300]{Software and its engineering~Software testing and debugging}
	
\keywords{data race detection, dynamic predictive analysis}

\maketitle


\section{Introduction}
\label{sec:intro}

\notes{
\mike{\BR is the weakest known sound partial order, and \WBR is the strongest known complete partial order (?).}
}


With the rise in parallel software,
\emph{data races} represent a growing hazard.
Programs with data races written in shared-memory languages including Java and C++
have weak or undefined semantics,
as a result of assuming data race freedom for performance reasons~\cite{java-memory-model,c++-memory-model-2008,memory-models-cacm-2010}.
Data races are culprits in real software failures, resulting in substantial financial losses and even harm to humans~\cite{boehm-miscompile-hotpar-11, portend-toplas15,
conc-bug-study-2008, portend-asplos12, benign-races-2007, prescient-memory,
adversarial-memory, racefuzzer, relaxer, cost-of-software-errors,blackout-2003-tr,therac-25,nasdaq-facebook}.

Writing scalable, data-race-free code is challenging, as is
detecting data races, which occur nondeterministically
depending on shared-memory interleavings and program inputs and environments.
The most common approach for dealing with data races is to detect them
during in-house testing using dynamic \emph{\hbFull (\HB)} analysis~\cite{fasttrack,multirace,goldilocks-pldi-2007,google-tsan-v1,google-tsan-v2,intel-inspector},
which detects conflicting accesses (two memory accesses, at least one of which is a write, to the same variable by different threads)
unordered by the \HB partial order~\cite{happens-before}.
However, \HB analysis misses data races when accesses \emph{could} race
in some \emph{other} execution but are ordered by critical sections on the same lock in the observed execution.

A promising alternative to \HB analysis is \emph{sound predictive analysis},
which detects additional predictable data races from an observed
execution~\cite{rvpredict-pldi-2014,said-nfm-2011,rdit-oopsla-2016,jpredictor,maximal-causal-models,ipa,causally-precedes,wcp,vindicator,pavlogiannis-2019};
an analysis is \emph{sound} if it detects no false races (Section~\ref{sec:background}).
Some predictive analyses rely on generating and solving SMT constraints, so in practice they
cannot scale to full program executions and instead analyze \emph{bounded windows} of execution,
missing races between accesses that do not execute close
together~\cite{rvpredict-pldi-2014,said-nfm-2011,rdit-oopsla-2016,jpredictor,maximal-causal-models,ipa} (Section~\ref{sec:related}).
In contrast, \emph{unbounded} predictive analyses avoid this limitation
by detecting races based on computing a partial order weaker than \HB,
using analyses with linear running time in the length of the trace~\cite{wcp,vindicator}.
However, these partial-order-based analyses miss predictable races
because they do not incorporate precise notions of \emph{data and control dependence}.
More precisely, existing predictive partial orders do \emph{not} encode the precise
conditions for reordering memory accesses to expose a race:
The reordering can change the last writer of a memory read (data dependence)
if the read in turn cannot affect whether the racing accesses execute (control dependence).
Encoding data and control dependence precisely in a partial order is fundamentally challenging (Section~\ref{sec:background}).

\paragraph{Contributions.}


This paper designs and evaluates new predictive analyses,
making the following contributions:

\begin{itemize}

  \item A partial order called \emph{\brFull (\BR)}
  that improves over the highest-coverage sound partial order from prior work~\cite{wcp} by
  incorporating data dependence more precisely
  (Section~\ref{sec:BR-WBR-relations}).
  
  \item A proof that \BR is sound, \ie, detects no false races (Section~\ref{sec:br-soundness-proof}).

  \item A partial order called \emph{\wbrFull (\WBR)}
  that improves over the previous highest-coverage partial order from prior work~\cite{vindicator} by incorporating
  data and control dependence precisely
  (Section~\ref{sec:BR-WBR-relations}).
  


  \item A proof that \WBR is complete (sometimes called
  \emph{maximal}~\cite{rvpredict-pldi-2014,maximal-causal-models}),
  detecting all races knowable from an observed execution (Section~\ref{sec:wbr-completeness}).

  \item Dynamic analyses that compute \BR and \WBR and detect \BR- and \WBR-races
  (Section~\ref{sec:wbr-analysis}).
  
  \item An algorithm for filtering out \WBR-races that are false races,
  by extending prior work's \emph{vindication} algorithm~\cite{vindicator},
  yielding an overall sound approach (Section~\ref{sec:vindication-summary}).
  \notes{Our results show that, in contrast with prior work's results vindicating \DC-races~\cite{vindicator},
  vindication is essential in practice to rule out false \WBR-races.}%

  \item An implementation and evaluation of \BR and \WBR analyses and \WBR-race vindication
  on benchmarked versions of real, large Java
  programs (Section~\ref{sec:evaluation}).
  The evaluation shows that the analyses
  find predictable races missed by the closest
  related approaches~\cite{wcp,vindicator,rvpredict-pldi-2014}.
  \notes{which miss races due to imprecise handling of data and control
  dependence or being restricted to analyzing bounded windows.}%
\end{itemize}

\section{Background and Motivation}
\label{sec:background}

Recent partial-order-based predictive analyses can scale to full program executions,
enabling detection of predictable races that are millions of executed operations apart~\cite{wcp,vindicator}.
However, these partial orders are fundamentally limited and miss predictable races,
as this section explains. First, we introduce formalisms used throughout the paper.

\subsection{Execution Model}
\label{subsec:execution-trace}


An \emph{execution trace} \tr is a sequence of events, ordered by the total order \ltTR,
that represents a multithreaded execution without loss of generality,
corresponding to a linearization of a sequentially consistent (SC) execution.\footnote{Although
programs with data races may violate SC~\cite{java-memory-model,c++-memory-model-2008,memory-models-cacm-2010,dolan-bounding-races},
dynamic race detection analyses (including ours) add synchronization instrumentation before accesses, generally ensuring SC.}
We assume
every event in \tr is a unique object (\eg, has a unique identifier),
making it possible to identify the same event \event{e}{} across other, predicted traces.
Each event has two attributes:
(1) an identifier for the thread that executed the operation; and
(2) an operation, which is one of \Write{x}, \Read{x}, \Acquire{m}, \Release{m}, or \BrUniv,
where \code{x} is a program memory location and \code{m} is a program lock.
(Later we consider how to extend analyses to handle lock-free accesses and Java \code{volatile} / C++ \code{atomic} accesses.)
An execution trace must be \emph{well formed}:
a thread may only acquire an unheld lock
and may only release a lock it has acquired.

Each \BrUniv (branch) event \event{b}{} represents an executed conditional operation---such as a conditional jump,
polymorphic call, or array element access---that may be dependent on some prior read event(s) by the same thread.
We assume a helper function \BrDepsOn{\event{b}{}}{\event{r}{}} exists
that returns true if the value read by read event \event{r}{}
may affect \event{b}{}'s outcome.
An
implementation could use
static dependence analysis to identify reads on which a branch is data dependent.
For simplicity, the paper's examples assume \BrDepsOn{\event{b}}{\event{r}{}} always returns true,
\ie, every branch is assumed dependent on preceding reads by the same thread.
Our implementation and evaluation make the same assumption, as explained later.
This assumption limits the capability of predictive analysis to predict different executions;
in other words, it limits the number of knowable data races from a single execution.

Three example traces are shown
in Figures~\ref{fig:CORE:has-predictable-race}, \ref{fig:CORE:valid-reordering}, and \ref{fig:CORE:no-predictable-race},
in which top-to-bottom order represents
trace order, and column placement denotes an event's executing thread.
We discuss these examples in detail later.

\notes{
A static source location of an event is a unique identifier for an instruction within the analyzed program, which when executed resulted in the event. Many events may share the same static source location as same instructions execute again, however the resulting events will not necessarily access the same memory or acquire/release the same lock.
The function \getDeps{\event{b}{}} returns the set of static source locations that perform
read events that may affect \event{b}{}. \getDeps{\event{b}{}} might be based on conservative static analysis,
or it could be the universe set of all read source locations (\eg, in the absence of static analysis).
Let \getStatic{\event{e}{}} be the static source location of read event \event{e}{}. 
Then \BrDepsOn{\event{b}{}}{\event{e}{}} holds if \event{b}{} is potentially dependent on \event{e}{}, \ie, a path from \getStatic{\event{e}} to \getStatic{\event{b}} exists in the system dependency graph.
In the absence of static analysis---and in this paper's example executions, implementation, and
evaluation---we assume conservatively that each branch is dependent on all prior reads,
\ie, \BrDepsOn{\event{b}}{\event{e}} holds for any \event{e}{}.
\kaan{Reviewer B says: ``this seems imprecise formulation: what is the connection between e and b?''. I tried addressing the comment by explaining that "depends" means "path on SDG", but I'm not sure if that is the question here.
\mike{I think that's in the right direction, except SDG isn't defined, and we don't actually use it in the future.}}
}

Two read or write events to the same variable are \emph{conflicting},
notated \conflicts{\event{e}{}}{\event{e}{'}},
if the events are executed by different threads and at least one is a write.

\emph{\PoFull (\PO)} is a partial order that orders events in the same thread:
\POOrdered{\event{e}{}}{\event{e}{'}} if \TROrdered{\event{e}{}}{\event{e}{'}} and the events are executed by the same thread.

The function
\CS{\event{e}{}} returns the set of events in the critical section started or ended by acquire or release event \event{e}{},
including the bounding acquire and release events.
\getRelease{\event{a}{}} returns the release event ending the critical section started by acquire event \event{a}{}, and
\getAcquire{r} returns the acquire event starting the critical section ended by release event \event{r}{}.
The function \lockset{\event{e}{}} returns the set of locks held at a read or write event
\event{e}{} by its executing thread.

\subsection{Predictable Traces and Predictable Races}
\label{subsec:execution-reordering}


\newsavebox{\firstlisting}
\begin{lrbox}{\firstlisting}
\begin{lstlisting}
int z = 0, y = 0; Object m = new Object();
new Thread(() -> { synchronized (m) {
                     int t = z;
                     y = 1;
                 } }).start();
new Thread(() -> { synchronized (m) {
                     z = 1;
                     x = 1;
                 } }).start();
new Thread(() -> { synchronized (m) {
                     int t = x;
                     if (t == 0) return;
                   }
                   int t = y;
                 } }).start();
\end{lstlisting}
\end{lrbox}

\newsavebox{\secondlisting}
\begin{lrbox}{\secondlisting}
\begin{lstlisting}
int z = 0, y = 0; Object m = new Object();
new Thread(() -> { synchronized (m) {
                     int t = z;
                     %*\textbf{\textcolor{blue}{if} (t == 0)}*)
                       y = 1;
                 } }).start();
new Thread(() -> { synchronized (m) {
                     z = 1;
                     x = 1;
                 } }).start();
new Thread(() -> { synchronized (m) {
                     int t = x;
                     if (t == 0) return;
                   }
                   int t = y;
                 } }).start();
\end{lstlisting}
\end{lrbox}

\begin{figure*}
\captionsetup{farskip=0pt} 
\footnotesize
\centering
\sf

\subfloat[Java code that could lead to the executions in (b) and (c).]{
\makebox[.35\linewidth]{\usebox{\firstlisting}}
\label{fig:CORE:sample-code}}
\hfill
\subfloat[Execution with a predictable race]{
\renewcommand\tikzmark[1]{\relax}
\makebox[.27\linewidth]{
\begin{tabular}{@{}lll@{}}
\textnormal{Thread 1} & \textnormal{Thread 2} & \textnormal{Thread 3}\\\hline
\Acquire{m} \\
\Read{z} \\
\Write{y} \\
\Release{m}\tikzmark{1}\\
                      & \Acquire{m} \\
                      & \Write{z} \\
                      & \Write{x} \\
                      & \Release{m}\tikzmark{7} \\
                      &             & \Acquire{m} \\
                      &						  & \tikzmark{8}\Read{x} \\
                      &						  & \tikzmark{9}\BrUniv \\
                      & 					  & \Release{m} \\
                      &						  & \Read{y} \\\\\\
\end{tabular}}
\label{fig:CORE:has-predictable-race}
}
\hfill
\subfloat[Predictable trace of (b)]{
\renewcommand\tikzmark[1]{\relax}
\makebox[.27\linewidth]{
\begin{tabular}{@{}lll@{}}
\textnormal{Thread 1} & \textnormal{Thread 2} & \textnormal{Thread 3}\\\hline
                      & \Acquire{m} \\
                      & \Write{z} \\
                      & \Write{x} \\
                      & \Release{m}\tikzmark{7}\\
                      &             & \Acquire{m} \\
                      &						  & \tikzmark{8}\Read{x} \\
                      &						  & \tikzmark{9}\BrUniv \\
                      & 					  & \Release{m} \\
\Acquire{m} \\
\Read{z} \\
\Write{y} \\
                      &						  & \Read{y} \\\\\\\\
\end{tabular}}
\label{fig:CORE:valid-reordering}
}\\\medskip
\subfloat[Java code that could lead to the execution in (e).]{
\makebox[.5\linewidth]{\usebox{\secondlisting}}
\label{fig:CORE:sample-code-norace}}
\subfloat[Execution with no predictable race]{
\renewcommand\tikzmark[1]{\relax}
\makebox[.5\linewidth]{
\begin{tabular}{@{}lll@{}}
\textnormal{Thread 1} & \textnormal{Thread 2} & \textnormal{Thread 3}\\\hline
\Acquire{m} \\
\Read{z} \\
\textbf{\BrUniv} \\
\Write{y} \\
\Release{m}\tikzmark{1} & \\
            & \Acquire{m} \\
					  & \tikzmark{2}\Write{z} \\
				 	  & \Write{x} \\
					  & \Release{m}\tikzmark{7}\\
                      &             & \Acquire{m} \\
                      &						  & \tikzmark{8}\Read{x} \\
                      &						  & \tikzmark{9}\BrUniv \\
                      & 					  & \Release{m} \\
                      &						  & \Read{y} \\\\\\
\end{tabular}}
\label{fig:CORE:no-predictable-race}
}

\caption{Two code examples, with potential executions they could lead to.
    The execution in (b) has a predictable race, as demonstrated by the predictable trace in (c).
The execution in (e) has no predictable race.
\label{fig:CORE}}
\end{figure*}

By observing one execution of a program, it is possible to predict data races in both
the observed execution and some \emph{other} executions of the program.
The information present in the observed execution implies the existence of other, different executions,
called \emph{predictable traces}.
To define what traces can be predicted from an observed trace, we first define several relevant concepts.

\begin{definition}[Last writer]
  Given a trace \tr, let \lastwr{\event{r}{}}{\tr} for a read event \event{r}{} be
  the last write event before \event{r}{} in \tr that accesses the same variable as \event{r}{},
  or \nolastwr if no such event exists.
\notes{
More formally,
\begin{align*}
\lastwr{\event{r}{}}{\tr} \coloneqq
\begin{cases}
  \event{w}{} & \textnormal{if } \, \exists \event{w}{} \mid \TROrdered{\event{w}{}}{\event{r}{}} \land \conflicts{\event{w}{}}{\event{r}{}} \land
  (\nexists \event{w}{'} \mid \TROrdered{\event{w}{}}{\TROrdered{\event{w}{'}}{}{\event{r}{}}} \land \conflicts{\event{w}{'}}{\event{r}{}}) \\
  \nolastwr & \textnormal{otherwise}
\end{cases}
\end{align*}
}%
\end{definition}

To ensure that a predictable trace is feasible,
each read in a predictable trace must have the same last writer as in the observed trace---with one exception:
a read can have a different last writer if the read cannot take the execution down a different control-flow path than the observed execution.
An example of such a read is Thread~1's \Read{z} event in Figure~\ref{fig:CORE:valid-reordering}.
Next, we introduce a concept that helps in identifying reads whose last writer must be
preserved in a predictable trace.

\newcommand{\essential}{\ensuremath{\mathit{S}}\xspace}

\begin{definition}[Causal events]
Given a trace \tr, set of events \essential, and event \event{e}, let
\isCausal{\tr}{\essential}{\event{e}} be a function that returns true if at
least one of the following properties holds, and false otherwise.
\begin{itemize}
\item \event{e} is a read, and
there exists a branch event \event{b} such that
$\event{b}\in\essential \land \POOrdered{\event{e}}{\event{b}} \land \BrDepsOn{\event{b}}{\event{e}}$.
\item \event{e} is a write, and
there exists a read event \event{e}{'} such that
$\event{e}{'} \in \essential \land \event{e} = \lastwr{\event{e}{'}}{\tr}$.
\item \event{e} is a read, and
there exists a write event \event{e}{'} such that
$\event{e}{'} \in \essential \land \POOrdered{\event{e}}{\event{e}{'}}$
(\event{e}{} and \event{e}{'} may access different variables).
\end{itemize}
\label{def:causal-events}
\end{definition}

\notes{
\jake{The set S feels disconnected from tr.
As in, S could be any set of events that might not even be in tr.
Could S be described as some subset of events in tr?
\mike{Hmm, technically it works because in each bullet above,
the event in \essential is always associated with \tr.}
\mike{Do we ever only use \isCausal{\tr}{\essential}{\event{e}}
where \essential is actually a trace \trPrime?
If so, should we just have the second parameter be a trace \trPrime?}}
}%
Intuitively, \isCausal{\tr}{\essential}{\event{e}} tells us whether an event \event{e}
could have affected some event \event{e}{'} in \essential directly.
For example, \isCausal{\tr}{\essential}{\event{e}}
if read \event{e} may affect a branch event in \essential;
if \event{e} writes a variable later read by an event in \essential; or
if \event{e} reads a value that may affect a later write by the same thread in \essential
(even if the read and write are to different variables, to account for intra-thread data flow).
We can now define a predictable trace of an observed trace, which
is a trace that is definitely a feasible execution of the program, given the existence of the observed execution.

\begin{definition}[Predictable trace]
  An execution trace \trPrime is a \emph{predictable trace} of trace \tr if \trPrime
  contains only events in \tr
  (\ie, $\forall \event{e}{} \mid \event{e}{} \in \trPrime \implies \event{e}{} \in \tr$)
  and all of the following rules hold:
  
  \smallskip
  \emph{Program order (\PO) rule:~}
  For any events \event{e}{_1} and \event{e}{_2}, if
  \POOrdered{\event{e}{_1}}{\event{e}{_2}}, then
  $\TRPrimeOrdered{\event{e}{_1}}{\event{e}{_2}} \lor \event{e}{_2} \notin \trPrime$.
  
  \smallskip
  \emph{\LwFull (\LW) rule:~} For every read event \event{e}{}
  such that $\isCausal{\tr}{\trPrime}{\event{e}}$,
  $\lastwr{\event{e}}{\trPrime}=\lastwr{\event{e}}{\tr}$.
  (In this context, \trPrime means the set of events in the trace \trPrime.)

  \smallskip
  \emph{\LsFull (\LS) rule:~} For acquire events \event{e}{_1} and
  \event{e}{_2} on the same lock, if \TRPrimeOrdered{\event{e}{_1}}{\event{e}{_2}} then
  \TRPrimeOrdered{\event{e}{_1}}{\TRPrimeOrdered{\getRelease{\event{e}{_1}}}{\event{e}{_2}}}.\label{def:valid-reordering}
\end{definition}

The \PO and \LW rules ensure key properties from \tr also hold in \trPrime,
while the \LS rule ensures that \trPrime is well formed.
The intuition behind the \LW rule is that any read that may (directly or indirectly) affect the control flow
of the program must have the same last writer in predictable trace \trPrime as in observed trace \tr.


Note that throughout the paper,
partial ordering notation such as \Ordered{\event{e}{}}{\prec}{\event{e}{'}} refers to the order of \event{e}{} and \event{e}{'}
in the \emph{observed} trace \tr (not a predictable trace \trPrime).

Predictable traces do not in general contain every event in the observed trace they are based on.
For the purposes of race detection, a predictable trace will \emph{conclude with} a pair of conflicting events,
which are preceded by events necessary according to the definition of predictable trace.
For example, consider Figure~\ref{fig:CORE:valid-reordering},
which is a predictable trace of Figure~\ref{fig:CORE:has-predictable-race} that
excludes Thread~1's event after \Write{y}.
The \PO rule is satisfied, and
the \LS rule is satisfied after reordering Thread~2 and 3's critical sections before Thread~1's.
The \LW rule is satisfied because \Read{z} is not a causal event in \trPrime.


\begin{definition}[Predictable race]
An execution \tr has a predictable race if
a predictable trace \trPrime of \tr
has two \emph{conflicting}, \emph{consecutive} events:
$\conflicts{\event{e}{_1}}{\event{e}{_2}} \land
\TRPrimeOrdered{\event{e}{_1}}{\event{e}{_2}} \land (\nexists \event{e}{} \mid \TRPrimeOrdered{\event{e}{_1}}{\TRPrimeOrdered{\event{e}{}}{\event{e}{_2}}})$.
\label{def:predictable-race}
\end{definition}

Figure~\ref{fig:CORE:has-predictable-race} has a predictable race,
as demonstrated by the predictable trace
in Figure~\ref{fig:CORE:valid-reordering}.
In contrast, Figure~\ref{fig:CORE:no-predictable-race} has no predictable race.
The difference between Figures~\ref{fig:CORE:has-predictable-race} and
\ref{fig:CORE:no-predictable-race} is the \BrUniv event in Thread~1.
Since no \BrUniv exists in Thread~1 in Figure~\ref{fig:CORE:has-predictable-race},
\Read{z} is not a causal event,
which in turn allows the critical sections in Threads 2 and 3 to be reordered above the critical section in Thread 1,
allowing \Write{y} and \Read{y} to be consecutive in the predictable trace.
In contrast, a \BrUniv event executes before \Write{y} in Figure~\ref{fig:CORE:no-predictable-race}.
No predictable trace of this example can exclude \BrUniv without excluding \Write{y};
otherwise the \PO rule would be violated. \Read{z} is a causal event in any predictable
trace where \Write{y} is included, which makes it impossible to reorder the critical sections.
As a result, no predictable trace exists in which \Write{y} and \Read{y} are consecutive.
Figures~\ref{fig:CORE:sample-code} and \ref{fig:CORE:sample-code-norace} show source code that could lead to
the executions in Figures~\ref{fig:CORE:has-predictable-race} and \ref{fig:CORE:no-predictable-race}, respectively.
The code in Figure~\ref{fig:CORE:sample-code-norace} has no race;
in fact, any deviation of critical section ordering from Figure~\ref{fig:CORE:no-predictable-race}'s
causes \Write{y} or \Read{y} \emph{not} to execute.


\subsection{Existing Predictive Partial Orders}



Here we overview three relations introduced in prior work,
called \emph{\hbFull (\HB)},
\emph{\wcpFull (\WCP)}, and \emph{\dcFull (\DC)},
that can be computed in time linearly proportional to the length of the execution trace~\cite{wcp,vindicator}.
Intuitively, each relation orders events that may not be legal to reorder in a predictable trace,
so that two unordered conflicting events represent a true or potential data race (depending on whether the relation is sound).
An execution trace has an \emph{\HB-race}, \emph{\WCP-race}, or \emph{\DC-race} if it contains two conflicting events that are unordered by
\HB, \WCP, or \DC, respectively.

\paragraph{Definitions of relations.}

Table~\ref{tab:partial-order-definitions-prior}
gives definitions of \HB, \WCP, and \DC by presenting their properties comparatively.
The first two rows of the table say how the relations order critical sections on the same lock.
\HB orders all critical sections on the same lock, and it orders the first critical section's \Release{m}
to the second critical section's \Acquire{m}.
\WCP and \DC order only \emph{conflicting} critical sections (critical sections on the same lock containing conflicting events),
and they order from the first critical section's \Release{m} to the second critical section's conflicting access event.
That is,
if \event{r}{_1} and \event{r}{_2} are release events on the same lock such that
\TROrdered{\event{r}{_1}}{\event{r}{_2}}, and
\event{e}{_1} and \event{e}{_2} are conflicting events (\conflicts{\event{e}{_1}}{\event{e}{_2}}) such that
$\event{e}{_1} \in \CS{\event{r}{_1}} \land\event{e}{_2} \in \CS{\event{r}{_2}}$,
then \WCPOrdered{\event{r}{_1}}{\event{e}{_2}} and \DCOrdered{\event{r}{_1}}{\event{e}{_2}}.
The intuition behind these properties of \WCP and \DC is that
non-conflicting critical sections can generally be reordered in a predictable trace;
and even in the case of conflicting critical sections, the second critical section can be ``reordered''
so that it executes only up to its conflicting access and the first critical section does not execute
at all in the predictable trace.

\begin{table}
\newcommand\yes{Yes} 
\newcommand\no{No}
\newcommand{\ltSO}{\ensuremath{\prec_\textsc{\tiny{SO}}}\xspace}
\small
\begin{tabular}{@{}l|lll@{}}
Property
 & \ltHB              & \ltWCP & \ltWDC \\\hline
Same-lock critical section ordering
 & All                & Confl.    & Confl.\\
Orders \code{rel} to\dots
 & \code{acq}         & \code{wr}/\code{rd} & \code{wr}/\code{rd} \\
 \hline
 Includes \ltPO? 
 & \yes               & \no    & \yes \\
Left-and-right composes with \ltHB?
 & \yes               & \yes   & \no \\
 \hline
$\Acquire{m}\prec\Release{m}$ implies $\Release{m}\prec\Release{m}$?
  & \yes               & \yes   & \yes \\
 Transitive?
  & \yes               & \yes   & \yes \\
 \end{tabular}


\caption{Definitions of three strict partial orders over events in an execution trace.
Each order is the minimum relation satisfying the listed properties.}
\label{tab:partial-order-definitions-prior}

\end{table}

The next two table rows show whether the relations include \PO or compose with \HB.
\HB and \DC include (\ie, are supersets of) \PO: if \POOrdered{\event{e}{_1}}{\event{e}{_2}},
then \HBOrdered{\event{e}{_1}}{\event{e}{_2}} and \DCOrdered{\event{e}{_1}}{\event{e}{_2}}.
In contrast, \WCP does not include \PO but instead \emph{composes with} the stronger \HB:
if \WCPOrdered{\event{e}{_1}}{\HBOrdered{\event{e}{_2}}{\event{e}{_3}}}
or \HBOrdered{\event{e}{_1}}{\WCPOrdered{\event{e}{_2}}{\event{e}{_3}}},
then \WCPOrdered{\event{e}{_1}}{\event{e}{_2}}.
(By virtue of being transitive, \HB composes with itself.)
The intuition behind including or composing with \PO (a subset of \HB)
is that \PO-ordered events cannot be reordered in a predictable trace.
The intuition behind \WCP composing with \HB, in essence, is to avoid predicting traces
that violate the \LS rule of predictable traces.
As a result, \WCP is sound while \DC is unsound, as we will see.

The last two rows show properties shared by all relations.
First, if two critical sections on the same lock
$\event{a}{_1} \ltPO \event{r}{_1} \ltTR \event{a}{_2} \ltPO \event{r}{_2}$
are ordered at all (meaning simply \Ordered{\event{a}{_1}}{\ensuremath{\prec_\ast}}{\event{r}{_2}}
because all relations minimally compose with \PO),
then their release events are ordered (\Ordered{\event{r}{_1}}{\ensuremath{\prec_\ast}}{\event{r}{_2}}).
Second, all of the relations are transitive. As a result of being transitive, antisymmetric, and irreflexive,
all of the relations are strict partial orders.

\paragraph{Example}


As an example of \WCP and \DC ordering, consider the execution in
Figure~\ref{fig:br-rd-wr-conflict:wbr-no-edge}.
Both relations order Thread~1's \Release{m} to Thread~2's \Write{x}
because the critical sections on \code{m} contain conflicting accesses to \code{x}.
By \WCP's composition with \HB (and thus \PO) and \DC's inclusion of \PO,
both \WCP and \DC transitively order \Read{x} to \Write{x}
and \Write{y} to \Read{y} (\WCPOrdered{\Write{y}}{\Read{y}} and \DCOrdered{\Write{y}}{\Read{y}}), so the execution has no \WCP- or \DC-races.

\later{
Figure~\ref{??} shows \WCP and \DC ordering (they are the same for this execution)
for the execution from Figure~\ref{fig:CORE:has-predictable-race}.
\WCP and \DC cannot predict the race because\dots
\mike{How about adding an example that shows why \WCP and \DC miss races; I'd suggest it be an example showing ordering for Figure~\ref{fig:CORE:has-predictable-race}?
I think the figures in the next subsection aren't really good for showing how \WCP and \DC work, or at least the \BR/\WBR ordering on them will confuse readers here.}
Figure~\ref{??} shows\dots
\mike{And next to that example, how about adding an example that differentiates \WCP and \DC?}
\mike{That might be overkill. Instead for now I've added an explanation using Figure~\ref{fig:br-rd-wr-conflict:wbr-no-edge}.}
}

\paragraph{Soundness and completeness.}

A relation or analysis is \emph{sound} if it detects a race only for an execution trace with a predictable race or
deadlock.\footnote{A trace has a \emph{predictable deadlock} if there exists a valid reordering with a deadlock.
We define soundness to include predictable deadlocks because prior work's
\WCP relation~\cite{wcp} and our \BR relation are sound in this way.}
%
A relation or analysis is \emph{complete} if it detects a race for every execution trace with a predictable race.
\label{subsec:completeness}

\WCP (and \HB) are sound: a \WCP-race (\HB-race) indicates a predictable race or deadlock.
\DC is unsound: an execution with a \DC-race may have no predictable race or deadlock.
However, prior work shows that \DC-races are generally true predictable races in practice,
and an efficient \emph{vindication} algorithm can
verify \DC-races as predictable races by computing additional constraints and building a predictable trace exposing the race~\cite{vindicator}.
Later in the paper, we provide more details about vindication, when introducing a new relation that (like \DC)
is unsound and makes use of a vindication algorithm.


\subsection{Limitations of Existing Predictive Partial Orders}
\label{sec:background:limitations}

\WCP and \DC analyses are the state of the art in detecting as many predictable races as possible using
online, unbounded analysis~\cite{wcp,vindicator}.
However, \WCP and \DC are \emph{incomplete},
failing to detect some predictable races.
\WCP and \DC are overly strict
because they order all conflicting accesses,
conservatively ruling out some predictable traces that still preserve the last writer of each causal read.
This strictness arises from imprecise handling of data and control dependence:

\paragraph{Data dependence:}

\WCP and \DC order all conflicting accesses, which is imprecise
because the order of a write--write or read--write conflict does not necessarily need to be preserved
to satisfy the last-writer (\LW) rule of predictable traces.

\begin{figure}
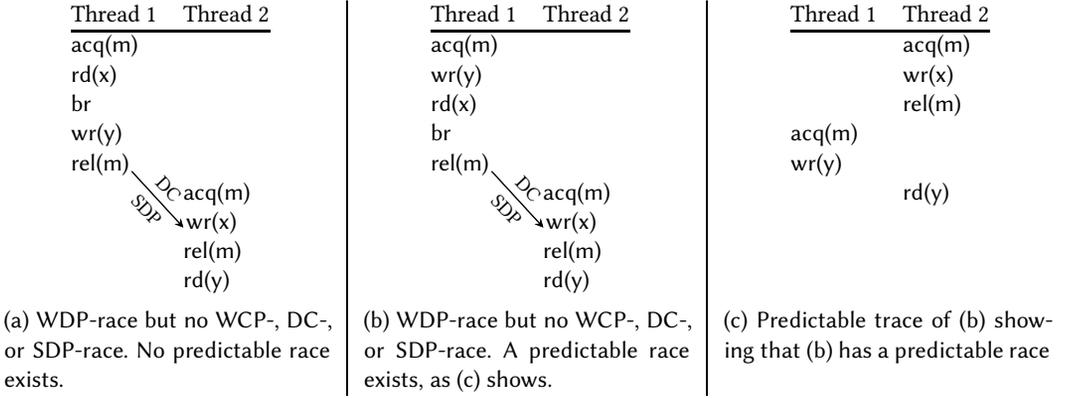

\captionsetup{farskip=0pt} 
\small
\centering

\subfloat[\WBR-race but no \WCP-, \DC-, or \BR-race. No predictable race exists.]{
\centering
\sf
\makebox[.3\linewidth]{
\begin{tabular}{@{}ll@{}}
\textnormal{Thread 1} & \textnormal{Thread 2} \\\hline 
acq(m)                & \\
rd(x)                 & \\
\BrUniv               & \\
wr(y)                 & \\
rel(m)\tikzmark{5}    & \\
                      & acq(m) \\
                      & \tikzmark{6}wr(x) \\
                      & rel(m) \\
                      & rd(y)
\end{tabular}}
\label{fig:br-rd-wr-conflict:wbr-false-race}}
\hfill\vrule\hfill
\subfloat[\WBR-race but no \WCP-, \mbox{\DC-,} or \BR-race. A predictable race exists, as (c) shows.]{
\centering
\sf
\makebox[.3\linewidth]{
\begin{tabular}{@{}ll@{}}
\textnormal{Thread 1} & \textnormal{Thread 2} \\\hline
acq(m)                & \\
wr(y)                 & \\ 
rd(x)                 & \\
\BrUniv                 \\
rel(m)\tikzmark{3}    & \\
                      & acq(m) \\
                      & \tikzmark{4}wr(x) \\
                      & rel(m) \\
                      & rd(y)
\end{tabular}}
\label{fig:br-rd-wr-conflict:wbr-no-edge}}
\hfill\vrule\hfill
\subfloat[Predictable trace of (b) showing that (b) has a predictable race]{
\centering
\sf
\makebox[.3\linewidth]{
\begin{tabular}{@{}ll@{}}
\textnormal{Thread 1} & \textnormal{Thread 2} \\\hline
                      & acq(m) \\
                      & wr(x) \\
                      & rel(m) \\
acq(m)                & \\
wr(y)                 & \\
                      & rd(y) \\
\\
\\
\\
\end{tabular}}
\label{fig:br-rd-wr-conflict:wbr-no-edge-reordered}}

\textlink{3}{4}{\DC}%
\textunderlink{3}{4}{\BR}%
\textlink{5}{6}{\DC}%
\textunderlink{5}{6}{\BR}%


\caption{Executions showing \WCP and \DC's overly strict handling of read--write dependencies.
Edges represent ordering, labeled using the weakest applicable relation(s) (and omitting ordering established by \HB alone),
implying ordering by strictly stronger relations (see Figure~\ref{fig:lattices:relations} for comparison of relations).%
\label{fig:br-rd-wr-conflict}}

\end{figure}
\begin{figure}
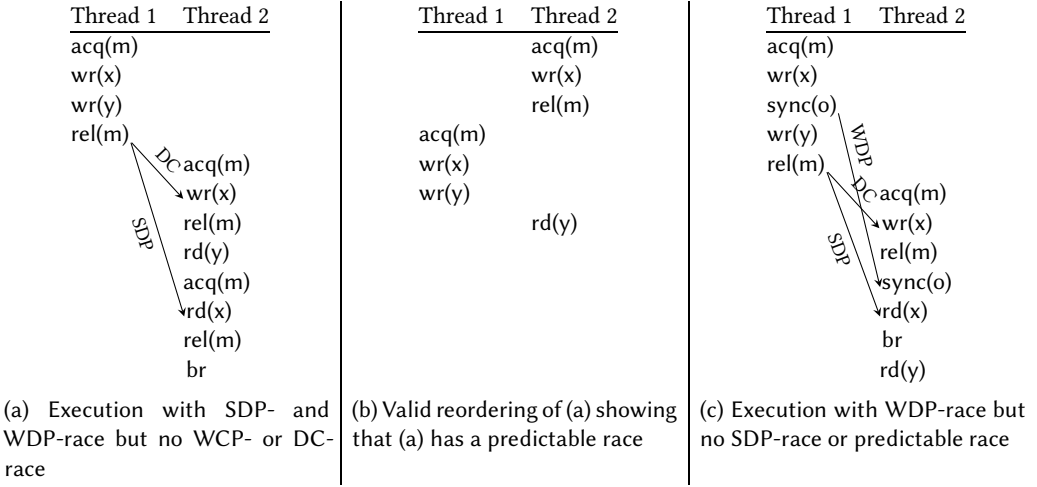


\small
\centering

\subfloat[Execution with \BR- and \WBR-race but no \WCP- or \DC-race]{
\centering
\sf
\makebox[.45\linewidth]{
\begin{tabular}{@{}ll@{}}
\textnormal{Thread 1} & \textnormal{Thread 2}\\\hline
\Acquire{m} \\
\Write{x} \\
\Write{y} \\
\Release{m}\tikzmark{3}\\
                      & acq(m) \\
                      & \tikzmark{4}wr(x) \\
                      & rel(m) \\
                      & rd(y) \\
                      & acq(m) \\
                      & \tikzmark{5}rd(x) \\
                      & rel(m) \\
                      & \tikzmark{2}\BrUniv \\
\end{tabular}}
\label{fig:wr-wr-conflict:original}}
\hfill
\subfloat[Valid reordering of (a) showing that (a) has a predictable race]{
\centering
\sf
\makebox[.45\linewidth]{
\begin{tabular}{@{}ll@{}}
\textnormal{Thread 1} & \textnormal{Thread 2}\\\hline
                      & acq(m) \\
                      &	wr(x) \\
                      &	rel(m) \\
\Acquire{m} \\
\Write{x} \\
\Write{y} \\
                      &	rd(y) \\
\\ \\ \\ \\ \\
\end{tabular}}
\label{fig:wr-wr-conflict:reordered}}
\hfill
\subfloat[Execution with \WBR-race but no \BR-race or predictable race]{
\centering
\sf
\makebox[.45\linewidth]{
\begin{tabular}{@{}ll@{}}
\textnormal{Thread 1} & \textnormal{Thread 2}\\\hline
\Acquire{m} \\
\Write{x} \\
\Sync{o}\tikzmark{6} \\
\Write{y} \\
\Release{m}\tikzmark{7}\\
                      & acq(m) \\
                      & \tikzmark{8}wr(x) \\
                      & rel(m) \\
                      & \tikzmark{9}\Sync{o} \\
                      & \tikzmark{10}rd(x) \\
                      & \tikzmark{11}\BrUniv \\
                      & rd(y) \\
\\ \\ \\ \\ \\ \\ \\ \\ \\ \\
\end{tabular}}
\label{fig:wr-wr-conflict:extra}}
\hfill
\subfloat[Execution with \WBR-race but no \BR-race or predictable race]{
\centering
\sf
\makebox[.45\linewidth]{
\begin{tabular}{@{}ll@{}}
\textnormal{Thread 1} & \textnormal{Thread 2}\\\hline
\Acquire{k} \\
\Acquire{p} \\
\Acquire{m} \\
\Write{x} \\
\Release{m} \\
\Release{p}\tikzmark{20} \\
\Acquire{n} \\
\Release{n} \\
\Write{y} \\
\Release{k}\tikzmark{22} \\
                      & \Acquire{p} \\
                      & \tikzmark{21}\Write{x} \\
                      & \Release{p} \\
                      & \Acquire{k} \\
                      & \tikzmark{23}\Release{k} \\
                      & \Acquire{n} \\
                      & \Acquire{m} \\
                      & \Read{x} \\
                      & \tikzmark{30}\BrUniv \\
                      & \Release{m} \\
                      & \Read{y} \\
                      & \Release{n} \\
\end{tabular}}
\label{fig:wr-wr-conflict:counterexample}}
\textlink{3}{4}{\DC}%
\textunderlink{3}{5}{\BR}%
\textlink{7}{8}{\DC}%
\textlink{6}{9}{\WBR\hspace*{5em}}%
\textunderlink{7}{10}{\BR}%

\textlink{20}{21}{\WCP}
\textunderlink{22}{23}{\BR\hspace*{4em}}
\textunderlink{20}{30}{\hspace*{12em}\WBR}

\caption{Executions showing \WCP and \DC's overly strict handling of write--write dependencies.
\Sync{o} is an abbreviation for
the sequence \Acquire{o}\code{;} \Read{oVar}\code{;} \BrUniv{}\code{;} \Write{oVar}\code{;} \Release{o}.
\label{fig:wr-wr-conflict}}

\end{figure}

Consider the executions in Figures~\ref{fig:br-rd-wr-conflict:wbr-false-race} and \ref{fig:br-rd-wr-conflict:wbr-no-edge},
in which \WCP and \DC order \Read{x} to \Write{x}.
\WCP and \DC's read--write ordering assumes that no predictable trace exists where a pair of conflicting accesses are reordered.
This rationale works for Figure~\ref{fig:br-rd-wr-conflict:wbr-false-race},
in which no predictable race exists.
However, conflicting accesses may be reordered as long as the \LW rule of predictable traces is satisfied.
Figure~\ref{fig:br-rd-wr-conflict:wbr-no-edge-reordered} is a predictable trace of
Figure~\ref{fig:br-rd-wr-conflict:wbr-no-edge} that reorders the critical sections and exposes a race on \code{y}.


Similarly for write--write conflicts, consider Figure~\ref{fig:wr-wr-conflict:original}, in which \WCP and \DC order the two \Write{x} events,
leading to no \WCP- or \DC-race on accesses to \code{y}.
However, the \Write{x} events can be reordered,
as the predictable trace in Figure~\ref{fig:wr-wr-conflict:reordered} shows,
exposing a race.
(The reader can ignore Figure~\ref{fig:wr-wr-conflict:extra} and~\ref{fig:wr-wr-conflict:counterexample} until Section~\ref{sec:BR-WBR-relations}.)

It is difficult to model read--write and write--write dependencies more precisely using a partial order.
In the case of a read--write dependency, the accesses can be reordered \emph{as long as the read cannot impact a branch's outcome
in the predictable trace} (\ie, the read is not a causal event in the predictable trace, or is not part of the predictable trace).
For a write--write dependency, the accesses can be reordered
\emph{as long as they do not change a causal read's last writer in the predictable trace.}
Incorporating either kind of constraint into a partial order is challenging
but also desirable because partial orders can be computed efficiently.

\paragraph{Control dependence:}

\begin{figure}
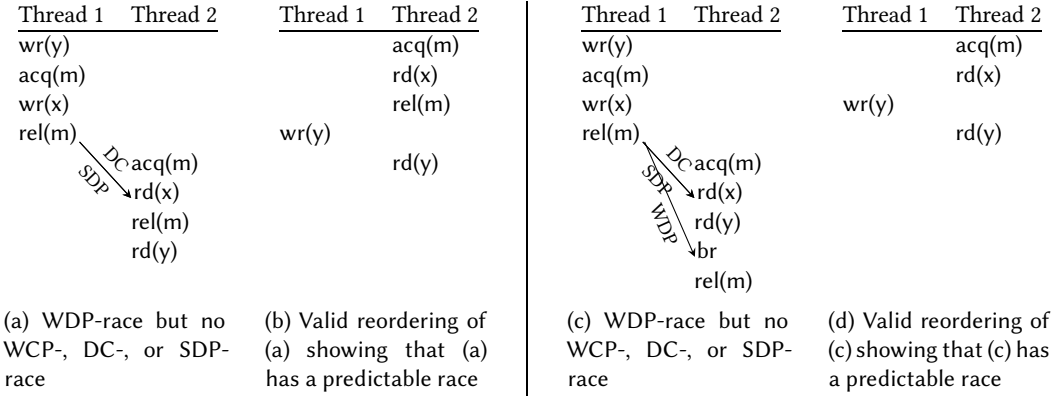

\captionsetup{farskip=0pt} 
\small
\centering

\subfloat[\WBR-race but no \WCP-, \DC-, or \BR-race]{
\centering
\sf
\makebox[.20\linewidth]{
\begin{tabular}{@{}ll@{}}
\textnormal{Thread 1} & \textnormal{Thread 2} \\\hline 
wr(y)                 & \\
acq(m)                & \\
wr(x)                 & \\
rel(m)\tikzmark{1}    \\
                      & acq(m) \\
                      & \tikzmark{2}rd(x) \\
                      & rel(m) \\
                      & rd(y) \\
\\
\end{tabular}}
\label{fig:simple-BR:original}}
\hfill
\subfloat[Valid reordering of (a) showing that (a) has a predictable race]{
\centering
\sf
\makebox[.20\linewidth]{
\begin{tabular}{@{}ll@{}}
\textnormal{Thread 1} & \textnormal{Thread 2} \\\hline
                      & acq(m) \\
                      & rd(x) \\
                      & rel(m) \\
wr(y)                 & \\
                      & rd(y) \\
				      & \\
				      & \\
\\ \\
\end{tabular}}
\label{fig:simple-BR:reordering}}
\hfill\vrule\hfill
\subfloat[\WBR-race but no \WCP-, \DC-, or \BR-race]{
\centering
\sf
\makebox[.20\linewidth]{
\begin{tabular}{@{}ll@{}}
\textnormal{Thread 1} & \textnormal{Thread 2} \\\hline 
wr(y)                 & \\
acq(m)                & \\
wr(x)                 & \\
rel(m)\tikzmark{5}\tikzmark{3}\\
                      & acq(m) \\
                      & \tikzmark{6}rd(x) \\
                      & \Read{y} \\
                      & \tikzmark{4}\BrUniv \\
                      & rel(m)
\end{tabular}}
\label{fig:simple-BR:branch:original}}
\hfill
\subfloat[Valid reordering of (c) showing that (c) has a predictable race]{
\centering
\sf
\makebox[.20\linewidth]{
\begin{tabular}{@{}ll@{}}
\textnormal{Thread 1} & \textnormal{Thread 2} \\\hline
                      & acq(m) \\
                      & rd(x) \\
wr(y)                 & \\
                      & rd(y) \\
		              & \\
				      & \\
				      & \\
                      & \\
\\
\end{tabular}}
\label{fig:simple-BR:branch:reordering}}

\textlink{1}{2}{\DC}%
\textunderlink{1}{2}{\BR}%
\textlink{5}{6}{\DC}%
\textunderlink{5}{6}{\BR}%
\textunderlink{3}{4}{\hspace*{2em}\WBR}%

\caption{Executions showing \WCP and \DC's overly strict handling of control dependencies.
\label{fig:simple-BR}}

\end{figure}

\WCP and \DC order true (write--read) dependencies
even when the read may not affect a branch outcome that affects whether a race happens.

Figure~\ref{fig:simple-BR:original} shows an execution with a predictable race,
as the predictable trace in Figure~\ref{fig:simple-BR:reordering} demonstrates.
Note that in Figure~\ref{fig:simple-BR:reordering}, \Read{x} has a different last writer than in Figure~\ref{fig:simple-BR:original},
but the lack of a following \BrUniv event means that \Read{y} is still guaranteed to happen (\ie, \Read{x} is not a causal event).
A variant of this example is in
Figure~\ref{fig:simple-BR:branch:original}, which has a branch event dependent on a read outcome,
but the branch can be absent from a predictable trace demonstrating a predictable race (Figure~\ref{fig:simple-BR:branch:reordering}).


\begin{figure}
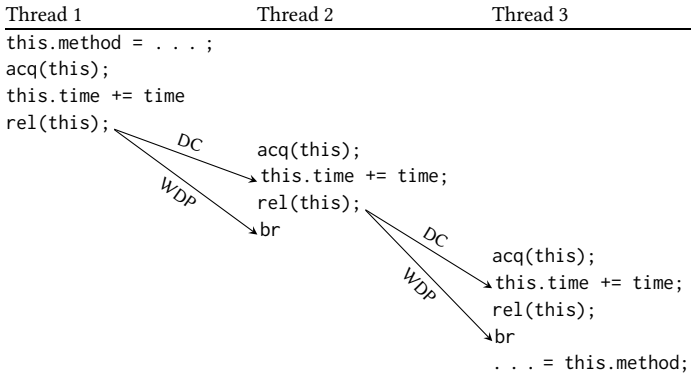






\footnotesize
\centering

\centering
\sf
{\setlength{\tabcolsep}{1em}
\begin{tabular}{@{}lll@{}}
  \textnormal{Thread 1}                         & \textnormal{Thread 2}              & \textnormal{Thread 3} \\\hline
  \texttt{this.method = \ldots;}                & \\
  \texttt{acq(this);}                           & \\
  \texttt{this.time += time} & \\
  \texttt{rel(this);}\tikzmark{1} & \\
                                                & \texttt{acq(this);}  \\
                                                & \tikzmark{2}\texttt{this.time += time;} \\
                                                & \texttt{rel(this);}\tikzmark{4} \\
                                                & \tikzmark{3}\texttt{br} \\
                                                &                                      & \texttt{acq(this);}  \\
                                                &                                      & \tikzmark{5}\texttt{this.time += time;} \\
                                                &                                      & \texttt{rel(this);} \\
                                                &                                      & \tikzmark{6}\texttt{br} \\
                                                &                                      & \texttt{\ldots = this.method;}  \\
\end{tabular}
}

\textlink{1}{2}{\DC}
\textlink{4}{5}{\DC}
\textunderlink{1}{3}{\WBR}
\textunderlink{4}{6}{\WBR}



\caption{A predictable race in the Java program \bench{pmd} that was detected by \WBR, but not \WCP,
  \WDC, or \BR. Note the transitive edges formed by \DC, which \WBR avoids as
  the branches are outside the critical sections. The code has been simplified
  and abbreviated.}
\label{fig:source-pmd}

\end{figure}

\WCP and \DC miss the predictable races in Figures~\ref{fig:simple-BR:original} and \ref{fig:simple-BR:branch:original}
by conservatively assuming that any event after a \Read{x} may be control dependent on the read value.
Similarly, \WCP and \DC miss the predictable race in Figure~\ref{fig:source-pmd},
which our implementation found in the Java program \bench{pmd} (Section~\ref{sec:evaluation}).
Essentially, \WCP and \DC conservatively assume that a dependent branch immediately follows each read.
This limitation is unsurprising considering the challenge of modeling control dependencies using a partial order.
In particular, it is difficult for a partial order to model
the fact that \emph{a read must have the same last writer only if the read may affect a branch in the predictable trace}.


\medskip
\noindent
This work develops partial orders that are weaker than \WCP and \DC and thus predict more races.
At the same time, these new partial orders retain key properties of the existing relations:
\WCP's soundness and \DC's amenability to a vindication algorithm that ensures soundness, respectively.


\section{Overview}

The previous section introduced prior work's
\wcpFull (\WCP)~\cite{wcp} and \dcFull (\DC)~\cite{vindicator},
and explained their limitations that lead to missing predictable races.
The next two sections introduce new relations and analyses that overcome these limitations.
Section~\ref{sec:BR-WBR-relations} introduces the \emph{\brFull (\BR)} and \emph{\wbrFull (\WBR)} relations,
which are weaker than \WCP and \DC, respectively.
Section~\ref{sec:wbr-analysis} presents online dynamic analyses for computing \BR and \WBR and detecting \BR- and \WBR-races.

\section{New Dependence-Aware Predictive Relations}
\label{sec:BR-WBR-relations}

This section introduces new partial orders called
\emph{\brFull (\BR)} and
\emph{\wbrFull (\WBR)} that overcome the limitations of
prior work's predictive relations~\cite{vindicator,wcp} (Section~\ref{sec:background:limitations})
by incorporating more precise notions of data and control dependence.

\subsection{The \BR and \WBR Partial Orders}

\BR is weaker than \WCP\footnote{It may seem confusing that \underline{S}DP is \emph{weaker} than \underline{W}CP.
SDP is so named because it is stronger than \emph{WDP}, while WCP is so named because it is weaker than prior work's \emph{causally-precedes (CP)}~\cite{causally-precedes}.}
by not ordering write--write conflicts, based on
the insight that writes can be unordered unless they can affect the outcome of a read,
but write--read and read--write ordering already handles that ordering soundly.
\WBR only orders the last writer of a read to a branch that depends on that read,
which is the only reordering constraint that does not lead to missing predictable races.

\begin{table}
\newcommand\yes{Yes} 
\newcommand\no{No}
\newcommand{\ltSO}{\ensuremath{\prec_\textsc{\tiny{SO}}}\xspace}
\small
\begin{tabular}{@{}l|lll|ll@{}}
 & \mc{3}{c|}{Prior work}               & \mc{2}{c}{This paper} \\
Property
 & \ltHB              & \ltWCP & \ltWDC & \ltBR & \ltWBR \\\hline
Same-lock critical section ordering
 & All                & Confl.    & Confl.     & Confl. & Last \code{wr}--\code{rd} only \\
Orders \code{rel} to\dots
 & \code{acq}         & \code{wr}/\code{rd} & \code{wr}/\code{rd} & \code{wr}/\code{rd} or next \code{rd}* & Next \code{br} \\
 \hline
 Includes \ltPO? 
 & \yes               & \no    & \yes   & \no & \yes \\
Left-and-right composes with \ltHB?
 & \yes               & \yes   & \no    & \yes & \no \\
 \hline
$\Acquire{m}\prec_*\Release{m}$ implies $\Release{m}\prec_*\Release{m}$?
  & \yes               & \yes   & \yes   & \yes** & \yes \\
  Transitive?
  & \yes               & \yes   & \yes   & \yes & \yes \\
 \end{tabular}

\caption{Definitions of five strict partial orders over events in an execution trace.
Each order is the minimum relation satisfying the listed properties.
This table adds columns \ltBR and \ltWBR to Table~\ref{tab:partial-order-definitions-prior} (page~\pageref{tab:partial-order-definitions-prior}).\\
*\,As the text explains, \BR adds release--access ordering for write--read and read--write conflicts,
and adds ordering from the release to the next read for write--write conflicts.\\
**\,As the text explains, \BR adds release--release ordering for two
critical sections on the same lock if the critical sections
are \HB ordered by a write--write conflict but do not both contain the conflicting writes.}
\label{tab:partial-order-definitions}

\end{table}

Table~\ref{tab:partial-order-definitions} defines \BR and \WBR.
The table shows that \BR is like \WCP and \WBR is like \DC
except in how they order conflicting critical sections (first two table rows).
Furthermore, \BR adds release--release ordering
under certain conditions because of write--write conflicts (next-to-last row).


\paragraph{The \BR relation.}

\BR only orders conflicting critical sections
when one critical section contains a read. Like \WCP, \BR orders the first critical section's \Release{m}
to the second critical section's access.
The intuition behind this property is that write--write conflicts generally do not impose any limitations on what traces can be predicted.
Figure~\ref{fig:wr-wr-conflict:original} shows an example in which two conflicting writes can be safely reordered in a predictable trace.

However, ignoring write--write conflicts altogether would be unsound,
as Figure~\ref{fig:wr-wr-conflict:extra} shows: the execution has no predictable race.
To ensure soundness, \BR handles write--write conflicts by ordering the first critical section to the second thread's next \emph{read} to the same variable. Figure~\ref{fig:wr-wr-conflict:counterexample} shows a similar case where the execution has no predictable race. To handle this case as well, \BR also orders critical sections that are ordered by, but do not contain, write--write conflicts.

More formally, \BR is the minimal relation that satisfies the following rules:
\begin{itemize}

\item {[Unique to \BR{}]}
\BROrdered{\event{r}{_1}}{\event{e}{_2}} if
\event{r}{_1} and \event{r}{_2} are release events on the same lock,
\TROrdered{\event{r}{_1}}{\event{r}{_2}},
$\event{e}{_1} \in \CS{\event{r}{_1}}$, $\event{e}{_2} \in \CS{\event{r}{_2}}$,
\conflicts{\event{e}{_1}}{\event{e}{_2}}, and
\event{e}{_1} or \event{e}{_2} is a read.

\item {[Unique to \BR{}]}
\BROrdered{\event{r}{_1}}{\event{e}{_3}} if
 \event{r}{_1} and \event{r}{_2} are release events on the same lock,
 \TROrdered{\event{r}{_1}}{\event{r}{_2}},
 \event{e}{_1} and \event{e}{_2} are write events and \event{e}{_3} is a read event,
 \conflicts{\event{e}{_1}}{\event{e}{_2}},
 \conflicts{\event{e}{_1}}{\event{e}{_3}},
 \POOrdered{\event{e}{_2}}{\event{e}{_3}},
 \inCS{\event{e}{_1}}{\event{r}{_1}}, and
 \inCS{\event{e}{_2}}{\event{r}{_2}}.

\item {[Borrowed from \WCP{}]}
\BROrdered{\event{r}{_1}}{\event{r}{_2}} if
\event{r}{_1} and \event{r}{_2} are release events on the same lock and
\BROrdered{\getAcquire{\event{r}{_1}}}{\event{r}{_2}}.

\item {[Unique to \BR{}]} $\event{r}{_1}\ltBR\event{r}{_2}$ if
\conflicts{\event{e}{_1}}{\event{e}{_2}} are write events,
\event{r}{_1} and \event{r}{_2} are release events on the same lock,
\event{s}{_1} and \event{s}{_2} are release events on the same lock,
$\getAcquire{\event{r}{_1}} \ltHB
\event{s}{_1} \ltTR \event{e}{_2} \ltHB \event{r}{_2}$,
$\event{e}{_1} \in\CS{\event{s}{_1}}$, $\event{e}{_2} \in\CS{\event{s}{_2}}$,
and $\event{e}{_1} \not\in\CS{\event{r}{_1}} \lor \event{e}{_2} \not\in\CS{\event{r}{_2}}$.
\notes{In other words, \BR orders two release events if there is a
write--write conflict HB-ordering parts of the critical sections, but not if both writes are within these critical sections.
\mike{Redundant with above informal explanation.}}

\item {[Borrowed from \WCP{}]}
\BROrdered{\event{e}{_1}}{\event{e}{_2}}
if \event{e}{_3} is an event and $\BROrdered{\event{e}{_1}}{\HBOrdered{\event{e}{_3}}{\event{e}{_2}}}
\lor \HBOrdered{\event{e}{_1}}{\BROrdered{\event{e}{_3}}{\event{e}{_2}}}$.

\label{br-rel-rel-rules}
\end{itemize}

\notes{
\mike{Assuming the execution is \BR-race free up to this point, either
(i) \event{e}{_1} and \event{e}{_2} are in critical sections on the same lock ending in
\event{s}{_1} and \event{s}{_2}
(where $\event{s}{_1} \ne \event{r}{_1} \lor \event{s}{_2} \ne \event{r}{_2}$), or
(2) \BROrdered{\event{e}{_1}}{\event{e}{_2}} (which has no effect).
\mike{What if
$\getAcquire{\event{r}{_1}} \not\ltHB \event{e}{_1} \ltHB \event{e}{_2} \ltHB \event{r}{_2}$
but
$\getAcquire{\event{r}{_1}} \ltHB \event{s}{_1} \ltHB \event{e}{_2} \ltHB
\event{r}{_2}$?
\kaan{We should adjust the definition to cover this, no changes to the algorithm
are required.}
\mike{Done}}
\mike{Another, unrelated question: In case (2), can \event{r}{_1} and \event{r}{_2} and
\event{s}{_1} and \event{s}{_2} all be on the same lock?}}
\kaan{What about the following example: Consider release events \event{l},
\event{l}{'}, \event{m}, \event{m}{'}; and write events \event{w}, \event{w}{'}.
$\getAcquire{\event{m}} \ltPO \event{w} \ltPO \event{l} \ltPO \event{m}$,
$\getAcquire{\event{l}{'}} \ltPO \event{w}{'} \ltPO \getAcquire{\event{m}{'}}
\ltPO \event{m}{'} \ltPO \event{l}{'}$, and $\event{l} \ltHB
\getAcquire{\event{l}{'}}$. \event{l}, \event{l}{'} and \event{m}, \event{m}{'}
release the same lock, and \conflicts{\event{w}}{\event{w}{'}}. We don't draw a
rule-b edge here with the current definition, but this might be a case we can't prove.
\mike{Assuming critical sections are properly nested, then
$\getAcquire{\event{m}{}} \ltPO \getAcquire{\event{l}{}} \ltPO \event{w}$
or
$\event{w} \ltPO \getAcquire{\event{l}{}} \ltPO \event{l}{}$.
In the first case, I think the current definition already provides
$\event{m}{} \ltBR \event{m'}{}$, right? Or did you have something else in mind?
In the second case, I think there's a \BR-race?
\medskip\\
However, in any case, I think the execution has a predictable \textbf{deadlock}.}}
}


\notes{Like WCP and other partial orders, \BR orders $\event{r}{_1}\ltBR\event{r}{_2}$ if
$\getAcquire{\event{r}{_1}}\ltBR\event{r}{_2}$. That is,
\BR orders two release events if the corresponding acquire of the
earlier release is \BR ordered to the later release.}

In essence, \BR addresses a limitation of \WCP via more precise handling of data dependencies.
\BR certainly does not address all imprecise data dependencies (\eg, read--write dependencies),
and it does not address control dependence.
\BR is the weakest known sound partial order.

\paragraph{The \WBR relation}

A separate but worthwhile goal is to develop a partial order that is weaker than \DC
but produces few false positives so that it is practical to vindicate potential races.
\WBR achieves this goal and is in fact complete, detecting all predictable races.
\WBR orders the last writer of each read to the earliest branch that depends on that read (and orders no other conflicting critical sections).
The intuition behind this behavior is that the only constraint that is universally true
for all predictable traces is that the last writer of a read must not occur
after the read if there is a branch that depends on the read.

More formally,
if \event{r}{_1} and \event{r}{_2} are releases on the same lock,
$\event{e}{_1} \in \CS{\event{r}{_1}}$,
$\event{e}{_2} \in \CS{\event{r}{_2}}$,
$\event{e}{_1} = \lastwr{\event{e}{_2}}{tr}$,
\POOrdered{\event{e}{_2}}{\event{b}{}},
and \BrDepsOn{\event{b}{}}{\event{e}{_2}}, then \WBROrdered{\event{r}{_1}}{\event{b}{}}.

%
%
%
%

\later{
\mike{Something to consider regarding \WBR being incomplete if it composes with write--read edges:
we could keep the composition and instead weaken \WBR by
\begin{itemize}
  \item changing rule (a) so it adds wr--br edges instead of rel--br edges and
  \item eliminating rule (b).
\end{itemize}
The result might be complete and might actually be less unsound in practice than the complete version of \WBR we're currently testing.}
\jake{Do you have an example in where change rule (a) would result in a true race
that \WBR currently misses?
\mike{Yeah, (b) in example/WBR-incomplete.tex.}}
}

Unlike \DC, \WBR
integrates control dependence by
ordering the write's critical section to the first branch dependent on the read.
\WBR does not model \emph{local} data dependencies, where a read affects the value written by a write in the same thread.
As a result, \WBR may find false races, but Section~\ref{sec:vindication-summary} describes a method for ruling out such false races.
These properties make \WBR complete (as we show).
\WBR is the strongest known complete partial order.

\paragraph{\BR- and \WBR-races.}
\label{sec:br-wbr-races}

Unlike \WCP and \DC, \BR and \WBR do not inherently order all conflicting accesses that hold a common lock.
Thus the following definition of \BR- and \WBR-races explicitly excludes conflicting accesses holding a common lock.

  A trace has a \emph{\BR-race} (or \emph{\WBR-race}) if
  it has two conflicting events unordered by \BR (\WBR) that hold no lock in common.
  That is, \tr has an \BR-race (\WBR-race) on events \event{e}{} and \event{e}{'} if
  \TROrdered{\event{e}{}}{\event{e}{'}},
  \conflicts{\event{e}{}}{\event{e}{'}},
  \nBROrdered{\event{e}{}}{\event{e}{'}} (\nWBROrdered{\event{e}{}}{\event{e}{'}}), and
  $\commonLocks{\event{e}{}}{\event{e}{'}} = \emptyset$.


\notes{
Recall that \lockset{\event{e}{}} is the set of locks held at \event{e}{} by the thread executing \event{e}{}
(Section~\ref{subsec:execution-trace}).
}

\paragraph{Examples.}

To illustrate \BR and \WBR, we refer back to examples from pages~\pageref{fig:br-rd-wr-conflict}--\pageref{fig:source-pmd}.

The executions in Figures~\ref{fig:br-rd-wr-conflict:wbr-false-race} and \ref{fig:br-rd-wr-conflict:wbr-no-edge}
have no \BR-races: \BR orders the read--write conflicts.
In contrast, these executions have \WBR-races: there is no cross-thread \WBR ordering because the executions have no lock-protected write--read conflicts.


Figure~\ref{fig:wr-wr-conflict:original} has \BR- and \WBR-races. \BR and \WBR do not order
the write--write conflict on \code{x}. Nor does \WBR order events on
the write--read conflict on \code{x},
since Thread~1's \Write{x} is not the last writer of \Read{x}.

Figure~\ref{fig:wr-wr-conflict:extra},
which has a \WBR-race but no \BR-race or predictable race,
shows the need for \BR's release--read ordering for write--write conflicts.
Figure~\ref{fig:wr-wr-conflict:counterexample} also has a \WBR-race but no \BR-race or predictable race, showing the need for \BR's release--release
ordering for write--write conflicts.

The executions in Figures~\ref{fig:simple-BR:original} and \ref{fig:simple-BR:branch:original} have no \BR-race
since \BR does not take branches into account.
On the other hand, both executions have \WBR-races:
Figure~\ref{fig:simple-BR:original} has no branch dependent on the read,
and Figure~\ref{fig:simple-BR:branch:original} has a branch, but it occurs after \Read{y}.

\WBR analysis discovers the predictable race in Figure~\ref{fig:source-pmd}.
In this case, the fact that there is no branch within the critical section allows \WBR to avoid creating an unnecessary transitive edge that otherwise would hide the race.

\begin{table}
\vspace*{-1em}
\captionsetup{type=figure} 
\small
\hfill
\subfloat[Weaker above stronger relations]{
\makebox[5cm][c]{
\begin{tikzpicture}
    \node (HB) at (0,0) {\ltHB};
    \node (WCP) at (0,0.8) {$(\ltWCP \cup \ltPO)$};
    \node (DC) at (-1, 1.6) {\ltWDC};
    \node (BR) at (1, 1.6) {$(\ltBR \cup \ltPO)$};
    \node (WBR) at (0, 2.4) {\ltWBR};

    \draw [thick] (HB) -- (WCP);
    \draw [thick] (WCP) -- (DC);
    \draw [thick] (WCP) -- (BR);
    \draw [thick] (DC) -- (WBR);
    \draw [thick] (BR) -- (WBR);
\end{tikzpicture}
\label{fig:lattices:relations}
}
}
\hfill
\subfloat[Supersets above subsets]{
\begin{tikzpicture}
    \node (HB) at (0,0) {\HB-races};
    \node (WCP) at (0, .8) {\WCP-races};
    \node (DC) at (-1, 1.6) {\DC-races};
    \node (BR) at (1, 1.6) {\BR-races};
    \node (WBR) at (0, 2.4) {\WBR-races};

    \draw [thick] (HB) -- (WCP);
    \draw [thick] (WCP) -- (DC);
    \draw [thick] (WCP) -- (BR);
    \draw [thick] (DC) -- (WBR);
    \draw [thick] (BR) -- (WBR);
\end{tikzpicture}
\label{fig:lattices:races}
}
\hspace*{\fill}

\caption{Lattices showing the relationships among the relations and corresponding kinds of races.
Only \WCP and \BR do not include \PO and thus do not in general order events within the same thread,
but this property is irrelevant for comparing relation strength because same-thread accesses cannot race in any case,
so the relation lattice uses $(\ltWCP \cup \ltPO)$ and $(\ltBR \cup \ltPO)$ to make the relations more directly comparable.
(\WCP and \BR analyses in fact detect races by comparing access events using $(\ltWCP \cup \ltPO)$ and $(\ltBR \cup \ltPO)$, respectively.)}
\label{fig:lattices}
\bigskip
\captionsetup{type=table} 
\small
\begin{tabular}{@{}l|ll@{}}
     & Sound? & Complete? \\\hline
\HB  & Yes~\cite{happens-before} & No (Fig.~\ref{fig:br-rd-wr-conflict:wbr-no-edge}, \ref{fig:wr-wr-conflict:original}, \ref{fig:simple-BR:original}, \ref{fig:simple-BR:branch:original}) \\
\WCP & Yes~\cite{wcp} & No (Fig.~\ref{fig:br-rd-wr-conflict:wbr-no-edge}, \ref{fig:wr-wr-conflict:original}, \ref{fig:simple-BR:original}, \ref{fig:simple-BR:branch:original}) \\
\DC  & No~\cite{vindicator} & No (Fig.~\ref{fig:br-rd-wr-conflict:wbr-no-edge}, \ref{fig:wr-wr-conflict:original}, \ref{fig:simple-BR:original}, \ref{fig:simple-BR:branch:original}) \\\hline
\BR  & Yes (Section~\ref{sec:br-soundness-proof}) & No (Fig.~\ref{fig:br-rd-wr-conflict:wbr-no-edge}, \ref{fig:simple-BR:original}, \ref{fig:simple-BR:branch:original}) \\
\WBR & No (Fig.~\ref{fig:br-rd-wr-conflict:wbr-false-race} and \ref{fig:wr-wr-conflict:extra}) & Yes (Section~\ref{sec:wbr-completeness}) \\
\end{tabular}
\caption{Soundness and completeness of each relation.}
\label{tab:soundness-completeness}

\end{table}

\subsection{Soundness and Completeness}

Figure~\ref{fig:lattices} and Table~\ref{tab:soundness-completeness} illustrate the relationships among the different relations
and corresponding race types.
\BR never misses a race that \WCP finds, and \WBR never misses a race that \DC or \BR finds.
\BR is sound but incomplete (never reports a false race but may miss predictable races),
while \WBR is unsound but complete (may report false races but never misses a predictable race).

\later{
\mike{This reminds me that I spoke with William Mansky (UIC) at FCRC,
and he was interested in the possibility of machine-verifying correctness properties of our relations/analyses.}
}

Here we prove that \BR is sound and \WBR is complete.
The proofs are manual and have not been verified by a theorem prover.

\label{sec:br-soundness-proof}

\begin{theorem*}[\BR soundness]
  If an execution trace has a \BR-race,
  then it has a predictable race or a predictable deadlock.
\end{theorem*}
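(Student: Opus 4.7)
The plan is to adapt the proof template used for \WCP soundness~\cite{wcp} to the weaker \BR relation, with new reasoning to account for \BR's handling of write--write conflicts. Suppose \tr exhibits a \BR-race on \event{e}{_1} and \event{e}{_2}, so \TROrdered{\event{e}{_1}}{\event{e}{_2}}, \conflicts{\event{e}{_1}}{\event{e}{_2}}, \nBROrdered{\event{e}{_1}}{\event{e}{_2}}, and $\commonLocks{\event{e}{_1}}{\event{e}{_2}} = \emptyset$. The goal is to construct a predictable trace \trPrime of \tr that ends with \event{e}{_1} immediately followed by \event{e}{_2}, or---if the construction is obstructed on the \LS rule---to exhibit a predictable deadlock instead.

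The construction I would use is the downward closure in \tr of $\{\event{e}{_1}, \event{e}{_2}\}$ under $\ltBR \cup \ltPO$, laid out in an order refining both the \tr-order and the $\ltBR \cup \ltPO$ ordering, with \event{e}{_1} and \event{e}{_2} placed as the last two events (consistent since they are unordered by $\ltBR \cup \ltPO$). The \PO rule is immediate from the construction. For the \LW rule, I would show that for every causal read \event{r}{} included in \trPrime, $\lastwr{\event{r}{}}{\tr}$ is also included and no other conflicting write is placed between them in \trPrime. The first point follows because \BR's write--read rule orders the release enclosing $\lastwr{\event{r}{}}{\tr}$ before \event{r}{}. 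The second point is where \BR's new write--write rule matters: any other conflicting write \event{w}{'} that appears in \trPrime must itself lie in the downward closure, and by the release-to-next-read property, \event{w}{'} is \BR-ordered either before $\lastwr{\event{r}{}}{\tr}$'s release (with \event{r}{} serving as the ``next read'') or after \event{r}{}, preventing it from landing between them.

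The main obstacle is the \LS rule. Reordering critical sections may leave some thread holding a lock whose matching \Release{m} was excluded from \trPrime, while another included acquire on the same lock needs to proceed. Following the \WCP proof strategy, I would argue that this obstruction is never spurious: if \trPrime cannot be completed as a well-formed trace, then we can extract from \trPrime a cyclic lock-wait configuration that constitutes a predictable deadlock, discharging the theorem's alternative conclusion. I expect the most delicate piece to be verifying that \BR's strictly weaker critical-section ordering---the release-to-next-read rule replacing \WCP's release-to-access for write--write conflicts---does not admit spurious \LS failures: every residual ``stuck'' configuration must trace back to a true cyclic wait rather than an artifact of the omitted write--write ordering.
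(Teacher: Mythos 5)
Your construction breaks down at the \LW rule, and the step you wave through is in fact the entire difficulty of the theorem. First, \BR (like \WCP) orders conflicting accesses only when they occur inside critical sections on a common lock. So if a causal read \event{r}{} in your closure has a last writer \event{w}{} = \lastwr{\event{r}{}}{\tr} that shares no lock with \event{r}{}, there is no \BR edge from \event{w}{}'s side to \event{r}{} at all: the downward closure under $\ltBR \cup \ltPO$ may omit \event{w}{} entirely, or your order refinement may slide another conflicting write between them, violating \LW. The usual repair is to argue about the \emph{first} race in \ltTR order (an earlier unordered, unprotected conflicting pair would itself be an earlier race, handled by minimality or induction), but your proposal never introduces that assumption, so the claim ``\BR's write--read rule orders the release enclosing \lastwr{\event{r}{}}{\tr} before \event{r}{}'' is simply false in general. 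Second, even for lock-protected conflicts, your claim that the release-to-next-read rule keeps every other conflicting write \event{w}{'} out of the window between \lastwr{\event{r}{}}{\tr} and \event{r}{} does not follow: that rule only adds an edge from the earlier critical section's release to a read that \PO-follows the \emph{later} write in the \emph{later writer's own} thread. If \event{w}{'} is executed by a third thread, or its next same-variable read is some read other than \event{r}{}, nothing forces \event{w}{'} outside that window in your refined order. Asserting that it cannot land there is not a proof step; it is the theorem.

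The \LS/deadlock part has the same problem: you defer to ``the \WCP proof strategy,'' but that argument is interwoven with \WCP-specific structure (ordering of all conflicting critical sections, left-and-right composition with \HB) and with first-race minimality, and whether it survives the weakening is exactly what must be established. The paper avoids re-doing any direct construction. It interpolates between \WCP and \BR with a family of relations that treat only the first $i$ conflicting write pairs the \BR way and everything else the \WCP way, and inducts on $i$: choosing a minimal (first) race, it shows in the inductive step that if the $i$th write--write pair is the reason ordering is lost, then every same-variable read must precede the earlier write (otherwise read--write/write--read edges composed with \HB restore the ordering, a contradiction); it then performs trace surgery---replacing the later write with a write to a fresh variable and appending fresh reads of that variable after each original read---so that soundness of the previous relation in the family (ultimately \WCP soundness, used as a black box) yields a predictable race or deadlock in the transformed trace, which maps back to the original trace. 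Your proposal is missing both the minimality device and this reduction idea, and as stated the direct closure construction fails at the \LW step above.
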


\begin{proof}

\newcommand{\BRn}[1]{\BR{}(\ensuremath{#1})}
\newcommand{\ltBRn}[1]{\ensuremath{\prec_{\textsc{\tiny{\BR}}(#1)}}}
\newcommand{\nltBRn}[1]{\ensuremath{\not\prec_{\textsc{\tiny{\BR}}(#1)}}}
\newcommand{\BRnOrdered}[3]{\Ordered{#1}{\ltBRn{#3}}{#2}}
\newcommand{\nBRnOrdered}[3]{\Ordered{#1}{\nltBRn{#3}}{#2}}

We define \ltBRn{i} to be a variant of \ltWCP and \ltBR that handles conflicting writes like \BR for the first $i$ conflicting writes in \tr,
and orders conflicting writes like \WCP otherwise (Table~\ref{tab:partial-order-definitions}).
Formally,
\ltBRn{i} is the minimal relation that has all of the properties that \ltBR has
(Table~\ref{tab:partial-order-definitions};
\eg, $\event{r}{_1} \ltBRn{i} \event{r}{_2}$
if \event{r}{_1} and \event{r}{_2} are release events on the same lock such that
$\getAcquire{\event{r}{_1}} \ltBRn{i} \event{r}{_2}$)
and also the following property:
\begin{itemize}
\item[] \BRnOrdered{\event{l}{}}{\event{w}{'}}{i} if
\event{w}{} and \event{w'} are conflicting write events,
\event{l}{} and \event{l}{'} are release events of critical sections on the same lock
such that $\event{w}{} \in \CS{\event{l}{}}$ and $\event{w}{'} \in \CS{\event{l}{'}}$,
and there are at least $i$ conflicting pairs of write events
(\ie, two conflicting write events without an intervening conflicting write event)
before\footnote{An event pair $(e,e')$ is \emph{before} an event pair $(w,w')$ if $e' <_\tr w' \lor (e' = w \land e <_\tr w)$.} the write pair $(w,w')$.
\end{itemize}
Note that $\ltBRn{0} \equiv \ltWCP$
(because \BRn{0} adds back the write--write conflict rule that \BR omits from WCP)
and $\ltBRn{\infty} \equiv \ltBR$.


The rest of the proof proceeds by induction to show that \BRn{i} is sound for all $i$,
\ie, if an execution trace has an \BRn{i}-race, then it has a predictable race or deadlock.

\paragraph{Base case:}

Since $\ltBRn{0} \equiv \ltWCP$ and \WCP is sound~\cite{wcp}, \BRn{0} is sound.

\paragraph{Inductive step:}

Let $\sigma$ be an execution trace whose first \BRn{i}-race is between events \event{e}{_1} and \event{e}{_2},
where \emph{first} means that \event{e}{_2} is as early as possible in $\sigma$, and among \BRn{i}-races whose second event is \event{e}{_2},
\event{e}{_1} is as late as possible in $\sigma$.
Proceeding with proof by contradiction,
suppose $\sigma$ has no predictable race or deadlock.

\newcommand{\SigmaOrdered}[2]{\Ordered{#1}{<_\sigma}{#2}}

Now let \tr be a trace equivalent to $\sigma$ that moves
all events between \event{e}{_1} and \event{e}{_2} that are not \HB ordered with both events,
to outside of \event{e}{_1} and \event{e}{_2},
and additionally removes all events after \event{e}{_2}.
Specifically:
\begin{itemize}
\item if $\SigmaOrdered{\event{e}{_1}}{\event{e}{}} \land \nHBOrdered{\event{e}{_1}}{\HBOrdered{\event{e}{}}{\event{e}{_2}}}$,
move \event{e}{} before \event{e}{_1} in \tr;
\item if $\SigmaOrdered{\event{e}{_1}}{\SigmaOrdered{\event{e}{}}}{\event{e}{_2}} \land {\nHBOrdered{\event{e}{}}{\event{e}{_2}}}$, omit \event{e}{} from \tr;
\item if \SigmaOrdered{\event{e}{_2}}{\event{e}{}}, omit \event{e}{} from \tr.
\end{itemize}
Thus the last event in \tr is \event{e}{_2}.
Like $\sigma$, \tr's first \BRn{i}-race is between events \event{e}{_1} and \event{e}{_2},
and \tr has no predictable race or deadlock.
That is,
\TROrdered{\event{e}{_1}}{\event{e}{_2}},
\conflicts{\event{e}{_1}}{\event{e}{_2}},
\nBROrdered{\event{e}{_1}}{\event{e}{_2}} in \tr,
and \event{e}{_1} and \event{e}{_2} are not in critical sections on the same lock.

Because \tr has no predictable race or deadlock,
\BRnOrdered{\event{e}{_1}}{\event{e}{_2}}{i-1} by the induction hypothesis.
Because of the disparity between \BRnOrdered{\event{e}{_1}}{\event{e}{_2}}{i-1} and
\nBRnOrdered{\event{e}{_1}}{\event{e}{_2}}{i},
it must be that \BRnOrdered{\event{l}{}}{\event{w}{'}}{i-1} and
\nBRnOrdered{\event{l}{}}{\event{w}{'}}{i},
where \event{w}{'} is the $i$th conflicting write in \tr,
\event{w}{} is the latest write before \event{w}{'} such that
$\conflicts{\event{w}{}}{\event{w}{'}}$, and
\event{l}{} is the outermost release event of a critical section containing
\event{w}{} such that a release event \event{l}{'} on the same lock that ends a critical
section containing \event{w}{'} also exists.



\notes{
%
%
Furthermore, there must exist some release \event{m} such that $\event{e}{_1}
\ltHB \event{m} \ltHB \event{e}{_2}$ and $\getAcquire{\event{m}} \ltHB
\event{l}$. We can prove the existence of this event as follows: Because
$\event{e}{_1}, \event{e}{_2}$ is not a race (assumed above), \HB is sound, and
\HB only orders acquire and release events, there must exist at least some
release event \event{x} and some acquire event \event{y} such that
$\event{e}{_1} \ltHB \event{x} \ltHB \event{y} \ltHB \event{e}{_2}$. Next,
because of the difference between $\event{e}{_1} \ltBRn{i-1} \event{e}{_2}$ and
$\event{e}{_1} \nltBRn{i} \event{e}{_2}$, there must exist some \event{x},
\event{y} such that $\event{e}{_1} \ltHB \event{x} \ltBRn{i-1} \event{y} \ltHB
\event{e}{_2}$ and $\event{x} \nltBRn{i} \event{y}$. 

\mike{How do we know that's true? (To be clear: I'm not convinced it's true.
Even if it's true, we need to justify it.) \kaan{Tried to justify this but got
stuck. $\event{e}{_1} \ltHB \event{m} \ltHB \event{e}{_2}$ part is easy, but I
am not sure about $\getAcquire{\event{m}} \ltHB \event{l}$ part (Starting with
"Next") since it is very disconnected from everything else. Also, I think this
could be made stricter with $\event{m} \ltHB \event{l}$ but avoided doing so
before we know it is true as it is currently.}
\mike{Intuitively I don't see why such an \event{m}{} must exist.
In particular, if
\BRnOrdered{\event{l}{}}{\event{w}{'}}{i} ``causes''
\BRnOrdered{\event{e}{_1}}{\event{e}{_2}}{i}
through a series of rule~(b) applications.}}

Note that
\event{l} and \event{m} may be the same event, and \event{m} and \event{l} are
not necessarily on the same lock. This implies either
$\inCS{\event{e}{_1}}{\event{m}}$, or $\event{e}{_1} \ltHB
\getAcquire{\event{m}} \ltPO \event{m}$, or $\getAcquire{\event{m}} \ltHB
\event{e}{_1} \ltPO \event{m}$.

Based on this, let \event{m}{'} be a release event on the same lock as \event{m}
such that $\event{m}{'} \ltHB \event{e}{_2}$. Either $\event{l} = \event{m}
\land \event{l}{'} = \event{m}{'}$, or no \event{m}{'} exists such that
$\event{l}{'} \ltHB \event{m}{'}$.

\kaan{We guarantee here that there can be no release-release
ordering caused by a write-write conflict that orders $\event{e}{_1} \ltBRn{i-1}
\event{e}{_2}$, unless it is an "immediate conflict", in which case our old
cases may be sufficient to handle it. Making this statement may require us to
fix the counterexample I proposed below the new rule, see "What about the
following example:..." above.}

}%

Note that \BRnOrdered{\event{l}{}}{\event{w}{'}}{i-1}
``causes''
\BRnOrdered{\event{e}{_1}}{\event{e}{_2}}{i-1}: the only difference between
$\ltBRn{i-1}$ and $\ltBRn{i}$ is the order between $\event{l}$ and $\event{w}{'}$,
along with transitive and release--release orders implied by it. 
Thus, by the definition of \ltBRn{i-i}, either
\begin{enumerate}
\item
\HBOrdered{\event{e}{_1}}{\BRnOrdered{\event{l}{}}{\HBOrdered{\event{w}{'}}{\event{e}{_2}}}{i-1}};
or
\item 
\HBOrdered{\event{e}{_\mathit{acq}}}{\BRnOrdered{\event{l}{}}{\HBOrdered{\event{w}{'}}{\event{e}{_\mathit{rel}}}}{i-1}},
where \event{e}{_\mathit{acq}} and \event{e}{_\mathit{rel}} are the start and end events,
respectively, of two critical sections on the same lock.
One or more applications of the acq--rel$\Rightarrow$rel--rel property
(Table~\ref{tab:partial-order-definitions}) on
\BRnOrdered{\event{e}_\mathit{acq}}{\event{e}{_\mathit{rel}}}{i-1}
results in
\BRnOrdered{\event{e}{_1}}{\event{e}{_2}}{i-1}.
\end{enumerate}
In the second case,
\BRnOrdered{\event{e}{_\mathit{acq}}}{\event{e}{_\mathit{rel}}}{i}
by the ``critical sections ordered by a write--write conflict'' property
of \BR (Table~\ref{tab:partial-order-definitions}),
and thus \BRnOrdered{\event{e}{_1}}{\event{e}{_2}}{i},
a contradiction.
\notes{
\mike{OR: Instead of using the new rule of \BR,
which is pretty strong and kind of tricky to compute,
can we prove the second case above using cases similar to the cases below?
We have to be careful because although \TROrdered{\event{r}{}}{\event{e}{_2}},
it's not necessarily true that \TROrdered{\event{r}{}}{\event{e}{_\mathit{rel}}}.
\mike{Update after our 22 December meeting:
I think a rule that combines rule~(b) and write--write--read conflicts
could work, but I don't see how it would be computable with a linear-time algorithm.
\medskip\\
But I also take back some of my concern about the new \BR rule
(\ie, rule~(b) for write--write conflicts)
likely being too strong in practice. I really don't know,
and it's worth trying it out, especially if this proof actually works.}}
}%
Thus the first case must hold.

Suppose there is a read event $r$ that reads the same variable as \event{w}{} and \event{w}{'}
such that $\event{w}{'} \ltTR \event{r} \ltTR \event{e}{_2}$.
It must be that $\event{w}{'} \ltHB \event{r} \ltHB \event{e}{_2}$: if
$\event{r} \nltHB \event{e}{_2}$ then $\event{r} \not\in \tr$, and if $\event{w}{'}
\nltHB \event{r}$ then (\event{w}{'}, \event{r}) are a race earlier than
(\event{e}{_1}, \event{e}{_2}), both contradictions.
\notes{
Then either
\mike{I think two of the following cases now work.
The other two involve the read being
between the two writes, so the writes are ordered by \BRn{i}.
To complete the proof, I think it would be sufficient to make a rule
that if two writes $w,w'$ are ordered by \BR, then \BROrdered{\event{l}{}}{\event{w}{'}},
where $\event{w}{} \in \CS{\event{l}{}}$ and \event{w}{'} is in a critical section
on the same lock as \event{l}{}.
Or make the rule slightly more strict, by requiring that there be a read
to the same variable between the two writes.}
\kaan{Question: If we define \event{r} such that $\event{e}{_1} \ltHB
\event{r}$, and proceed with the proof if such an \event{r} does not exist, does
the final part of the proof still work? It looks like it should to me.
\mike{I don't see how the final part of the proof would still work.}}
\begin{enumerate}[label=(\alph*)]
  \item $\conflicts{\event{w}{'}}{\event{r}{}} \land \TROrdered{\event{r}{}}{\event{w}{'}}$, in which case
  \HBOrdered{\event{e}{_1}}{\BRnOrdered{\event{r}{}}{\HBOrdered{\event{w}{'}}{\event{e}{_2}}}{i}};
  \mike{How do we know that
  \TROrdered{\event{e}{_1}}{\event{r}{}}?
  \kaan{Defined \event{r} to have that.}}
  
  \item $\conflicts{\event{w}{'}}{\event{r}{}} \land \TROrdered{\event{w}{'}}{\event{r}{}}$, in which case
  \HBOrdered{\event{e}{_1}}{\BRnOrdered{\event{w}{'}}{\HBOrdered{\event{r}{}}{\event{e}{_2}}}{i}};

  \item $\conflicts{\event{w}{}}{\event{r}{}} \land
  \POOrdered{\event{w}{'}}{\event{r}{}}$, in which case
  \HBOrdered{\event{e}{_1}}{\BRnOrdered{\event{l}{}}{\HBOrdered{\event{r}{}}{\event{e}{_2}}}{i}}; or
  \mike{I think this works now (since we've now established above that
  \HBOrdered{\event{e}{_1}}{\event{l}{}}).}
  \item $\conflicts{\event{w}}{\event{r}} \land \POOrdered{\event{r}}{\event{w}{'}}$, in which case
  $\event{e}{_1} \ltHB \event{l} \ltBRn{i} \event{l}{'}$.
  \textcolor{red}{\HBOrdered{\event{e}{_1}}{\BRnOrdered{\event{w}}{\HBOrdered{\event{r}}{\event{e}{_2}}}{i}}}.
  \mike{I think this conclusion was also incorrect for old \BR,
  because we don't know that \TROrdered{\event{e}{_1}}{\event{w}{}}.
  As for the previous case, I believe we need to be able to conclude that
  \HBOrdered{\event{e}{_1}}{\HBOrdered{\event{q}{}}{\HBOrdered{\event{w}{}}
  {\HBOrdered{\event{w}{'}}{\HBOrdered{\event{q}{'}}{\event{e}{_2}}}}}}
  and therefore \BRnOrdered{\event{e}{_1}}{\event{e}{_2}}{i},
  where \event{q}{} and \event{q}{'} are release events on the same lock.
  Can we conclude that?
  \medskip\\
  Or: Is this case actually easier because
  \BRnOrdered{\event{w}{}}{\event{r}{}}{i} and therefore
  \BRnOrdered{\event{w}{}}{\event{w}{'}}{i}?
  \kaan{Hmm, now that I look at both cases, the solution I wrote does not even
  use \event{r} which feels wrong. I am likely missing some cases that we need
  to consider.}}
\end{enumerate}
In each of these cases, by \BRn{i}'s composition with \HB,
\BRnOrdered{\event{e}{_1}}{\event{e}{_2}}{i}, a contradiction.

Therefore there is no read event $r$ that reads the same variable \event{w}{}
and \event{w}{'} such that \TROrdered{\event{w}{}}{\event{r}{}}.
Any read event that reads the same variable as \event{w}{} and \event{w}{'} must
occur \emph{before} \event{w}{}.}
Events \event{w}{'} and \event{r} are executed either by the same or different threads:


\begin{enumerate}[label=(\alph*)]
  \item If $\conflicts{\event{w}{'}}{\event{r}}$,
  then $\event{e}{_1} \ltHB \event{w}{'} \ltBRn{i}
  \event{r} \ltHB \event{e}{_2}$.

  \item If $\event{w}{'} \ltPO \event{r}$,
  then \conflicts{\event{w}}{\event{r}} and thus
  $\event{e}{_1} \ltHB \event{l} \ltBRn{i} \event{r} \ltHB \event{e}{_2}$
  by the ``conflicting writes imply release--read ordering'' property of \BR.
\end{enumerate}







\noindent
In all cases, $\event{e}{_1} \ltBRn{i} \event{e}{_2}$ which is a
contradiction. As a result, no such \event{r} exists.

Now consider the trace \trPrime that is equivalent to \tr except that

\begin{itemize}
  \item \event{w}{'} is replaced by a \Write{x} event, where \code{x} is a brand-new variable not used in \tr.
  \item For every read \event{r}{} in \tr that reads the same variable as \event{w'}{},
  an event \event{r}{'} is appended immediately after \event{r}{} such that
  \event{r}{'} is a \Read{x} event and \POOrdered{\event{r}{}}{\event{r}{'}} in \trPrime.
\end{itemize}
Note that the \BRn{i-1} ordering for \trPrime is the same as the \BRn{i}
ordering for \tr, because there exists no read for which \event{w}{'} was the
last writer, and the added reads \event{r}{'} ensure that for any read \event{r}
for which $\event{r} \ltTR \event{w}{'}$ we have $\event{r}{'} \ltTRPrime
\Write{x}$.
\notes{
\mike{I agree with the above logic (particularly when \TROrdered{\event{w}{}}{\event{r}{}}).
\medskip\\
I wonder if this simplification might enable having something weaker than the new \BR rule.}
}%
In other words,
the \Read{x}--\Write{x} conflicts introduce the same ordering in \trPrime
as the original read--write conflicts between \event{w}{'} and its prior reads in \tr, and
\trPrime does not contain the write--write conflict on \event{w}{} and
\event{w}{'} found in \tr.
Thus \underline{in \trPrime}, \nBRnOrdered{\event{e}{_1}}{\event{e}{_2}}{i-1}.
By the induction hypothesis,
\trPrime has a predictable race or deadlock.
Let \trDoublePrime be a predictable trace of \trPrime that exposes a race or deadlock.
However, if we modify \trDoublePrime by removing the \Read{x} events and replacing the \Write{x} event with \event{w}{'},
the resulting trace is a predictable trace of \tr that exposes a race or deadlock.
Thus \tr has a predictable race or deadlock, which is a contradiction.



Thus for all $i$, \BRn{i} is sound.
Since $\ltBRn{\infty} \equiv \ltBR$, therefore \BR is sound.
\end{proof}

\label{sec:wbr-completeness}

\begin{theorem}[\WBR completeness]
  If an execution trace has a predictable race,
  then it has a \WBR-race.%
  \label{thm:wbr-completeness}
\end{theorem}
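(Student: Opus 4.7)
The plan is to assume the predictable race is witnessed by a predictable trace \trPrime of \tr ending with two consecutive conflicting events \event{e}{_1} and \event{e}{_2} (so $\TRPrimeOrdered{\event{e}{_1}}{\event{e}{_2}}$, $\conflicts{\event{e}{_1}}{\event{e}{_2}}$, and no event of \trPrime lies strictly between them) and then show that the very same pair \event{e}{_1}, \event{e}{_2} is a \WBR-race in \tr.  By \Def~\ref{def:predictable-race}, it suffices to establish the two clauses of the \WBR-race definition: (i) $\commonLocks{\event{e}{_1}}{\event{e}{_2}} = \emptyset$; and (ii) \nWBROrdered{\event{e}{_1}}{\event{e}{_2}}.

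For (i), I would argue by contradiction.  If some lock $m$ were held at both events, then \event{e}{_1} and \event{e}{_2} would lie in two critical sections on $m$ that are both present in \trPrime.  The \LS rule applied to their acquire events forces $\getRelease{\event{a}{_1}}$ to appear between them in \trPrime, contradicting consecutiveness.  This is routine and uses nothing beyond \LS plus the \PO rule (which pulls the acquires into \trPrime once the accesses are).

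For (ii), I would again suppose for contradiction that \WBROrdered{\event{e}{_1}}{\event{e}{_2}}.  Since $\conflicts{\event{e}{_1}}{\event{e}{_2}}$, the two events are in different threads, so any \WBR derivation must traverse at least one genuinely cross-thread \WBR edge, and the only such edges in \WBR (\cf\ Table~\ref{tab:partial-order-definitions}) are the last-writer--to--branch edges and the release--release edges that those induce via the \code{acq} $\prec$ \code{rel} $\Rightarrow$ \code{rel} $\prec$ \code{rel} property.  The bulk of the proof is the following key lemma, proven by induction on the minimal \WBR derivation: \emph{If \WBROrdered{a}{b} and $a, b \in \trPrime$, then \TRPrimeOrdered{a}{b}; moreover, if $a$ and $b$ are in different threads, some event $c \in \trPrime$ satisfies $\TRPrimeOrdered{a}{\TRPrimeOrdered{c}{b}}$.}  Applying this lemma to $(e_1, e_2)$ contradicts consecutiveness and completes the proof.

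The heart of the lemma is the base case for the last-writer--to--branch edge $\WBROrdered{\event{r}{_1}}{\event{b}{}}$, where $\event{r}{_1}$ releases a critical section containing $\event{w}{} = \lastwr{\event{r}{}}{\tr}$, the read $\event{r}{}$ lies in another critical section on the same lock, and $\POOrdered{\event{r}{}}{\event{b}{}}$ with $\BrDepsOn{\event{b}{}}{\event{r}{}}$.  Assuming $\event{b}{} \in \trPrime$, $\event{r}{}$ is causal by \Def~\ref{def:causal-events}, so the \LW rule forces $\event{w}{}, \event{r}{} \in \trPrime$ with $\TRPrimeOrdered{\event{w}{}}{\event{r}{}}$; the \PO rule then pulls both critical sections' acquires into \trPrime.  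Two cases remain by \LS: if $\event{r}{}$'s critical section precedes $\event{w}{}$'s in \trPrime, then \TRPrimeOrdered{\event{r}{}}{\event{w}{}}, contradicting the \LW rule; otherwise $\event{r}{_1} \in \trPrime$ and $\TRPrimeOrdered{\event{r}{_1}}{\TRPrimeOrdered{\event{r}{}}{\event{b}{}}}$, yielding the required intermediate event $\event{r}{}$.  The inductive step handles \PO composition (trivially, by the \PO rule), transitivity (concatenate intermediates), and the \code{rel}--\code{rel} rule (the release-to-release edge is derivable from a shorter derivation that already has an intermediate).

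The main obstacle I anticipate is the inductive step: a \WBR chain may pass through many events that are \emph{not} in \trPrime, so I cannot simply propagate ``the'' intermediate event along the chain.  The argument has to show that each time the chain makes a thread jump through a last-writer--to--branch edge, the corresponding read $\event{r}{}$ is forced into \trPrime by causality, giving a witness that is guaranteed to lie strictly between the endpoints in \trPrime even when the release itself happens to be omitted.  Carefully threading this causality argument through transitivity, and verifying that the \code{rel}--\code{rel} property does not introduce cross-thread edges that bypass a last-writer--to--branch justification, is where the work lives.
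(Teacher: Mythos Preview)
Your approach is essentially the paper's: contradiction, dispatch the common-lock case via the \LS rule, and prove an inductive lemma over \WBR derivations showing that \WBR order must be respected in any predictable trace.  The one noteworthy difference is in how the helper lemma is phrased.  You state it positively---``if $a,b\in\trPrime$ and \WBROrdered{a}{b} then $\TRPrimeOrdered{a}{b}$ (with a strict intermediate when cross-thread)''---and this creates exactly the obstacle you flag: in the transitivity step \WBROrdered{a}{\WBROrdered{e}{b}}, the pivot $e$ need not lie in \trPrime, so neither inductive hypothesis applies.  The paper avoids this by stating the lemma contrapositively: for \emph{any} \WBROrdered{a}{b}, if $b\in\trPrime$ and either $\TRPrimeOrdered{b}{a}$ or $a\notin\trPrime$, then \trPrime is not a predictable trace.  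With that formulation the transitivity case is immediate (if $e\notin\trPrime$, apply the hypothesis to $(e,b)$; otherwise one of $(a,e)$ or $(e,b)$ is reordered), and the main theorem then just observes that \WBROrdered{\event{e}{_1}}{\event{e}{_2}} cannot arise from \PO, the rel--br rule, or the rel--rel rule directly (since $\event{e}{_1}$ is an access on a different thread), hence must factor through some $e$ strictly between---which the lemma rules out.  Your plan can be salvaged along the lines you sketch, but reformulating the lemma to allow $a\notin\trPrime$ is both simpler and what the paper does.
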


To prove the theorem, we use the following helper lemma:

\begin{lemma}[\WBR-ordered events cannot be reordered]
  Given an execution trace \tr, for any events \event{e}{_1} and \event{e}{_2}
  in \tr such that \WBROrdered{\event{e}{_1}}{\event{e}{_2}}, let \trPrime be a
  reordering of \tr where \event{e}{_1} and \event{e}{_2} have been reordered:
  either \TRPrimeOrdered{\event{e}{_2}}{\event{e}{_1}} or $\event{e}{_2} \in \trPrime \land \event{e}{_1} \notin \trPrime$.
  Then, \trPrime must not be a valid predictable trace of \tr.
  \label{lem:wbr-necessary-ordering}
\end{lemma}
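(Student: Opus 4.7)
My plan is to prove the lemma by strong induction on the depth of a derivation of \WBROrdered{\event{e}{_1}}{\event{e}{_2}} from the defining rules of \WBR in Table~\ref{tab:partial-order-definitions}. The rules that can introduce a \WBR edge are (i)~the inclusion of \PO, (ii)~the last-writer critical-section rule producing a release-to-branch edge, (iii)~the acquire-to-release lifting rule ``\Acquire{m}\prec\Release{m} implies \Release{m}\prec\Release{m}'', and (iv)~transitivity. In every case I would assume for contradiction that \trPrime is a valid predictable trace of \tr in which \event{e}{_1} and \event{e}{_2} are reordered---either \TRPrimeOrdered{\event{e}{_2}}{\event{e}{_1}}, or $\event{e}{_2}\in\trPrime$ with $\event{e}{_1}\notin\trPrime$---and derive a contradiction using the \PO, \LW, and \LS rules of Definition~\ref{def:valid-reordering}.

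The \PO base case is immediate: \POOrdered{\event{e}{_1}}{\event{e}{_2}} together with the \PO rule of predictable traces forces \TRPrimeOrdered{\event{e}{_1}}{\event{e}{_2}} whenever $\event{e}{_2}\in\trPrime$, ruling out both reordering scenarios. For transitivity, given an intermediate \event{e}{'} with shorter derivations of \WBROrdered{\event{e}{_1}}{\event{e}{'}} and \WBROrdered{\event{e}{'}}{\event{e}{_2}}, a case split on whether $\event{e}{'}\in\trPrime$ and on its position shows that in every subcase at least one of the two sub-orderings is itself reordered in \trPrime, so the induction hypothesis applies. For the acquire-to-release lifting case, \WBROrdered{\event{r}{_1}}{\event{r}{_2}} is derived from a shorter \WBROrdered{\event{a}{_1}}{\event{r}{_2}} where \event{a}{_1} matches \event{r}{_1}; combining the assumed reordering of \event{r}{_1} and \event{r}{_2} with the \LS rule applied to the matching acquires \event{a}{_1} and \event{a}{_2} forces \event{a}{_1} to also be reordered with \event{r}{_2} in \trPrime, at which point the induction hypothesis closes the case.

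The main obstacle is the last-writer critical-section case, where \WBROrdered{\event{r}{_1}}{\event{b}{}} is produced because some $\event{w}{}=\lastwr{\event{r}{}}{\tr}$ lies in \CS{\event{r}{_1}}, \event{r}{} lies in \CS{\event{r}{_2}}, \POOrdered{\event{r}{}}{\event{b}{}}, and \BrDepsOn{\event{b}{}}{\event{r}{}}. The central chain is: since $\event{b}{}\in\trPrime$ in both reordering scenarios, the \PO rule gives \TRPrimeOrdered{\event{r}{}}{\event{b}{}}, so \event{r}{} is causal in \trPrime by Definition~\ref{def:causal-events}, and the \LW rule yields $\lastwr{\event{r}{}}{\trPrime}=\event{w}{}$, hence $\event{w}{}\in\trPrime$ with \TRPrimeOrdered{\event{w}{}}{\event{r}{}}. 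The \PO rule then propagates \event{a}{_1}, \event{a}{_2}, and \event{r}{_2} into \trPrime from \event{w}{}, \event{r}{}, and \event{r}{}, respectively. Applying the \LS rule to the two same-lock acquires rules out $\event{a}{_1}<_{\trPrime}\event{a}{_2}$---which would force \TRPrimeOrdered{\event{r}{_1}}{\event{a}{_2}}, contradicting either \TRPrimeOrdered{\event{b}{}}{\event{r}{_1}} in the first subcase or $\event{r}{_1}\notin\trPrime$ in the second subcase---so $\event{a}{_2}<_{\trPrime}\event{a}{_1}$, which by \LS yields $\event{r}{}<_{\trPrime}\event{r}{_2}<_{\trPrime}\event{a}{_1}<_{\trPrime}\event{w}{}$, contradicting \TRPrimeOrdered{\event{w}{}}{\event{r}{}}. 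I expect this bookkeeping---tracking which events \PO forces into \trPrime and carefully invoking \LS on the two critical sections to land \event{w}{} on the wrong side of \event{r}{}---to be the most delicate part of the proof; the remaining cases are routine structural induction.
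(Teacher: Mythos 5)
Your proposal is correct and takes essentially the same route as the paper's proof: induction on the \WBR derivation (the paper packages this as a ``\WBRdist''), with the \PO and release-to-branch cases as base cases discharged via the \PO, \LS, and \LW rules of predictable traces, and the transitivity and acquire-to-release lifting cases closed by pushing the reordering onto a sub-derivation and invoking the induction hypothesis. One cosmetic slip: the \PO rule does not ``propagate'' \event{r}{_2} into \trPrime from \event{r}{} (it only forces \emph{earlier} same-thread events to be present), but this is harmless because your final chain obtains $\event{r}{_2} \in \trPrime$ from the \LS rule once \TRPrimeOrdered{\event{a}{_2}}{\event{a}{_1}}, exactly as needed.
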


The overall proof strategy is analogous to a corresponding proof for \DC~\cite{vindicator},
so we have relegated the proof of Lemma~\ref{lem:wbr-necessary-ordering} to
\iftoggle{extended-version}{Appendix~\ref{sec:completeness-lemma}.}{the extended arXiv version~\cite{depaware-extended-arxiv}.}

\begin{proof}[Proof of Theorem~\ref{thm:wbr-completeness}]
  Let us prove this theorem by contradiction. Let \tr be a trace
  with a predictable race on conflicting events \event{e}{_1} and \event{e}{_2} such that \TROrdered{\event{e}{_1}}{\event{e}{_2}}, but no \WBR-race.
  Let \trPrime be a predictable trace of \tr in which \event{e}{_1} and \event{e}{_2} are consecutive:
  \TRPrimeOrdered{\event{e}{_1}}{\event{e}{_2}} and $\nexists e \mid \TRPrimeOrdered{\event{e}{_1}}{\TRPrimeOrdered{\event{e}{}}{\event{e}{_2}}}$.
  
  Applying the definition of a \WBR-race (Section~\ref{sec:BR-WBR-relations}),
  either
  \WBROrdered{\event{e}{_1}}{\event{e}{_2}} or
  $\commonLocks{\event{e}{_1}}{\event{e}{_2}} \neq \emptyset$.
  If $\commonLocks{\event{e}{_1}}{\event{e}{_2}} \neq \emptyset$,
  then \trPrime violates the \LS rule of predictable traces.



  Thus \WBROrdered{\event{e}{_1}}{\event{e}{_2}}.
  By the definition of a predictable race, \event{e}{_1} and \event{e}{_2} must be read or
  write events, and must be on different threads. As a result, the \WBR ordering
  between \event{e}{_1} and \event{e}{_2} cannot be established by \WBR
  conflicting critical section ordering or ``$\Acquire{m}\ltWBR\Release{m} \implies \Release{m}\ltWBR\Release{m}$,'' which require \event{e}{_1} to be a
  release, and not by \PO since the events are on different threads.
  Therefore, \WBROrdered{\event{e}{_1}}{\event{e}{_2}} by
  \WBR transitivity, so there must exist an event \event{e} such that
  \WBROrdered{\event{e}{_1}}{\WBROrdered{\event{e}}{\event{e}{_2}}}.
  
  Since \event{e}{_1} and \event{e}{_2} are consecutive in \trPrime, either
  \TRPrimeOrdered{\event{e}{}}{\event{e}{_1}},
  \TRPrimeOrdered{\event{e}{_2}}{\event{e}{}}, or
  $\event{e}{_2} \in \trPrime \land \event{e}{} \notin \trPrime$.
  By Lemma~\ref{lem:wbr-necessary-ordering},
  any of these possibilities implies \trPrime is an invalid predictable trace of \tr, a contradiction.
\end{proof}

\subsection{Using Precise Dependence Information}


Up to this point, we have assumed that a branch event depends on every preceding read event in the same thread,
meaning that the condition \BrDepsOn{\event{b}{}}{\event{e}{_2}} in \WBR's handling of write--read critical sections
holds for every read \event{e}{_2} and branch \event{b}{}.
This assumption is needed unless static control dependence information is available
from conservative static program analysis (\eg,~\cite{mcr-s,Ferrante87theprogram}).
We tried out one kind of static analysis to compute static control dependencies
but found it provided no benefit, so our experiments do not use it (Section~\ref{sec:evaluation}).
Here we show some examples of how \WBR uses static control dependence information if it is available.

Figure~\ref{fig:sdg-needed-race} shows two executions that differ only in whether precise static control dependency information is available.
Figure~\ref{fig:sdg-needed-race:woSDG} has no control dependency information available, so each branch is conservatively dependent on all prior reads, or the information is available but the branch outcome \emph{may depend} on the prior read.
Figure~\ref{fig:sdg-needed-race:wSDG} has control dependency information that says that the branch outcome does \emph{not} depend on the prior read.
As a result, Figure~\ref{fig:sdg-needed-race:wSDG} has weaker \WBR ordering than
Figure~\ref{fig:sdg-needed-race:woSDG}, leading to a detected \WBR-race in Figure~\ref{fig:sdg-needed-race:wSDG} only.

\begin{figure}
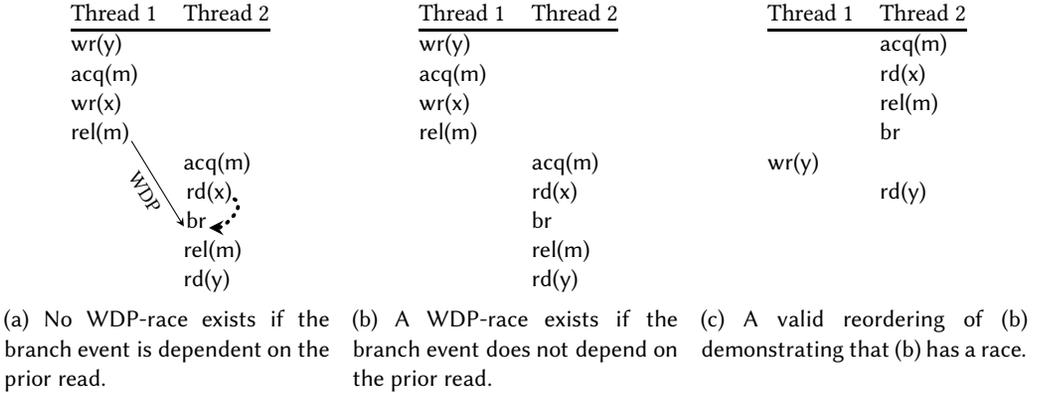

\captionsetup{farskip=0pt} 

\small
\centering

\hfill
\subfloat[No \WBR-race exists if the branch event is dependent on the prior read.]{
\centering
\sf
\makebox[.3\linewidth]{
\begin{tabular}{@{}ll@{}}
\textnormal{Thread 1} & \textnormal{Thread 2}\\\hline
\Write{y} \\
\Acquire{m} \\
\Write{x}\tikzmark{5} \\
\Release{m}\tikzmark{1} \\
                      &						  \Acquire{m} \\
                      &						  \tikzmark{6}\Read{x}\tikzmark{3} \\
                      &						  \tikzmark{2}\BrUniv{}\tikzmark{4} \\
                      & 					  \Release{m} \\
                      &						  \Read{y} \\
\end{tabular}}
\label{fig:sdg-needed-race:woSDG}}
\hfill
\subfloat[A \WBR-race exists if the branch event does not depend on the prior read.]{
\centering
\sf
\makebox[.3\linewidth]{
\begin{tabular}{@{}ll@{}}
\textnormal{Thread 1} & \textnormal{Thread 2}\\\hline
\Write{y} \\
\Acquire{m} \\
\Write{x}\tikzmark{10} \\
\Release{m} \\
                      &						  \Acquire{m} \\
                      &						  \Read{x} \\
                      &						  \BrUniv{} \\
                      & 					  \Release{m} \\
                      &						  \Read{y} \\
\end{tabular}}
\label{fig:sdg-needed-race:wSDG}}
\hfill
\subfloat[A valid reordering of (b) demonstrating that (b) has a race.]{
\centering
\sf
\makebox[.3\linewidth]{
\begin{tabular}{@{}ll@{}}
\textnormal{Thread 1} & \textnormal{Thread 2}\\\hline
                      &						  \Acquire{m} \\
                      &						  \Read{x} \\
                      & 					  \Release{m} \\
                      &						  \BrUniv{} \\
\Write{y} \\
                      &						  \Read{y} \\
\\
\\
\\
\end{tabular}}
\label{fig:sdg-needed-race:reordering}}

\textunderlink{1}{2}{\WBR}%
\forwardcurved{3}{4}{0}{-30}%

\caption{Example executions that differ only in the static control dependencies between branches and reads.
Dotted edges indicate reads that a branch depends on, \ie, \BrDepsOn{\event{b}{}}{\event{e}{}}.
If precise control dependence information rules out read--branch dependencies, \WBR can find additional races, such as the race on \code{y} in (b).
\label{fig:sdg-needed-race}}
\end{figure}



\section{\BR and \WBR Analyses}
\label{sec:wbr-analysis}


\newcommand\CHt{\ensuremath{\amsbb{C}_t[t := \amsbb{H}_t(t)]}\xspace}
\newcommand\YlHt{\ensuremath{\amsbb{Y}_l[t := \amsbb{H}_t(t)]}\xspace}





\newcommand\AcqX{\ensuremath{\mathit{Acq}^{K}}}

\begin{algorithm}[t]
\caption{\hfill \BR analysis at each event type, with differences from \WCP analysis}
\small

\newcommand\XHt{\ensuremath{\amsbb{K}_t[t := \amsbb{H}_t(t)]}\xspace}

\begin{algorithmic}[1]

  \Procedure{acquire}{$t, l$}
  \State \update{\amsbb{H}_t}{\addvc \amsbb{H}_l}
  \State \update{\amsbb{C}_t}{\addvc \amsbb{C}_l}
  \State \hspace{-1.8em}$+$\hspace{0.85em}\xspace\update{\amsbb{K}_t}{\addvc \amsbb{K}_l}\label{line:br-analysis:xdp-hb-comp-1}
  \State \hspace{-1.8em}$+$\hspace{0.85em}\xspace\textbf{foreach}\xspace $m \in L \setminus \{l\}$ \xspace\textbf{do} \update{\amsbb{Y}_m}{\addvc \amsbb{K}_l}\label{line:br-analysis:xdp-acquire-outer-locks} \Comment{wr--wr conflict \HB ordered before $l$ but not $m$}
  \State \hspace{-1.8em}$+$\hspace{0.85em}\xspace$\amsbb{Y}_l \gets \amsbb{K}_t$\label{line:br-analysis:xdp-save} \Comment{wr--wr conflict where later write did not hold $l$}
  \State \textbf{foreach}\xspace $t' \neq t$\xspace\textbf{do} $Acq_l(t').Enque(\CHt)$
  \State \hspace{-1.8em}$+$\hspace{0.85em}\xspace\textbf{foreach}\xspace $t' \neq t$ \xspace\textbf{do} $\AcqX_l(t').Enque(\XHt)$
  \EndProcedure


  \Procedure{release}{$t, l, L, R, W$}
  \Statex \hspace{-0.38em}\textcolor{gray}{$-$\hspace{0.85em}\xspace\textbf{while} $Acq_l(t).Front() \lessvc \CHt$ \textbf{do}}
  \While{\hspace{-4.6em}$+$\hspace{3.8em}\xspace$Acq_l(t).Front() \lessvc \CHt \lor \AcqX_l(t).Front() \lessvc \YlHt$}\label{line:br-analysis:xdp-use}
  \State $Acq_l(t).Deque()$
  \State \hspace{-3.3em}$+$\hspace{2.35em}\xspace$\AcqX_l(t).Deque()$
  \State \update{\amsbb{C}_t}{\addvc Rel_l(t).Deque()}
  \EndWhile
  \State \Bfor{x}{R}{\update{\amsbb{L}^r_{l,x}}{\addvc \amsbb{H}_t}}
  \State \Bfor{x}{W}{\update{\amsbb{L}^w_{l,x}}{\addvc \amsbb{H}_t}}
  \State $\amsbb{H}_l \gets \amsbb{H}_t$
  \State $\amsbb{C}_l \gets \amsbb{C}_t$
  \State \hspace{-1.8em}$+$\hspace{0.85em}\xspace $\amsbb{K}_l \gets \amsbb{K}_t$\label{line:br-analysis:xdp-hb-comp-2}
  \State \textbf{foreach}\xspace $t' \neq t$ \xspace\textbf{do} $Rel_l(t').Enque(\amsbb{H}_t)$
  \State \update{\amsbb{H}_t(t)}{+ 1}
  \EndProcedure
  
  \Procedure{read}{$t, x, L$}
  \State \hspace{-1.8em}$+$\hspace{0.85em}\xspace\update{\amsbb{C}_t}{\addvc \amsbb{B}_{t,x}}\Comment{Apply prior write--write conflict}\label{line:br-analysis:wr-wr-apply}
  \State \update{\amsbb{C}_t}{\addvc_{l \in L} \amsbb{L}^w_{l,x}}\label{line:br-analysis-wr-rd-edge}
  \State \textbf{check} $\amsbb{W}_x \lessvc \CHt$\label{line:br-analysis:check-write-read-race}\Comment{Write--read race?}
  \State $\amsbb{R}_x(t) \gets \amsbb{H}_t(t)$
  \EndProcedure

  \Procedure{write}{$t, x, L$}
  \State \update{\amsbb{C}_t}{\addvc_{l \in L} \amsbb{L}^r_{l,x}}\label{line:br-analysis:rd-wr-edge}
  \Statex \hspace{-0.38em}\textcolor{gray}{$-$\hspace{0.85em}\xspace$\update{\amsbb{C}_t}{\addvc_{l \in L}\amsbb{L}^w_{l,x}}$}
  \State \textbf{check} $\amsbb{W}_x \lessvc \CHt \addvc_{l \in L} \amsbb{L}^w_{l,x}$\label{line:br-analysis:check-write-write-race}\Comment{Write--write race?}
  \State \hspace{-1.8em}$+$\hspace{0.85em}\xspace$\amsbb{B}_{t,x} \gets \addvc_{l \in L}\amsbb{L}^w_{l,x}$\Comment{Records write--write conflict for future read or release}\label{line:br-analysis:wr-wr-delayed}
  \State \hspace{-1.8em}$+$\hspace{0.85em}\xspace$\amsbb{K}_{t} \gets \amsbb{K}_t \addvc_{l \in L}\amsbb{L}^w_{l,x}$\Comment{Used for rel--rel through wr--wr conflicts}\label{line:br-analysis:xdp-conflict}
  \State \hspace{-1.8em}$+$\hspace{0.85em}\xspace\textbf{foreach}\xspace $m \in L$ \xspace\textbf{do} $\amsbb{Y}_{m} \gets \amsbb{Y}_m \addvc_{l \in L \setminus \{m\}}\amsbb{L}^w_{l,x}$\label{line:br-analysis:xdp-update-outer}\Comment{wr--wr conflicts where earlier write did not hold $m$}
  \State \textbf{check} $\amsbb{R}_x \lessvc \CHt$\label{line:br-analysis:check-read-write-race}\Comment{Read--write race?}
  \State $\amsbb{W}_x(t) \gets \amsbb{H}_t(t)$
  \EndProcedure

  \Procedure{branch}{$t, L$}
  \State \textbf{skip} \Comment{No analysis at branch events}
  \EndProcedure

\end{algorithmic}
\label{alg:br-analysis}
\end{algorithm}

\begin{algorithm}
\caption{\hfill \WBR analysis at each event type, with differences from \DC analysis}
\small

\begin{algorithmic}[1]

\Procedure{acquire}{$t, l$}
\State \textbf{foreach}\xspace $t' \neq t$ \xspace\textbf{do} $Acq_{l,t'}(t).Enque(\amsbb{C}_t)$
\EndProcedure


\Procedure{release}{$t, l, R, W$}
\ForEach{$t' \neq t$}
\While{$Acq_{l,t}(t').Front() \lessvc \amsbb{C}_t$}
\State $Acq_{l,t}(t').Deque()$
\State \update{\amsbb{C}_t}{\addvc Rel_{l,t}(t').Deque()}
\EndWhile
\EndFor
\State \Bfor{x}{W}{$\amsbb{L}_{l,x}\gets \amsbb{C}_t$} \Comment{Record release time for writes in critical section} \label{line:wbr-analysis:record-release-time}
\Statex \hspace{-0.25em}\textcolor{gray}{$-$\hspace{0.85em}\xspace\Bfor{x}{R}{$\amsbb{L}^r_{l,x}\gets \amsbb{C}_t$}}
\State \textbf{foreach}\xspace $t' \neq t$ \xspace\textbf{do} $Rel_{l,t'}(t).Enque(\amsbb{C}_t)$
\State \update{\amsbb{C}_t(t)}{+ 1} \label{line:wbr-analysis:release-increment}
\EndProcedure

\Procedure{read}{$t, x, e, L$}
\Statex \hspace{-0.25em}\textcolor{gray}{$-$\hspace{0.85em}\xspace\update{\amsbb{C}_t}{\bigsqcup_{l \in (L \cap L^w_{x,t'})} \amsbb{L}_{l,x}}}
\Statex \hspace{-0.25em}\textcolor{gray}{$-$\hspace{0.85em}\xspace\textbf{check} $\amsbb{W}_x \lessvc \amsbb{C}_t$}
\State \hspace{-1.8em}$+$\hspace{0.85em}\xspace\textbf{foreach} thread $t' \ne t$ \textbf{check} $\amsbb{W}_x(t') \le \amsbb{C}_t(t') \lor L^w_{x,t'}\cap L \neq \emptyset$ \Comment{Check write--read race} \label{line:wbr-analysis:check-wr-rd-race}
\State \hspace{-1.8em}$+$\hspace{0.85em}\xspace\textbf{let} $t' \gets T_x$ \label{line:wbr-analysis:get-last-writer-thread} \Comment{Get last writer thread of $x$}
\If{\hspace{-2.8em}$+$\hspace{1.85em}\xspace$t' \notin \{ \nolastwr, t \} \land L \cap L^w_{x,t'} \neq \emptyset$} \Comment{Write--read conflict}\label{line:wbr-analysis:wr-rd-detected}
\State \hspace{-3.3em}$+$\hspace{2.35em}\xspace$\amsbb{B}_{t,x} \gets \bigsqcup_{l \in (L \cap L^w_{x,t'})} \amsbb{L}_{l,x}$ \Comment{Record time of writer thread's release for later use} \label{line:wbr-analysis:defer-release-time}
\lIf{\hspace{-4.3em}$+$\hspace{3.45em}\xspace$\amsbb{B}_{t,x} \nsqsubseteq \amsbb{C}_t$}{$\update{D_t}{\cup \{\langle x, e \rangle\}}$} \Comment{Record read} \label{line:wbr-analysis:add-to-d_t}
\EndIf
\State $\amsbb{R}_x(t) \gets \amsbb{C}_t(t)$
\State \hspace{-1.8em}$+$\hspace{0.85em}\xspace$L^r_{x,t} \gets L$
\EndProcedure

\Procedure{write}{$t, x, L$}
\Statex \hspace{-0.25em}\textcolor{gray}{$-$\hspace{0.85em}\xspace \update{\amsbb{C}_t}{\addvc \bigsqcup_{l \in (L \cap L^w_{x,t'})} \amsbb{L}_{l,x} \addvc \bigsqcup_{l \in (L \cap L^r_{x,t'})} \amsbb{L}^r_{l,x}}}
\Statex \hspace{-0.25em}\textcolor{gray}{$-$\hspace{0.85em}\xspace\textbf{check} $\amsbb{W}_x \lessvc \amsbb{C}_t$}
\State \hspace{-1.8em}$+$\hspace{0.85em}\xspace\textbf{foreach} thread $t' \ne t$ \textbf{check} $\amsbb{W}_x(t') \le \amsbb{C}_t(t') \lor L^w_{x,t'} \cap L \neq \emptyset$ \Comment{Check write--write race} \label{line:wbr-analysis:check-wr-wr-race}
\Statex \hspace{-0.25em}\textcolor{gray}{$-$\hspace{0.85em}\xspace\textbf{check} $\amsbb{R}_x \lessvc \amsbb{C}_t$}
\State \hspace{-1.8em}$+$\hspace{0.85em}\xspace\textbf{foreach} thread $t' \ne t$ \textbf{check} $\amsbb{R}_x(t') \le \amsbb{C}_t(t') \lor L^r_{x,t'} \cap L \neq \emptyset$ \Comment{Check read--write race} \label{line:wbr-analysis:check-rd-wr-race}
\State $\amsbb{W}_x(t) \gets \amsbb{C}_t(t)$
\State \hspace{-1.8em}$+$\hspace{0.85em}\xspace$L^w_{x,t} \gets L$
\State \hspace{-1.8em}$+$\hspace{0.85em}\xspace$T_x \gets t$ \Comment{Set last writer thread of $x$} \label{line:wbr-analysis:update-last-writer}
\EndProcedure

\Procedure{branch}{$t, e, L$}
\Statex \hspace{-0.25em}\textcolor{gray}{$-$\hspace{0.85em}\xspace\textbf{skip}}
\State \hspace{-1.8em}$+$\hspace{0.85em}\xspace$\textbf{foreach} \langle x,r \rangle \in D_t \mid \BrDepsOn{e}{r} \xspace\textbf{ do }\xspace \update{\amsbb{C}_t}{\addvc \amsbb{B}_{t,x}}$ \Comment{Add release--branch ordering} \label{line:wbr-analysis:use-release-time}
\State \hspace{-1.8em}$+$\hspace{0.85em}\xspace\update{D_t}{\setminus \{ \langle x,r \rangle \in D_t \mid \BrDepsOn{e}{r} \}} \label{line:wbr-analysis:remove-dependencies} \Comment{Remove dependencies that were applied}
\EndProcedure
\end{algorithmic}
\label{alg:wbr-analysis}
\end{algorithm}

\emph{\BR analysis} and
\emph{\WBR analysis} are new online dynamic program analyses that compute \BR and \WBR and detect \BR- and \WBR-races, respectively.
Algorithms~\ref{alg:br-analysis} and \ref{alg:wbr-analysis} show \BR and \WBR analyses, respectively, for each kind of event.
This section's notation and terminology follow the \WCP and \DC papers' to some extent~\cite{vindicator,wcp}.

Both algorithms show the differences relative to prior analyses
(\BR versus \WCP and \WBR versus \DC) by labeling lines with
``$+$'' to show logic added by our analyses and
``$-$'' with grayed-out text to show lines removed by our analyses.
Algorithm~\ref{alg:br-analysis} shows that \BR analysis requires several changes to \WCP analysis.
These changes are for tracking write--write conflicts to add ordering (1) when a future read is detected on the second write's thread
and (2) when two critical sections are HB ordered by the conflicting writes.
In addition, \BR analysis avoids reporting write--write races for writes in critical sections on the same lock.

Algorithm~\ref{alg:wbr-analysis} shows that \WBR analysis makes several significant changes to \DC analysis.
These changes are primarily to deal with branches, by recording information about write--read dependencies
at read events (lines~\ref{line:wbr-analysis:get-last-writer-thread}--\ref{line:wbr-analysis:add-to-d_t})
and using the recorded information at branch events (lines~\ref{line:wbr-analysis:use-release-time}--\ref{line:wbr-analysis:remove-dependencies}).
Unlike \DC analysis, \WBR analysis does not establish ordering at any conflicting accesses, and it never needs to track
ordering from a read to another access (since it detects only write--read conflicts).
In addition, \WBR analysis ensures it does not report races on accesses in critical sections on the same lock.


\paragraph{Analysis details}

In both \BR and \WBR analyses,
the procedural parameters $t$ and $l$ are the current thread and lock;
$L$ is the set of locks held by the thread performing the current event;
$R$ and $W$ are the sets of variables that were read and written in the ending critical section on $l$;
and $e$ represents the current read or branch event (for detecting branch dependencies).


The analysis uses \emph{vector clocks}~\cite{vector-clocks} to represent logical \BR or \WBR time. 
A vector clock $C : \mathit{Tid} \mapsto \mathcal{N}$ maps
each thread to a nonnegative integer. Operations on vector clocks are
pointwise comparison ($C_1 \sqsubseteq C_2 \iff \forall t . C_1(t) \leq C_2(t)$) and
pointwise join ($C_1 \sqcup C_2 \equiv \lambda t . \mathit{\max(C_1(t), C_2(t))}$):
\notes{
Intuitively, for events \event{e}{_1} and \event{e}{_2} executed by different threads at times $C_1$ and $C_2$, respectively,
$C_1 \sqsubseteq C_2$ if and only if $\BROrdered{\event{e}{_1}}{\event{e}{_2}}$ (Algorithm~\ref{alg:br-analysis})
or $\WBROrdered{\event{e}{_1}}{\event{e}{_2}}$ (Algorithm~\ref{alg:wbr-analysis}).
}%

Both analyses maintain the following state:

\begin{itemize}

\item
$\amsbb{C}_t$ is a vector clock that represents the current \BR or \WBR time for thread $t$.

\item
$\amsbb{R}_{x}$ and $\amsbb{W}_{x}$ are vector clocks that represent the \BR or \WBR time of the last reads and writes to $x$.

\item
$\mathit{Acq}_{l}(t)$ and $\mathit{Rel}_{l}(t)$ (\BR) and
$\mathit{Acq}_{l,t'}(t)$ and $\mathit{Rel}_{l,t'}(t)$ (\WBR) are queues of vector clocks
that help compute the ``$\Acquire{m}\prec\Release{m}$ implies $\Release{m}\prec\Release{m}$'' property (Table~\ref{tab:partial-order-definitions}).

\end{itemize}

In addition,
\BR analysis maintains the following state:

\begin{itemize}

\item
$\amsbb{H}_t$ is a vector clock that represents the current \HB time for thread $t$.
\CHt, which evaluates to a vector clock with every element
equal to $\amsbb{C}_t$ except that element $t$ is equal to $\amsbb{H}_t(t)$, represents $\ltBR \cup \ltPO$.

\item
$\amsbb{L}^r_{l,x}$ and $\amsbb{L}^w_{l,x}$ are vector clocks that represent the \HB times of
critical sections on $l$ that read and wrote $x$, respectively.

\item
$\amsbb{B}_{t,x}$ is a vector clock that represents the \BR time of a release event
of a critical section on lock $m$ containing a write event \event{w}{} to $x$ such that
a later write event \event{w}{'} to $x$ by $t$ conflicts with the write and $m \in \commonLocks{\event{w}{}}{\event{w}{'}}$.

\item
$\amsbb{K}_t$ is a vector clock that represents the time of a partial order
that orders conflicting writes
(\ie, $r_1 \prec w_2$ if $w_1 \ltTR w_2$ are conflicting write events,
$r_1$ is a release and $a_2$ is an acquire on the same lock,
$w_1 \in \CS{r_1}$, and $w_2 \in \CS{a_2}$)
and composes with HB, for thread $t$.

\item
$\amsbb{Y}_l$ is a vector clock that represents the time of a partial order
that orders conflicting writes
\emph{except} those involving the current critical section on lock $l$
(\ie, $r_1 \prec w_2$ if $w_1 \ltTR w_2$ are conflicting write events,
$r_1$ is a release and $a_2$ is an acquire on the same lock,
$a_2$ is \emph{not} the most recent acquire of lock $l$,
$w_1 \in \CS{r_1}$, and $w_2 \in \CS{a_2}$)
and composes with HB, for the thread currently holding lock $l$.
(While $l$ is unheld, $\amsbb{Y}_l$ is undefined.)

\item
$\AcqX_l(t)$ is a queue of vector clocks
akin to $\mathit{Acq}_{l}(t)$, but instead of representing \BR times,
each vector clock represents the time of a partial order that orders conflicting writes
and composes with HB.

\end{itemize}

\WBR analysis maintains the following additional state:

\begin{itemize}

\item
$\amsbb{L}_{l,x}$ is a vector clock that represents the \WBR time of critical sections on $l$ that wrote $x$.

\item
$\amsbb{B}_{t,x}$ is a vector clock that represents the \WBR time of the last write event \event{e}{} to $x$ such that
a later read event \event{e}{'} to $x$ by $t$ conflicts with \event{e}{} and $\commonLocks{\event{e}{}}{\event{e'}{}} \ne \emptyset$.

\item
$T_x$ is the last thread to write to $x$, or \nolastwr if no thread has yet written $x$.

\item
$D_t$ is a set of pairs $\langle x,\event{e}{} \rangle$
such that event \event{e}{} is a read to $x$ by a thread that has not (yet) executed a branch $b$ such that \BrDepsOn{b}{e}.

\item
$L^r_{x,t}$ and $L^w_{x,t}$ are sets of locks that were held by thread $t$ when it last read and wrote variable $x$, respectively.

\end{itemize}

Initially, every vector clock maps every thread to 0,
except $\forall t.\amsbb{H}_t(t) = 1$ for \BR analysis, and $\forall t.\amsbb{C}_t(t) = 1$ for \WBR analysis.
Every queue and set is initially empty.

The analyses compute the ``$\Acquire{m}\prec\Release{m}$ implies $\Release{m}\prec\Release{m}$'' property (Table~\ref{tab:partial-order-definitions}) similarly to
\WCP and \DC analyses, respectively~\cite{wcp,vindicator}.
Briefly, $\mathit{Acq}_{l}(t)$ and $\mathit{Rel}_{l}(t)$ contain times of \Acquire{l} and \Release{l} operations (by any thread other than $t$) 
such that the \Acquire{l} operation is not yet \BR ordered to a following \Release{l} by thread $t$.
$\mathit{Acq}_{l,t'}(t)$ and $\mathit{Rel}_{l,t'}(t)$ contain times of \Acquire{l} and \Release{l} operations by thread $t$
such that the \Acquire{l} operation is not yet \WBR ordered to a following \Release{l} by thread $t'$.
\BR additionally uses $\AcqX_l(t)$ for \Acquire{l} operations for write--write
conflicts that are \HB ordered before the acquire, which is required for
soundness since \BR does not directly order write--write conflicts like \WCP.

The analyses provide \BR and \WBR's handling of conflicting critical sections
by detecting some kinds of conflicts on accesses holding a common lock.
\BR analysis orders the earlier release of a common lock to the current event for write--read and read--write conflicts
using $\amsbb{L}^r_{l,x}$ and $\amsbb{L}^w_{l,x}$
(lines~\ref{line:br-analysis-wr-rd-edge} and \ref{line:br-analysis:rd-wr-edge} in Algorithm~\ref{alg:br-analysis}).
For write--write conflicts, \BR analysis stores the time of the earlier release of a common lock
in $\amsbb{B}_{t,x}$ (line~\ref{line:br-analysis:wr-wr-delayed})
to order the earlier write to a later read of $x$ by the current thread (line~\ref{line:br-analysis:wr-wr-apply}).

\BR analysis detects release--release ordering by write--write conflicts
using $\amsbb{K}_t$, $\amsbb{Y}_l$, and $Acq^K_l(t)$.
In particular, the analysis tracks write--write ordering using $\amsbb{K}_t$,
and it tracks write--write ordering
\emph{excluding the current critical section on $l$} using $\amsbb{Y}_l$,
in accordance with the \BR definition.
Line~\ref{line:br-analysis:xdp-use} compares $\amsbb{Y}_l$ and $Acq^K_l(t)$ to check whether two critical sections are
ordered by write--write ordering excluding the current critical section on $l$.
Effectively, $\amsbb{Y}_l$, $\amsbb{K}_t$, and $\AcqX_l(t)$ represent relations that
\BR can use to check its release--release rules (page~\pageref{br-rel-rel-rules}).

\notes{
\mike{Here's some text that was here:}
\BR analysis extends WCP's check for release--release edges to also consider the
elided write--write conflicts. The analysis saves elided write--write conflicts as they
occur (line~\ref{line:br-analysis:xdp-conflict}). These elided conflicts are
propagated with \HB (lines~\ref{line:br-analysis:xdp-hb-comp-1}
and~\ref{line:br-analysis:xdp-hb-comp-2}). At an acquire, the analysis
captures the current state of these elided conflicts
(line~\ref{line:br-analysis:xdp-save}). When computing release--release ordering
(line~\ref{line:br-analysis:xdp-use}), the analysis checks if the acquire is \BR ordered
to the release ($Acq_l(t).Front() \lessvc \CHt$), or if the acquire
is before an elided write--write conflict ($Acq^K_l(t).Front() \lessvc \YlHt$).

To see how \BR analysis detects release--release ordering from write--write conflicts,
first note that either the earlier or the later write in the conflict (or both) must
not have held the relevant lock. Let us consider the case
where the later write did not hold the lock. Then, that write must have occurred
before the later critical section started, in which case it would have been captured
in $\amsbb{Y}_l$ at line~\ref{line:br-analysis:xdp-save}
\textcolor{red}{if the later write occurred before the later critical section started and line~\ref{line:br-analysis:xdp-acquire-outer-locks} otherwise.}
\mike{Huh? Isn't this sentence starting with the supposition that the later write was outside the critical section.}
If the earlier
write did not hold the lock, then
line~\ref{line:br-analysis:xdp-update-outer} would have updated $\amsbb{Y}_l$.
Either way, $\YlHt$ updated with the write--write conflict will be greater than
$Acq^K_l(t).Front()$, triggering \BR ordering of the release events. Note here
that the analysis checks $\amsbb{Y}_l$ against $Acq^K_l(t)$, which is a queue that tracks
the write--write conflict times at acquires, because the write--write conflict
clocks are more restricted than \BR and cannot be compared with the existing
$Acq_l(t)$ queues.
}

\WBR analysis orders the release on the writer's executing thread to a later branch dependent on the read.
The analysis does so by recording the time of the last writer's release in $\amsbb{L}_{l,x}$
(line~\ref{line:wbr-analysis:record-release-time} in Algorithm~\ref{alg:wbr-analysis}).
Later, when a conflicting read occurs on thread $t$ holding $l$, the analysis uses
$\amsbb{L}_{l,x}$ to get the time for the last conflicting writer $T_x$'s release,
and stores this time in $\amsbb{B}_{t,x}$ (line~\ref{line:wbr-analysis:defer-release-time}).
When $t$ executes a branch dependent on the prior conflicting read,
\WBR adds ordering from the release to the current branch (line~\ref{line:wbr-analysis:use-release-time}).
The analysis detects the dependent branch using $D_t$, which contains a set of $\langle x, \event{e}{} \rangle$ pairs
for which a branch dependent on read event \event{e}{} has not yet executed (line~\ref{line:wbr-analysis:update-last-writer}).
\later{\kaan{Reviewer B says that ``I found this hard to follow. An example could help.''
\mike{How about referring to Figures~\ref{fig:sdg-needed-race:woSDG} and \ref{fig:sdg-needed-race:wSDG}?}}}%
The exact representation of \event{e}{} and behavior of \BrDepsOn{b}{e} are implementation dependent.

\BR analysis provides composition with \HB
using $\amsbb{C}_t$, $\amsbb{H}_t$, and
\CHt.\footnote{In Algorithm~\ref{alg:br-analysis},
$\amsbb{C}_t$ and \CHt are analogous to the \WCP paper's $\amsbb{P}_t$ and $\amsbb{C}_t$, respectively~\cite{wcp}.}
\WBR analysis includes \PO with the increment of $\amsbb{C}_t(t)$ at line~\ref{line:wbr-analysis:release-increment}.

The analyses check the conditions for a \BR- or \WBR-race by
using $\amsbb{R}_{x}$ and $\amsbb{W}_{x}$. 
Since the analyses do not order all pairs of conflicting accesses,
unordered conflicting accesses are not sufficient to report a race.
\BR analysis uses the vector clock $\amsbb{L}^w_{l,x}$
and \WBR analysis uses the locksets $L^r_{x,t}$ and $L^w_{x,t}$ to check if the current
and prior conflicting accesses' held locks overlap
(lines~\ref{line:br-analysis:check-write-read-race},
\ref{line:br-analysis:check-write-write-race}, and
\ref{line:br-analysis:check-read-write-race} in Algorithm~\ref{alg:br-analysis};
lines~\ref{line:wbr-analysis:check-wr-rd-race},
\ref{line:wbr-analysis:check-wr-wr-race}, and
\ref{line:wbr-analysis:check-rd-wr-race} in Algorithm~\ref{alg:wbr-analysis}).

\paragraph{Atomic accesses and operations.}
\label{subsec:atomic-accesses}

We extend \BR and \WBR analyses to handle accesses that have ordering or atomicity
semantics:
\emph{atomic accesses} that introduce ordering such as Java \code{volatile} and C++ \code{atomic} accesses, and
\emph{atomic read-modify-write operations} such as atomic test-and-set.
\notes{
Atomic accesses and operations are the building blocks for implementing lock-free data structures such as those
in the \code{java.util.concurrent} package.
\mike{We don't want to oversell this, since as we mention in Impl, we don't actually handle any of these.}
}%

The following pseudocode shows how we extend \WBR analysis (Algorithm~\ref{alg:wbr-analysis})
to handle atomic reads and writes and atomic operations.
(The extensions to \BR analysis are similar but also add conflicting read--write and write--write--read ordering.)

\begin{algorithmic}[1]
\small

\Procedure{atomicRead}{$t, x, e$}
\State \textbf{let} $t' \gets T_x$ \Comment{Get last writer thread of $x$}
\If{$t' \notin \{ \nolastwr, t \} \land \amsbb{W}_x \nsqsubseteq \amsbb{C}_t$} \Comment{Write--read conflict}\label{line:wbr-analysis:wr-rd-detected-volatile}
\State $\amsbb{B}_{t,x} \gets \amsbb{W}_x$ \Comment{Record the write} \label{line:wbr-analysis:defer-release-time-volatile}
\lIf{$\amsbb{B}_{t,x} \nsqsubseteq \amsbb{C}_t$}{$\update{D_t}{\cup \{\langle x, e \rangle\}}$} \label{line:wbr-analysis:add-to-d_t-volatile}
\EndIf
\EndProcedure

\Procedure{atomicWrite}{$t, x$}
\State $\amsbb{W}_x \gets \amsbb{C}_t$
\State $T_x \gets t$ \Comment{Set last writer thread of $x$} \label{line:wbr-analysis:update-last-writer-volatile}
\EndProcedure

\Procedure{atomicReadModifyWrite}{$t, x, e$}
  \State \textsc{atomicRead}($t$, $x$, $e$)
  \State \textsc{atomicWrite}($t$, $x$)
\EndProcedure

\end{algorithmic}

In essence, the analysis handles atomic accesses
like regular accesses contained in single-access critical sections on a unique lock to the accessed variable.
The analysis treats an atomic operation as an atomic read followed by an atomic write.

\paragraph{Handling races.}
\label{sec:analyses:handling-races}


The behavior of programs with data races is unreliable~\cite{memory-models-cacm-2010,dolan-bounding-races},
but our analyses' instrumentation performs synchronization operations before accesses, which generally ensures sequential consistency (SC)
for all program executions.
A different problem is that if an analysis continues detecting races after the first race,
then additional detected races are not necessarily real races because they may depend on an earlier race
(\ie, if the earlier race were ordered, the later race would not exist).
Our implementation (Section~\ref{subsec:implementation}) addresses this issue by
treating racing accesses as if they were contained in
single-access critical sections on the same lock.
Specifically, \BR analysis orders one racing event to the other for write--read and read--write races,
and \WBR analysis orders write--read races to a branch that depends on the read if the write is the last writer of the read.
For example, after detecting a race in line~\ref{line:wbr-analysis:check-wr-rd-race}, 
\WBR analysis performs the following: $\textbf{if} \: t' = T_x \: \textbf{then} \; \amsbb{B}_{t,x} \gets \amsbb{W}_x(t');\: D_t \gets D_t \cup \{ \langle x, e \rangle \}$.

\paragraph{Time and space complexity.}

The run times of
\BR and \WBR analyses are each linear in the number of events.
For \WCP and \DC analyses~\cite{wcp,vindicator}, for $N$ events, $L$ locks held at once by a thread, and $T$ threads,
time complexity for an entire execution trace is $O(N \times (L \times T + T^2))$.
For \BR analysis, time complexity is generally the same as for \WCP analysis,
except that the run time of
line~\ref{line:br-analysis:xdp-update-outer} in Algorithm~\ref{alg:br-analysis}
is $O(L^2 \times T)$.
As a result, \BR's run time is
$O(N \times (L^2 \times T + T^2))$.
Since $L$ is the number of locks \emph{simultaneously} held by a single thread,
we expect it to be small in practice.
For \WBR analysis, time complexity for each event type is the same as for \DC analysis,
and the run time for branch events can be amortized over read events.

%
%
%
%

\notes{
Fortunately, space usage is sub-linear in $N$ in practice.
\mike{We probably shouldn't claim that since we don't show space results. :)}}

\section{Verifying \WBR-Races}
\label{sec:vindication-summary}

\WBR analysis is unsound, so a \WBR-race may not indicate a predictable race
(\ie, there may be no race exposed in any predictable trace).
To avoid reporting false races,
our approach post-processes each detected \WBR-race with an algorithm called \checkWBRRace.
Here we overview \checkWBRRace;
\iftoggle{extended-version}{Appendix~\ref{sec:vindication-full}}{the extended arXiv version~\cite{depaware-extended-arxiv}}
presents \checkWBRRace in detail with an algorithm and examples.

To support performing \checkWBRRace on \WBR-races,
\WBR analysis builds a \emph{constraint graph} in which execution events are nodes,
and initially edges represent \WBR ordering.
\CheckWBRRace discovers and adds additional constraints to the graph
that enforce lock semantics (\LS) and last-writer (\LW) rules.
\CheckWBRRace uses the constraint graph to attempt to construct a predictable trace
that exposes the \WBR-race as a predictable race.


\CheckWBRRace extends prior work's \checkDCRace algorithm for checking \DC-races~\cite{vindicator}.
\CheckWBRRace differs from \checkDCRace primarily in the following way.
\CheckWBRRace computes and adds \LW constraints to the constraint graph
for all reads that must be causal events in the predictable trace.
Importantly, \CheckWBRRace computes causal reads and adds \LW constraints at each iteration of adding constraints
and at each attempt at building a predictable trace.

Algorithm~\ref{alg:abbrv-vindication} shows \checkWBRRace at a high level;
\iftoggle{extended-version}{Appendix~\ref{sec:vindication-full}}{the extended arXiv version~\cite{depaware-extended-arxiv}}
presents \checkWBRRace in detail.
\CheckWBRRace takes the initial constraint graph (\Gv) and a \WBR-race ($\event{e}{_1}, \event{e}{_2}$) as input (line~\ref{line:abbrv:vindicate-inputs}).
It first calls \textsc{AddConstraints} (line~\ref{line:abbrv:call-addconstraints}),
which adds necessary constraints to \Gv and returns an updated constraint graph.
\textsc{AddConstraints} first adds \emph{consecutive-event} constraints (\ie, edges) to \Gv
to enforce that any predictable trace must execute \event{e}{_1} and \event{e}{_2} consecutively (line~\ref{line:abbrv:consecutive-event}).
\textsc{AddConstraints} then computes the set of causal events for any predictable trace constrained by \Gv,
which it uses to add \LW constraints to \Gv, ensuring that every causal read
in a predictable trace can have the same last writer as in the original trace (line~\ref{line:abbrv:last-writer}).
Next, \textsc{AddConstraints} adds \LS constraints to \Gv,
by identifying critical sections on the same lock that are partially ordered
and thus must be fully ordered to obey \LS rules (line~\ref{line:abbrv:lock-semantics}).
Since added \LW and \LS constraints may lead to new \LS and \LW constraints being detected,
respectively, \textsc{AddConstraints} iterates until it finds no new constraints to add (lines~\ref{line:abbrv:do}--\ref{line:abbrv:while}).


\begin{algorithm*}[t]
\algdef{SE}[DOWHILE]{Do}{doWhile}{\algorithmicdo}[1]{\algorithmicwhile\ #1}%
\renewcommand{\CORE}{causal\xspace}
\renewcommand\GetCOREReads{\textsc{GetCausalReads}\xspace}
\caption{\hfill Check if \WBR-race is a true predictable race (high-level version of algorithm)}
\begin{small}
\end{small}
\small
\begin{algorithmic}[1]
\Procedure{\CheckWBRRace}{$\Gv, \event{e}{_1}, \event{e}{_2}$} \Comment{Inputs: constraint graph and \WBR-race events}\label{line:abbrv:vindicate-inputs}
	\State $\Gv \gets \textsc{AddConstraints}(G, \event{e}{_1},\event{e}{_2})$\label{line:abbrv:call-addconstraints}
	\lIf {\Gv has a cycle reaching \event{e}{_1} or \event{e}{_2}}{\textbf{return} \underline{No predictable race}}\label{line:abbrv:cycle}
	\State \algorithmicelse\
	\Indent
		\State $\trPrime \gets \textsc{ConstructReorderedTrace}(\Gv, \event{e}{_1}, \event{e}{_2})$ \Comment{Non-empty iff predictable trace constructed}\label{line:abbrv:call-constructreorderedtrace}
		\lIf {$\trPrime \ne \langle \, \rangle\,$}{\textbf{return} \underline{Predictable race witnessed by \trPrime}} \Comment{Check for non-empty trace}\label{line:abbrv:predictable-race}
		\lElse{\textbf{return} \underline{Don't know}}\label{line:abbrv:dont-know}
	\EndIndent
\EndProcedure

\medskip

\Procedure{AddConstraints}{$\Gv, \event{e}{_1}, \event{e}{_2}$}
	\State Add consecutive-event constraints to \Gv\label{line:abbrv:consecutive-event}
	\Do\label{line:abbrv:do}
		\State Compute causal reads and add last-writer (\LW) constraints to \Gv\label{line:abbrv:last-writer}
		\State Add lock-semantics (\LS) constraints to \Gv\label{line:abbrv:lock-semantics}
	\doWhile {$G$ has changed}\label{line:abbrv:while}
	\State \textbf{return} $\Gv$
\EndProcedure
\end{algorithmic}
\label{alg:abbrv-vindication}
\end{algorithm*}

The constraints added by \textsc{AddConstraints} are necessary but insufficient constraints on
any trace exposing a predictable race on \event{e}{_1} and \event{e}{_2}.
Thus if \Gv has a cycle that must be part of any predictable trace, then
the original trace has no predictable race on \event{e}{_1} and \event{e}{_2}
(line~\ref{line:abbrv:cycle}).
Otherwise, \textsc{AddConstraints} calls \textsc{ConstructReorderedTrace} (line~\ref{line:abbrv:call-constructreorderedtrace}),
which attempts to construct a legal predictable trace \trPrime.
\textsc{ConstructReorderedTrace} is a greedy algorithm that starts from \event{e}{_1} and \event{e}{_2}
and works backward, adding events in reverse order that satisfy \Gv's constraints
and also conform to \LS and \LW rules (\Gv's constraints are necessary but insufficient).
If \textsc{ConstructReorderedTrace} returns a (non-empty) trace \trPrime,
it is a legal predictable trace exposing a race on \event{e}{_1} and \event{e}{_2} (line~\ref{line:abbrv:predictable-race}).
Otherwise, \textsc{ConstructReorderedTrace} returns an empty trace,
which means that it could not find a predictable race, although one may exist (line~\ref{line:abbrv:dont-know}).

\section{Evaluation}
\label{sec:evaluation}


This section evaluates the predictive race detection effectiveness and
run-time performance of this paper's approaches.


\subsection{Implementation}
\label{subsec:implementation}

We implemented \BR and \WBR analyses and \checkWBRRace by extending the publicly available \emph{Vindicator} implementation,
which includes \HB, \WCP, and \DC analyses and
\checkDCRace~\cite{vindicator}.\footnote{\url{https://github.com/PLaSSticity/Vindicator}}
Vindicator is built on top of \emph{RoadRunner}, a dynamic analysis framework for concurrent Java
programs~\cite{roadrunner}.\footnote{\url{https://github.com/stephenfreund/RoadRunner/releases/tag/v0.5}}
We extended RoadRunner to instrument branches to enable \WBR analysis at program branches.
RoadRunner operates on the Java bytecode of analyzed programs, so analysis properties such as \BR soundness and \WBR completeness hold with respect to the execution of the bytecode, even if the JVM compiler optimizes away control or data dependencies.

Our implementation of \BR and \WBR analyses and \checkWBRRace is publicly
available.\footnote{\url{https://github.com/PLaSSticity/SDP-WDP-implementation}}
\mike{TODO: update publicly available implementation with fixed \BR and other changes}

We evaluated \emph{Joana} to perform static analysis for detecting whether a
branch depends on prior reads or not~\cite{joana}, following the system
dependency graphs used in \mbox{MCR-S}~\cite{mcr-s}. We found no practical advantages to
using Joana. In most programs, for the vast majority of the write--read branch dependencies executed, the next branch after the read is dependent on the read according to Joana.
In \bench{pmd} and \bench{sunflow}, static analysis reported many write--read dependencies where the following branch did not depend on the read, but this did not lead to any additional \WBR-races being detected.
It is unclear whether these results are mainly due to properties of the evaluated programs
(\ie, if almost all branches do depend on prior reads) or imprecision of Joana's static analysis.
Our implementation and evaluation do not use static analysis, and instead assume that branches always depend on prior reads.

\paragraph{\BR and \WBR analyses.}

We implemented a single analysis tool within RoadRunner
that can perform \HB, \WCP, \DC, \BR, and \WBR analyses on a single observed execution.
The implementation of \HB, \WCP, and \DC analyses are taken from the Vindicator implementation,
and implementation of \BR and \WBR analyses follows Algorithms~\ref{alg:br-analysis} and \ref{alg:wbr-analysis}.
For thread fork and join (including implicitly forked threads~\cite{vindicator})
and static class initializer edges~\cite{jvm-spec},
each analysis adds appropriate ordering between the two synchronizing events.
The analyses treat calls to \code{m.wait()} as a release of \code{m} followed by an acquire of \code{m}.
The analyses instrument \code{volatile} variable accesses as \emph{atomic accesses} as
described in Section~\ref{subsec:atomic-accesses}.
The analyses can in theory handle lock-free data structures,
such as data structures in \code{java.\allowbreak util.\allowbreak concurrent},
by handling atomic operations as in Section~\ref{subsec:atomic-accesses}.
However, RoadRunner instruments only application code, not Java library code,
and it does not intercept underlying atomic operations (\eg, by instrumenting calls to atomic \code{sun.\allowbreak misc.\allowbreak Unsafe} methods).
The analyses may thus miss some synchronization in the evaluated programs.

The analyses can determine that some observed events are ``redundant'' and cannot affect
the analysis results.
For a read or write event, if the same thread has performed a read or write,
respectively, to the same variable without an intervening synchronization operation, then the access is redundant.
For a branch event, if the same thread has not performed a read event since the last branch event,
then the branch is redundant (since our implementation
assumes that a branch is dependent on all prior reads).
The implementation ``fast path'' detects and filters redundant events,
and does not perform analysis for them.

The implementation is naturally parallel because application threads running in parallel perform analysis.
The implementation uses fine-grained synchronization on metadata to ensure atomicity of the analysis for an event.
For \WBR analysis, to obtain an approximation of \ltTR
(needed by vindication; see
\iftoggle{extended-version}{line~\ref{line:backReorder:pick-latest-event} of Algorithm~\ref{alg:verify-race} in Appendix~\ref{sec:vindication-full}),}
{the extended arXiv version~\cite{depaware-extended-arxiv}),}
the implementation assigns each event node in the constraint graph a Lamport timestamp~\cite{happens-before} that respects \HB order:
$\HBOrdered{\event{e}{}}{\event{e}{'}} \implies \mathit{ts}(\event{e}{}) < \mathit{ts}(\event{e}{'})$.

\paragraph{Handling races.}

To keep finding real races after the first detected race,
whenever an analysis detects a race, it updates vector clocks (and \WBR's constraint graph)
so that the execution so far is race free.
\BR and \WBR analyses treat racing accesses as though minimal critical sections on the same lock protected them,
as described in Section~\ref{sec:analyses:handling-races}.
\HB, \WCP, and \DC analyses handle detected races
by adding ordering between all accesses.

If an analysis detects multiple races involving the current access,
it reports only one of the races---the best-estimate ``shortest'' race according to the Lamport timestamps---but adds ordering to eliminate all of the races.

\notes{
The \WBR analysis algorithm, in line~\ref{line:wbr-analysis:wr-rd-detected}, computes the
sum of release times for every common lock that was held for the write and read
accesses. However, note that all these locks were held by the same thread during
the last write. As a result, the release time of the outermost lock will be
greater than the release time of any inner locks. Thus it is sufficient to find
the set of common locks, and then pick the release time of the one that was the
outermost during the write.
}

\later{
\paragraph{Detecting dependencies}

Explain how we use the system dependency graph.
}

\paragraph{Vindication.}

\WBR analysis constructs a constraint graph representing the observed execution's \WBR ordering.
When the execution completes,
the implementation calls \checkWBRRace on a configurable subset of the \WBR-races,
\eg, each \WBR-race that is not also a \BR-race.

\notes{
We implemented \checkWBRRace based on the algorithm presented in Figure~\ref{alg:verify-race}
\iftoggle{extended-version}{in Appendix~\ref{sec:vindication-full}.}{in the extended arXiv version~\cite{dp-extended-arxiv}.}
The \checkWBRRace extends the implementation of \checkDCRace,
which uses optimizations including limiting computation of constraints to a window of events
that initially includes events between the racing accesses and may grow as constraints are added~\cite{vindicator}.
\CheckWBRRace soundly extends this optimization to construct a predictable trace only for events in the window,
since the original execution before the window is guaranteed to be a well-formed prefix for the reordered execution.

\notes{
\kaan{Removed the old explanation since we no longer explain these details of the algorithm to begin with.}
We implemented \checkWBRRace based on Algorithm~\ref{alg:verify-race}.
When computing \LW and \LS constraints,
\textsc{AddConstraints} can generally safely consider only events within a window of events between \event{e}{_1} and \event{e}{_2} in \ltTR order
(conservatively approximated using events' Lamport timestamps).
The implementation grows the window as it adds constraints outside of the current window.
Intuitively, it it is sufficient to find \LW constraints
within the window because, for any earlier event, the latest-first ordering heuristic
(line~\ref{line:backReorder:pick-latest-event} of Algorithm~\ref{alg:verify-race})
ensures the algorithm will not attempt to violate \LW outside of the window.
}

Although \checkWBRRace is sound by design, as a sanity check
we developed an \emph{independent checker} that verifies that \trPrime is a valid predictable trace.
This checker operates independently of \WBR analysis and \checkWBRRace and does not rely on the correctness of the constraint graph to check
that \trPrime is a predictable trace of \tr.
The checker---which was useful for finding bugs during implementation---did not report any invalid traces in the following experiments.
}

\subsection{Methodology}
\label{subsec:methodology}



\later{
\kaan{Reviewer B says ``The evaluation uses only an old, non-representative set of programs. While DaCapo has been commonly used in prior work, and it's good to compare the results on it, the evaluation would be much more convincing if the new and old analyses were run also on some more recent programs with real inputs, ideally finding races in the current program versions.''
\mike{Yeah that'd be nice to pick one program, and RoadRunner might work out of the box.
(I think DaCapo's custom class loading is the main reason it needs special handling in RoadRunner.)}
\mike{TODO: I think this is worth doing. Pick a widely used Java program (we can discuss) and run it with your analyses?}}
}

The experiments execute large, real Java programs harnessed as
the DaCapo benchmarks~\cite{dacapo-benchmarks-conf}, version 9.12-bach.
We use a version of the DaCapo programs that the RoadRunner authors have modified to work with
RoadRunner;\footnote{\url{https://github.com/stephenfreund/RoadRunner/releases/tag/v0.5}}
the resulting workloads are approximately equal to DaCapo's default workload.
The experiments exclude DaCapo programs \bench{eclipse}, \bench{tradebeans}, and \bench{tradesoap},
which the RoadRunner authors have not modified to run with RoadRunner;
\bench{jython}, which failed to run with RoadRunner in our environment;
and the single-threaded program \bench{fop}.

The experiments execute on a quiet Intel Xeon
E5-4620 with four 8-core processors with hyperthreading disabled and
256 GB of main memory, running Linux 3.10.0.
We execute RoadRunner with the HotSpot 1.8.0 JVM and set the maximum heap size to 245 GB.

We run various combinations of the analyses to collect race results and statistics and measure performance.
To account for run-to-run variation,
each reported result is the
mean of five trials.

Each \WBR-race in an execution is a \emph{dynamic} \WBR-race (similarly for \BR-, \DC-, \WCP-, and \HB-races).
Among dynamic \WBR-races,
some may be detected at the same static accesses.
If two dynamic \WBR-races have the same two static source location regardless of order,
then they are the same \emph{static} \WBR-race (similarly for \BR-, \DC-, \WCP-, and \HB-races).

\subsection{Dynamic Characteristics}

\newcommand{\M}[1]{#1~M}
\newcommand\colname[1]{\emph{#1}}
\newcommand{\slowdown}[1]{#1$\;\!\times$}

Table~\ref{tab:events} shows properties of the analyzed programs.
The
\colname{\#Thr} column reports total threads created by an execution and,
in parentheses, threads active at termination.
The rest of the columns count events from \WBR analysis; other analyses are similar but exclude branch events.
\colname{Total events} are all executed events instrumented by the analysis.

\begin{table}
\newcommand{\slowpath}[1]{\M{#1}}
\newcommand{\maxlive}[1]{(#1)}

\small
\centering
\begin{tabular}{@{}l|rr|r|r@{\quad(}rrrr@{\,)}H}
        & &                   & Total  & \mc{5}{c}{Analyzed events} & \\
& \mc{2}{c|}{\#Thr} & events & All & \code{acq}/\code{rel} & \code{wr} & \code{rd} & \BrUniv & Other \\\hline
\bench{avrora} & 7 & (7) & \M{2,400} & \M{260}&1.2\% & 17.7\% & 42.2\% & 38.4\% \\
\bench{batik} & 7 & (7) & \M{490} & \M{17}&0.6\% & 26.3\% & 38.4\% & 34.0\% \\
\bench{h2} & 34 & (33) & \M{9,368} & \M{768}&0.5\% & 17.1\% & 43.1\% & 39.1\% \\
\bench{luindex} & 3 & (3) & \M{910} & \M{72}&0.6\% & 20.1\% & 42.4\% & 36.9\% \\
\bench{lusearch} & 34 & (34) & \M{2,746} & \M{301}&0.9\% & 19.5\% & 43.6\% & 35.6\% \\
\bench{pmd} & 33 & (33) & \M{403} & \M{41}&$<\,$0.1\% & 28.5\% & 37.3\% & 34.2\% \\
\bench{sunflow} & 65 & (33) & \M{14,452} & \M{887}&$<\,$0.1\% & 44.7\% & 41.4\% & 13.8\% \\
\bench{tomcat} & 106 & (67) & \M{113} & \M{29}&2.8\% & 18.7\% & 42.1\% & 36.1\% \\
\bench{xalan} & 33 & (33) & \M{1,306} & \M{436}&2.1\% & 12.0\% & 48.8\% & 37.1\% \\
\end{tabular}

\later{
\mike{The maxlive number for tomcat (10) is different from the Vindicator paper (22). Weird.}
}



\caption{Dynamic characteristics of the analyzed programs.
Event counts (shown in millions) and percentages are collected from \WBR analysis; other analyses do not analyze branch events.}

\label{tab:events}

\end{table}

\colname{Analyzed events} are the events \emph{not} filtered by the fast path that detects redundant events.
The rest of the columns show the breakdown of analyzed events
by event type.
The percentages do not add up to 100\% because they do not include other events
(\eg, fork, join, \code{wait}, \code{volatile} access, and static class initializer events),
which are always less than 1\% of analyzed events.
Unsurprisingly, most analyzed events are memory accesses or branches.

\subsection{Race Detection Effectiveness}



\later{
\kaan{Reviewer C asks ``What were the characteristics of tomcat and xalan that caused such
a large spike between DC and DP (unlike the other benchmarks which
seemed to be very comparable)?''
\mike{TODO: Instead of trying to answer this directly,
(1) perhaps we should discuss the \code{pmd} race here (not just in the motivation), and/or
(2) try out a non-benchmarked program and (assuming it has ``new'' races), explain the new races.}}
}

\later{
\kaan{Reviewer A says ``Evaluation methodology compares against other data race detectors,
   but does not compare against schedule exploration paired with race detection.''
\mike{It's worth thinking about trying to do this in the future.
A simpler evaluation for now would be to measure the efficacy of multiple runs with \HB compared with one run with \WBR.}
\mike{Or: do this investigation with the evalauted non-benchmarked program.
If it's not benchmarked/harnessed, then perhaps some kind of evaluation of analyses across multiple runs would be more interesting?}}
}


Table~\ref{tab:races-counts} reports detected races for two different experiments that each run a combination of analyses
on the same executions.
Table~\ref{tab:races-counts:br}'s results are from an experiment that runs
\HB, \WCP, and \BR analyses together on the same executions, to compare these analyses' race detection capabilities directly.
Likewise, a separate experiment runs \DC and \WBR analyses together on the same executions to make them directly comparable,
resulting in Table~\ref{tab:races-counts:wbr}'s results.

For each race count, the first value is static races,
followed by dynamic races in parentheses.
For example, on average over the five trials,
the analysis detects about 406,000 \WBR-races for \bench{avrora},
which each correspond to one of 5 different unordered pairs of static program locations.

\begin{table*}
\centering
\small
\newcommand{\none}{\mc{1}{r@{$\;\;\textcolor{white}{\rightarrow}\;\;$}}{0}}
\newcommand{\K}[2]{#2$\;\!$K}
\newcommand{\MM}[2]{#2$\;\!$M}

\captionsetup{position=bottom}

\subfloat[\HB, \WCP, and \BR analyses on the same executions.]{
\begin{tabular}{@{}l|rr|rr|rr@{}ZZZZ}
Program & \mc{2}{c|}{\HB-races} & \mc{2}{c|}{\WCP-races} & \mc{2}{c@{}}{\BR-races} & \mc{2}{Z}{\WDC-races} & \mc{2}{Z}{\WBR-races} \\
\hline
\bench{avrora} & 5 & (\K{202034}{202}) & 5 & (\K{202960}{203}) & 5 & (\K{202963}{203}) &  \\
\bench{batik} & 0 & (0) & 0 & (0) & 0 & (0) &  \\
\bench{h2} & 12 & (\K{52704}{53}) & 12 & (\K{52788}{53}) & 12 & (\K{53538}{54}) &  \\
\bench{luindex} & 1 & (1) & 1 & (1) & 1 & (1) &  \\
\bench{lusearch} & 0 & (0) & 0 & (0) & 0 & (0) &  \\
\bench{pmd} & 8 & (436) & 8 & (443) & 10 & (651) &  \\
\bench{sunflow} & 2 & (20) & 2 & (26) & 2 & (26) &  \\
\bench{tomcat} & 98 & (\K{35935}{36}) & 99 & (\K{35956}{36}) & 103 & (\K{38752}{39}) &  \\
\bench{xalan} & 7 & (208) & 15 & (\K{531236}{531}) & 37 & (\K{2072070}{2072}) &  \\

\end{tabular}%
\label{tab:races-counts:br}}
\hfill
\subfloat[\DC and \WBR analyses on the same executions.]{
\begin{tabular}{@{}l|HHHHHHrr|rr@{}}
Program & \mc{2}{Z}{\HB-races} & \mc{2}{Z}{\WCP-races} & \mc{2}{Z}{\BR-races} & \mc{2}{c|}{\WDC-races} & \mc{2}{c@{}}{\WBR-races} \\
\hline
\bench{avrora} & 5 & (\K{204514}{205}) & 5 & (\K{206374}{206}) & 5 & (\K{206375}{206}) & 5 & (\K{203497}{203}) & 5 & (\K{406047}{406}) \\
\bench{batik} & 0 & (0) & 0 & (0) & 0 & (0) & 0 & (0) & 0 & (0) \\
\bench{h2} & 9 & (\K{51683}{52}) & 9 & (\K{51782}{52}) & 9 & (\K{53237}{53}) & 11 & (\K{53768}{54}) & 12 & (\K{62526}{63}) \\
\bench{luindex} & 1 & (1) & 1 & (1) & 1 & (1) & 1 & (1) & 1 & (1) \\
\bench{lusearch} & 0 & (0) & 0 & (0) & 0 & (0) & 0 & (0) & 1 & (30) \\
\bench{pmd} & 6 & (351) & 6 & (354) & 8 & (562) & 9 & (\K{1678}{2}) & 10 & (\K{3183}{3}) \\
\bench{sunflow} & 2 & (19) & 2 & (25) & 2 & (25) & 2 & (49) & 2 & (100) \\
\bench{tomcat} & 85 & (\K{33885}{34}) & 86 & (\K{33911}{34}) & 91 & (\K{37812}{38}) & 94 & (\K{35875}{36}) & 284 & (\K{125078}{125}) \\
\bench{xalan} & 6 & (203) & 21 & (\K{520406}{520}) & 52 & (\MM{2187865}{2.2}) & 17 & (\K{648507}{649}) & 170 & (\MM{15324942}{15}) \\
\end{tabular}%
\label{tab:races-counts:wbr}}

\caption{Static and dynamic (in parentheses) race counts from two different experiments.\label{tab:races-counts}}

\end{table*}



Table~\ref{tab:races-counts:br} shows that
\BR analysis finds significantly more races than not only \HB analysis but also \WCP analysis---the state of the art in
unbounded sound predictive race detection~\cite{fasttrack2,wcp}.
These additional races are due to \BR incorporating data dependence more precisely than \WCP
by not ordering write--write conflicting critical sections,
essentially permitting predictable traces that swap writes without changing a causal read's last writer.


Likewise, Table~\ref{tab:races-counts:wbr} shows that
\WBR analysis finds more races than \DC analysis,
the state of the art in high-coverage unbounded predictive race detection~\cite{vindicator}.
These additional races result from \WBR being more precise with respect to both data and control dependence
than \DC, and in fact being complete.

The counts of \HB-, \WCP-, and \WDC-races we
report here are significantly different from those reported by
the Vindicator paper~\cite{vindicator}.
(While the counts are not directly comparable, both papers show similar trends between relations.)
The most significant cause of this effect is that
RoadRunner stops tracking a field after the field has 100 races,
a behavior that Vindicator used but that we disabled for these results to avoid artificially underreporting race counts.
Furthermore, our analyses do not use a Vindicator optimization
that merges events, reducing the number of races reported when there are multiple races between synchronization-free regions.
We disabled this optimization because \WBR analysis must track variable access information for each event, negating the advantages of this optimization.
Another difference is that the Vindicator experiments spawned fewer threads for some benchmarks by setting the number of available cores to 8.


Table~\ref{tab:races-vindication} reports results from an experiment that
runs \BR and \WBR analyses together and then performs \checkWBRRace on \emph{\WBR-only races},
which are \WBR-races that are not also \BR-races.
The \colname{\BR-races} and \colname{\WBR-races} columns report static and dynamic races,
as in Table~\ref{tab:races-counts}.
The \colname{\WBR-only} column is \emph{static} \WBR-only races, which are static \WBR-races that have no dynamic instances that are \BR-races.
The last column, \emph{\WBR-only $\rightarrow$ Verified}, reports
how many static \WBR-only races are detected
and how
many are successfully vindicated as true races by \checkWBRRace.
In this experiment,
the implementation tries to vindicate up to 10 dynamic instances of each static \WBR-only race.
The implementation first attempts to vindicate the five earliest dynamic instances of a static \WBR-only race,
then five random dynamic instances, stopping as soon as it verifies any dynamic instance of the static race.

\begin{table*}
\centering
\small
\newcommand{\none}{\mc{1}{r@{$\;\;\textcolor{white}{\rightarrow}\;\;$}}{0}}
\newcommand{\K}[2]{#2$\;\!$K}
\newcommand{\MM}[2]{#2$\;\!$M}

\begin{tabular}{@{}l|rr|rr|r@{$\;\;\rightarrow\;\;$}l@{}}
    Program & \mc{2}{c|}{\BR-races} & \mc{2}{c|}{\WBR-races} & \WBR-only & Verified \\
    \hline
    \bench{avrora} & 5 & (\K{201613}{202}) & 5 & (\K{406567}{407}) & \none \\
    \bench{batik} & 0 & (0) & 0 & (0) & \none \\
    \bench{h2} & 12 & (\K{53431}{53}) & 12 & (\K{62606}{63}) & 1 &  0  \\
    \bench{luindex} & 1 & (1) & 1 & (1) & \none \\
    \bench{lusearch} & 0 & (0) & 1 & (30) & 1 &  0  \\
    \bench{pmd} & 9 & (470) & 11 & (\K{2944}{3}) & 1 & 1 \\
    \bench{sunflow} & 2 & (38) & 2 & (100) & \none \\
    \bench{tomcat} & 99 & (\K{37642}{38}) & 321 & (\K{125743}{126}) & 222 & 53 \\
    \bench{xalan} & 33 & (\MM{1898266}{1.8}) & 170 & (\MM{15277795}{15.2}) & 137 & 135 \\
\end{tabular}
\caption{Static and dynamic (in parentheses) race counts from an experiment running
\BR and \WBR analyses together and vindicating dynamic instances of static \WBR-only races.
The \colname{\WBR-only $\rightarrow$ Verified} column reports static \WBR-only races,
followed by how many static \WBR-only races were verified as predictable races by \checkWBRRace.\label{tab:races-vindication}}

\newcommand{\stdev}[1]{#1}

\small
\begin{tabular}{@{}l|r@{$\;\;\pm\;$}r|r@{}}
& Mean & \stdev{Stdev} & Max \\
\hline
\bench{pmd} & 805,000 & \stdev{1,750,000} & 4,721,218 \\
\bench{tomcat} & 4,560,000 & \stdev{6,070,000} & 28,750,242 \\
\bench{xalan} & 50,900 & \stdev{96,800} & 1,170,850 \\
\end{tabular}

\caption{Characteristics of the distribution of event distances of \WBR-only races that are verified predictable races.
The table rounds the mean and standard deviation to three significant digits.}
\label{tab:race-distances}

\notes{
\kaan{I also added a commented-out version with window sizes (i.e. how many events were in the reordered trace for verified races, when window-only vindication is turned on). I'm not sure if it makes sense to add that table to the paper, but it better describes why h2 takes a large amount of time.
\mike{We mention this information in the Performance subsection.}}
\begin{tabular}{@{}l|r@{$\;\;\pm\;$}r|r||r@{$\;\;\pm\;$}r|r@{}}
\small
        & \mc{3}{c||}{Race Distance} & \mc{3}{c}{Window Size} \\
Program & Mean & \stdev{Stdev} & Max & Mean & \stdev{Stdev} & Max \\
\hline
pmd & 65873 & \stdev{73172} & 203713 & 2494068 & \stdev{519869} & 3163329 \\
tomcat & 5463200 & \stdev{5622417} & 29417813 & 12325413 & \stdev{7135659} & 29435885 \\
xalan & 52903 & \stdev{90384} & 625293 & 427008 & \stdev{774949} & 5508859 \\
\end{tabular}
\kaan{Out of date!}
}

\end{table*}
  
Over half of the static \WBR-only races are verified predictable races: out of 359 static \WBR-only races on average,
189 are verified predictable races.
As Section~\ref{Sec:eval:perf} shows, it can take a few minutes for \CheckWBRRace to check a \WBR-race.
Given the difficulty and importance of detecting unknown, hard-to-expose data races in mature software---and
the amount of time developers currently spend on testing and debugging---the time for \checkWBRRace is reasonable.

\notes{
\kaan{There are still some differences between both tables, although I'm not sure why.
\mike{A quick look at the results suggests that the run-to-run variation may be a plausible explanation.
For example, for \bench{tomcat}'s static \WBR-races between the two configurations, the confidence intervals overlap.}}
}



\notes{
\kaan{Removed that \WBR never fails for acyclic graphs, since we never discuss AddConstraints within the main body of the paper anyway. (This is still true if we want to add this back in.)
\mike{That seems reasonable.}}
}

We confirmed that in all of the experiments, \BR analysis detected
every race detected by \WCP analysis, and \WBR analysis detected every race detected by \DC or \BR analysis.
For sanity, we also successfully vindicated every reported \BR-only race.

\paragraph{Race characteristics.}

SMT-solver-based predictive race detectors can be as precise
as \BR and \WBR analyses, but cannot scale to unbounded program
executions~\cite{rvpredict-pldi-2014,said-nfm-2011,rdit-oopsla-2016,jpredictor,maximal-causal-models,ipa}
(Section~\ref{sec:related}).
These approaches typically analyze bounded windows of an execution trace,
missing races involving ``far apart'' events.
We can estimate whether SMT-based approaches would miss a predictable race
by computing the race's \emph{event distance},
which is the number of events in the execution trace between the race's two events.
Since our implementation does not compute a total order of events,
it approximates event distance using Lamport timestamps:
event distance is the number of events \event{e}{} such that
$\mathit{ts}(\event{e}{_1}) < \mathit{ts}(\event{e}{}) < \mathit{ts}(\event{e}{_2})$.

Table~\ref{tab:race-distances} reports the distribution of event distances between accesses in
each successfully vindicated \WBR-only race (\ie, the last column of Table~\ref{tab:races-vindication}). The average
distance and standard deviation are across all trials.
The last column reports the greatest distance found among all trials.

\subsection{Performance}
\label{Sec:eval:perf}


\later{
\kaan{Reviewer A says ``Whether using explicit schedule exploration or not, how does working
harder to extrapolate from *one* trace compare with running a weaker
tool (HB, or WCP, or DC), but running it on traces sampling multiple
schedules?''
\mike{Related to the previous comment about schedule exploration,
we could answer this question with our existing experiments,
by looking at the average difference between one trial of HB (or WCP or DC) and multiple trials of HB in terms of races found.}}
}

Table~\ref{tab:performance} reports the run-time performance of various combinations of analyses.
\colname{Base} is native execution time without any instrumentation.
\notes{but using RoadRunner's harness functionality to ensure that the workloads match
the workloads used in the analyses.}%
Other columns (excluding \colname{Failed} and \colname{Verified}) are
slowdowns relative to \colname{Base}.

The \colname{Instr.\ only} columns are
RoadRunner configurations that instrument events
(excluding or including branches) but perform no analysis in the instrumentation.

The \colname{Analyses w/o constraint graph} show configurations that do not construct a constraint graph.
Only configurations including \WBR analysis instrument branch events.
The \colname{\WCP} and \colname{\BR} columns show the slowdowns
from running the \WCP and \BR analyses independently;
the performance difference between them is usually modest,
suggesting that there is minimal performance penalty from using \BR analysis over \WCP analysis.

\begin{table*}
\newcommand{\seconds}[1]{#1~s}

\small

\begin{tabular}{@{}l|r|rr|rrrr|rrr@{}}
        &      & \mc{2}{c|}{Instr. only} & \mc{4}{c|}{Analyses w/o constraint graph} & \mc{3}{c}{\BR{}+\WBR{}+graph} \\
        & Base & w/o br & w/ br & \WCP & \BR & \BR{}+\WDC & \BR{}+\WBR & Slowdown & Failed & Verified \\
\hline
\bench{avrora} & \seconds{7.7} & \slowdown{2.1} & \slowdown{2.6} & \slowdown{17} & \slowdown{20} & \slowdown{28} & \slowdown{34} & \slowdown{42} & -  & -  \\
\bench{batik} & \seconds{3.3} & \slowdown{3.9} & \slowdown{5.5} & \slowdown{19} & \slowdown{20} & \slowdown{24} & \slowdown{33} & \slowdown{32} & -  & -  \\
\bench{h2} & \seconds{9.4} & \slowdown{6.6} & \slowdown{9.3} & \slowdown{129} & \slowdown{139} & \slowdown{202} & \slowdown{212} & \slowdown{280}& \seconds{233} & -  \\
\bench{luindex} & \seconds{1.4} & \slowdown{5.9} & \slowdown{11} & \slowdown{91} & \slowdown{93} & \slowdown{118} & \slowdown{139} & \slowdown{154} & -  & -  \\
\bench{lusearch} & \seconds{3.8} & \slowdown{4.4} & \slowdown{5.2} & \slowdown{20} & \slowdown{23} & \slowdown{29} & \slowdown{32} & \slowdown{42}& \seconds{< 0.1} & -  \\
\bench{pmd} & \seconds{2.6} & \slowdown{7.6} & \slowdown{10.0} & \slowdown{27} & \slowdown{29} & \slowdown{32} & \slowdown{36} & \slowdown{36} & - & \seconds{8.6} \\
\bench{sunflow} & \seconds{2.3} & \slowdown{11} & \slowdown{13} & \slowdown{142} & \slowdown{196} & \slowdown{210} & \slowdown{221} & \slowdown{264} & -  & -  \\
\bench{tomcat} & \seconds{1.6} & \slowdown{5.9} & \slowdown{6.2} & \slowdown{30} & \slowdown{37} & \slowdown{55} & \slowdown{59} & \slowdown{56}& \seconds{2.0}& \seconds{55} \\
\bench{xalan} & \seconds{5.3} & \slowdown{2.6} & \slowdown{3.3} & \slowdown{40} & \slowdown{46} & \slowdown{63} & \slowdown{86} & \slowdown{124}& \seconds{30}& \seconds{0.1} \\

\end{tabular}

\caption{Slowdowns of program instrumentation and various analyses over uninstrumented execution, and the average time taken to vindicate \WBR-only races.}
\label{tab:performance}

\end{table*}

\colname{\BR{}+\DC{}} represents performing \BR and \DC analyses together.
We run \DC analysis with \BR analysis to minimize \DC-races that need vindication.
(Vindicator combined \DC analysis with \WCP analysis for this purpose~\cite{vindicator}, but \BR analysis is more powerful.)


\colname{\BR{}+\WBR{}+graph} represents the canonical use case for \WBR analysis.
This configuration performs \BR and \WBR analyses
and constructs the constraint graph to enable vindication.
It uses \BR to reduce how many \WBR-races need vindication.
For comparison purposes, \colname{\BR{}+\WBR{}} forgoes constructing the constraint graph,
showing the cost of constructing the graph, which we have not optimized.
\colname{\BR{}+\WBR{}} is slower than \colname{\BR{}+\DC{}}
because \WBR analysis is generally more complex than \DC analysis.

Finally, \colname{Failed} and \colname{Verified} are the average times taken for each dynamic race
that \checkWBRRace fails to verify or successfully verifies, respectively.
Vindication times vary significantly across programs;
vindication is particularly slow for
\bench{tomcat} because most of its racing accesses are separated by millions of events (Table~\ref{tab:race-distances}).
Vindication is slow for \bench{h2}, even though its races are not far apart,
because \checkWBRRace discovers new critical section constraints
that require analyzing over 500 million events.


\subsection{Summary and Discussion}

Our \BR- and \WBR-based approaches are slower than other predictive approaches,
but they find more races, some of which are millions of events apart.
SMT-based approaches would not be able to find these far-apart races because they
cannot scale past analyzing bounded windows of
executions~\cite{rvpredict-pldi-2014,said-nfm-2011,rdit-oopsla-2016,jpredictor,maximal-causal-models,ipa}
(Section~\ref{sec:related}).
Notably, \emph{RVPredict}, which (like \WBR) incorporates precise control and data dependence,
uses an analysis window of 10,000 events~\cite{rvpredict-pldi-2014},
meaning it would miss many of the predictable races detected and verified by our approach.

Our evaluation demonstrates the power of \BR and \WBR analyses to find more races than prior approaches \emph{in a single execution}.
A potential limitation of the evaluation is that it does not compare our analyses
with approaches that perform \HB analysis on multiple executions
(\eg, using one of the many schedule exploration approaches; Section~\ref{sec:related}).
Our work's aim is to push the limit on what can be found in a single execution,
which is essentially complementary to approaches that explore multiple schedules.
No other known sound technique could have predicted all of these races from the observed executions.
\notes{\CheckWBRRace is more expensive on average than \checkDCRace
because only \checkWBRRace must detect causal events and check last writers,
which requires maintaining the variable accessed by each read and write,
precluding optimizations such as merging event nodes~\cite{vindicator}.}%

\section{Related Work}
\label{sec:related}

This section describes and compares with prior work,
starting with the most closely related work.

\paragraph{Unbounded predictive analysis.}

Prior work introduces unbounded predictive analyses, \wcpFull (\WCP) and \dcFull (\DC) analyses~\cite{wcp,vindicator},
which Sections~\ref{sec:background} and \ref{sec:evaluation} covered and evaluated in detail.
\BR and \WBR analyses predict more races in real programs than \WCP and \DC analyses, respectively (Section~\ref{sec:evaluation}).



The \WCP relation is weaker (\ie, detects more races) than Smaragdakis \etal's earlier \emph{\cpFull (CP)} relation~\cite{causally-precedes}.
Smaragdakis \etal's implementation detects races within bounded windows of 500 events
because of the difficulty of developing an efficient unbounded analysis for \CP~\cite{causally-precedes,raptor}.

Recent work introduces the \emph{afterward-confirm (AC)} relation and an approach called \emph{DigHR} for computing AC~\cite{dighr}.
AC is the same as \emph{\CP} except that it removes write--write conflicting critical section ordering.
Despite this similarity with our work, the contributions differ significantly, and
the DigHR work has major correctness issues.
Foremost, the DigHR paper claims incorrectly that AC is sound.
AC is, to the best of our understanding, unsound: its removal of write--write ordering leads to detecting
false races, including for the execution in Figure~\ref{fig:wr-wr-conflict:extra}
(with the \BrUniv event omitted; DigHR's event model does not include branches).
The DigHR paper provides a soundness proof, which we believe
is incorrect as a result of informality leading to not covering cases such as Figure~\ref{fig:wr-wr-conflict:extra}.
In contrast with DigHR, our work introduces a sound relaxation of \WCP (\BR).
Additionally, our work introduces a complete relation (\WBR),
handles control dependencies (\BrUniv events), and
presents linear-time analyses for \BR and \WBR
(DigHR is superlinear, like existing \CP analyses~\cite{causally-precedes,raptor}).

Concurrently with our work,
Pavlogiannis introduces a predictive race detection approach called \emph{M2} that is
related to vindication~\cite{pavlogiannis-2019}.
Pavlogiannis uses lockset analysis as an imprecise filter for potential races checked by M2,
while our work introduces linear-time \WBR analysis as a less-imprecise filter for potential races checked by \checkWBRRace.
Although Pavlogiannis reports performance that is sometimes competitive with
the performance of \HB, \WCP, and \DC analyses, Pavlogiannis's implementations of these
analyses perform extra passes over execution traces in addition to the
efficient single-pass vector-clock-based analyses from prior work~\cite{wcp,vindicator}.
It is unclear to us how M2 and \checkWBRRace would compare in terms of detection capability
(aside from the fact that only \checkWBRRace takes branches and control dependence into account).
In addition to these differences, our work incorporates branches and control dependence sensitivity,
while Pavlogiannis's work does not and thus would miss races
such as Figures~\ref{fig:CORE:has-predictable-race} and~\ref{fig:source-pmd}; and
our work introduces a sound partial order and linear-time analysis (\BR and \BR analysis).

Our concurrent work introduces the \emph{SmartTrack} algorithm,
which optimizes the performance of \WCP and \DC analyses~\cite{smarttrack}.
SmartTrack's optimizations apply to analyses that compute predictive relations that order all pairs of conflicting accesses---a
property that \BR and \WBR do not conform to.
In any case, optimizing \WBR analysis would have limited benefit because the analysis still must construct a constraint
graph in order to perform vindication (a necessity considering that so many \WBR-races fail vindication in practice).
The SmartTrack paper also introduces a new relation \emph{weak-doesn't-commute (WDC)} that is a weak variant of \DC~\cite{smarttrack}.
Unlike \BR and \WBR, WDC does \emph{not} help find more races than \DC, but rather serves to improve analysis performance.

\paragraph{Bounded predictive approaches.}

Other approaches predict data races by generating and solving satisfiability modulo theories (SMT)
constraints~\cite{rvpredict-pldi-2014,said-nfm-2011,rdit-oopsla-2016,jpredictor,maximal-causal-models,ipa}.
These SMT-based approaches cannot analyze long execution traces in reasonable time,
because constraints are quadratic or cubic in trace length, and constraint-solving time
often grows superlinearly with constraints.
These approaches thus break traces into bounded ``windows'' of traces (\eg, 500--10,000 events each),
missing predictable races involving accesses that occur further apart in the trace.

One advantage of SMT-based approaches is that they can be both sound and complete (within a bounded window) by encoding precise constraints.
Notably, \emph{RVPredict} includes branches in its execution model~\cite{rvpredict-pldi-2014}. (RVPredict
also incorporates \emph{values} into its event model, so a read in a predictable trace can have a different last
writer as long as it writes the same value~\cite{rvpredict-pldi-2014}.)
RVPredict is thus complete by Section~\ref{subsec:completeness}'s definition,
except that it is practically limited to bounded windows of execution.
\WBR analysis, on the other hand, is complete without the windowing limitation,
but \checkWBRRace is not guaranteed to vindicate a \WBR-race even when a predictable race exists.

\paragraph{Schedule exploration.}

In contrast to predictive analysis,
\emph{schedule exploration} approaches
execute the program multiple times to explore more program
behaviors~\cite{mcr,mcr-s,chess,randomized-scheduler,racageddon,racefuzzer,drfinder,racedet-model-checking-2004}.
These approaches may be systematic (often called \emph{model checking}) or be based on randomness or heuristics.
Schedule exploration is complementary to
predictive analysis, which aims to glean as much as possible from a given execution.
\emph{Maximal causality reduction (MCR)} combines schedule exploration with
predictive analysis~\cite{mcr,mcr-s}.
\emph{MCR-S} incorporates static control flow information
to reduce the complexity of MCR's generated SMT constraints~\cite{mcr-s}.

\paragraph{Other analyses.}

Widely used data race detectors typically use dynamic \emph{\hbFull (\HB)}
analysis~\cite{happens-before,fasttrack,multirace,goldilocks-pldi-2007,google-tsan-v1,google-tsan-v2,intel-inspector}.
\HB analysis
cannot predict races involving reordered critical sections on the same lock.
The detected races thus depend heavily on the scheduling of the analyzed program.
Other analyses find a subset of \HB-races by detecting simultaneously executing conflicting accesses or
regions~\cite{frost,valor-oopsla-2015,ifrit,datacollider,racefuzzer,racechaser-caper-cc-2017}.

\emph{Lockset} analysis checks a locking discipline, ensuring that all
pairs of conflicting accesses hold some common lock~\cite{eraser,
racedet-escape-nishiyama-2004,
choi-racedet-2002, object-racedet-2001}.
Lockset analysis is predictive but unsound,
reporting false races for synchronization patterns other than the locking discipline.
\emph{Hybrid} lockset--\HB analyses generally incur disadvantages of one or both kinds of
analysis~\cite{dinning-schonberg,ocallahan-hybrid-racedet-2003,racetrack,multirace}.
  %
  %
  %
  %

\emph{Sampling-based} analysis trades coverage for performance
(opposite of predictive analysis)
in order to detect data races in
production~\cite{pacer-2010,datacollider,literace,racemob,racechaser-caper-cc-2017,racez,prorace}.
Custom hardware support can detect data races with low performance cost
but has not been implemented~\cite{radish,zhou-hard,lard,clean-isca-2015,parsnip,conflict-exceptions,drfx-2010,drfx-2011}.

Dynamic analysis can estimate the likely harm of a data
race~\cite{boehm-miscompile-hotpar-11,portend-toplas15,benign-races-2007,prescient-memory,adversarial-memory,relaxer},
which is orthogonal to detection.
All data races are erroneous under
language memory models that ascribe them undefined
semantics~\cite{memory-models-cacm-2010,c++-memory-model-2008,java-memory-model,
out-of-thin-air-MSPC14,data-races-are-pure-evil,you-dont-know-jack,jmm-broken}. Java's
memory model defines weak semantics for data races~\cite{java-memory-model},
but inadvertently prohibits common JVM optimizations~\cite{jmm-broken,out-of-thin-air-MSPC14}.


\emph{Static} program analysis can detect all races across all feasible executions of a
program~\cite{praun-gross-pldi-2003, naik-static-racedet-2006, naik-static-racedet-2007, locksmith, racerx, relay-2007}, but
it reports thousands of false races for real programs~\cite{racechaser-caper-cc-2017,chimera}.

\paragraph{Avoiding or tolerating data races.}

New languages and type systems can ensure data race freedom,
but require significant programmer effort~\cite{dpj,
boyapati-ownership-types-2002,jade,rust,abadi-type-racedet-2006,flanagan-inference-racedet-2007}.
Compilers and hardware can provide well-defined semantics for data races,
but incur high run-time costs or hardware
complexity~\cite{clean-isca-2015,conflict-exceptions,drfx-2010,drfx-2011,SC-preserving-compiler,endtoendSC,
Sura-high-performance-SC,enforser-asplos-2015,bulk-compiler-micro-2009,region-serializability-for-all}.

\section{Conclusion}

\BR and \WBR analyses improve over existing predictive analyses
by incorporating precise notions of data and control dependence, finding more races both in theory and in practice
while retaining linear (in trace length) run time and thus unbounded analysis.
\BR analysis maintains \WCP analysis's soundness while increasing race coverage.
\WBR analysis finds all data races that can be predicted from an observed execution;
not all \WBR-races are predictable races, but \checkWBRRace can efficiently filter false races.
Experiments show that our new approaches find many
predictable races in real programs that prior approaches are unable to find.
These properties and results suggest that our contributions advance the state of the art in
predictive race detection analysis.

\begin{acks}

We thank Rob LaTour for early help with this project.
Thanks to Steve Freund for making RoadRunner publicly available and providing help with using and modifying it.
Thanks to Andreas Pavlogiannis, Umang Mathur, and Mahesh Viswanathan
for providing the execution in Figure~\ref{fig:wr-wr-conflict:counterexample},
which is a soundness counterexample for a previous version of \BR.
Thanks to the anonymous reviewers for their thorough and insightful feedback.

\end{acks}



\newcommand{\showDOI}[1]{\unskip}
\bibliography{bib/conf-abbrv,bib/plass}

\iftoggle{extended-version}{

\appendix

\section{Proof of \WBR Completeness Helper Lemma}
\label{sec:completeness-lemma}

\begin{proof}[Proof of Lemma~\ref{lem:wbr-necessary-ordering}]
  We proceed by induction on the \emph{\WBRdist} of two events,
  \dist{\event{e}{}}{\event{e}{'}},
  defined as follows (the \WBR properties refer to Table~\ref{tab:partial-order-definitions}):
\begin{align*}
\dist{\event{e}{}}{\event{e}{'}} = \min
\begin{cases}
0 & \textnormal{ if } \WBROrdered{\event{e}{}}{\event{e}{'}} \textnormal{ by \WBR conflicting critical section ordering} \\
1 + \dist{\getAcquire{\event{e}{}}}{\event{e}{'}} & \textnormal{ if } \WBROrdered{\event{e}{}}{\event{e}{'}} \textnormal{ by ``$\Acquire{m}\ltWBR\Release{m} \implies \Release{m}\ltWBR\Release{m}$''} \\
0 & \textnormal{ if } \WBROrdered{\event{e}{}}{\event{e}{'}} \textnormal{ by \PO} \\
\multicolumn{2}{@{}l}{1 + \min_{\event{e}{''}}(\dist{\event{e}{}}{\event{e}{''}} + \dist{\event{e}{''}}{\event{e}{'}})} \\
& \textnormal{ if } \exists \event{e}{''} \mid \WBROrdered{\event{e}{}}{\event{e}{''}} \land \WBROrdered{\event{e}{''}}{\event{e}{'}} \textnormal{ by \WBR transitivity} \\
\infty & \textnormal{ otherwise}
\end{cases}
\end{align*}

  \paragraph{Base case}

  Let \event{e}{_1} and \event{e}{_2} be events in a trace \tr such that
  $\dist{\event{e}{_1}}{\event{e}{_2}} = 0$ and
  \WBROrdered{\event{e}{_1}}{\event{e}{_2}}.
  Since $\dist{\event{e}{_1}}{\event{e}{_2}} = 0$, \event{e}{_1} and \event{e}{_2} are ordered directly,
  using \PO or \WBR's conflicting critical section ordering:
  
  If \WBROrdered{\event{e}{_1}}{\event{e}{_2}} by \PO,
  then \POOrdered{\event{e}{_1}}{\event{e}{_2}}. Let \trPrime be a predictable trace of \tr
  in which \TRPrimeOrdered{\event{e}{_2}}{\event{e}{_1}} or $\event{e}{_2} \in \trPrime \land \event{e}{_1} \notin \trPrime$,
  either of which violates the \PO rule of a predictable traces.
  
  If \WBROrdered{\event{e}{_1}}{\event{e}{_2}} by \WBR's conflicting critical section ordering,
  then \event{e}{_1} is a release event, \event{e}{_2} is a branch event, there
  exists a release event \event{r}{_2} over the same lock as \event{e}{_1}, and
  there exists write event \event{e} and read event \event{e}{'} such that
  $\event{e}\in\CS{\event{e}{_1}}$, $\event{e}{'}\in\CS{\event{r}{_2}}$,
  \conflicts{\event{e}}{\event{e}{'}}, $\lastwr{\event{e}{'}}{\tr} = \event{e}{}$, and
  \BrDepsOn{\event{e}{_2}}{\event{e}{'}}. Let \trPrime be a predictable trace of \tr
  where \TRPrimeOrdered{\event{e}{_2}}{\event{e}{_1}} or $\event{e}{_2} \in \trPrime \land \event{e}{_1} \notin \trPrime$.
  Then in either case,
  \TRPrimeOrdered{\event{r}{_2}}{\event{a}} or $\event{r}{_2} \in \trPrime \land \event{a} \notin \trPrime$,
  where \event{a} is the matching acquire of \event{e}{_1}; otherwise \trPrime would be an invalid predictable trace
  due to the \LS rule of predictable traces. As a result, \TRPrimeOrdered{\event{e}{'}}{\event{e}} or
  $\event{e}{'} \in \trPrime \land \event{e}{} \notin \trPrime$,
  which means $\lastwr{\event{e}{'}}{\trPrime} \neq \lastwr{\event{e}{'}}{\tr}$.
  Furthermore, \event{e}{'} is a causal read in \trPrime as
  \BrDepsOn{\event{e}{_2}}{\event{e}{'}} and $\event{e}{_2}\in\trPrime$. As a
  result, \trPrime violates the \LW rule of predictable traces.

  \paragraph{Inductive step}

  Given some $k > 0$,
  suppose that the lemma statement holds for
  $\dist{\event{e}{_1}}{\event{e}{_2}} < k$ (induction hypothesis).

  Let \event{e}{_1} and \event{e}{_2} be events in a trace \tr such that
  $\dist{\event{e}{_1}}{\event{e}{_2}} = k$ and
  \WBROrdered{\event{e}{_1}}{\event{e}{_2}}.
  Since $\dist{\event{e}{_1}}{\event{e}{_2}} > 0$, \event{e}{_1} and \event{e}{_2} are ordered indirectly,
  using \WBR transitivity or ``$\Acquire{m}\ltWBR\Release{m} \implies \Release{m}\ltWBR\Release{m}$'':


  If \WBROrdered{\event{e}{_1}}{\event{e}{_2}} by \WBR transitivity,
  then there must be an event \event{e} such that
  $\WBROrdered{\event{e}{_1}}{\event{e}} \land
  \WBROrdered{\event{e}}{\event{e}{_2}}$ and
  $\dist{\event{e}{_1}}{\event{e}{_2}} = 1 + \dist{\event{e}{_1}}{\event{e}} + \dist{\event{e}}{\event{e}{_2}}$.
  Let \trPrime be a predictable trace
  of \tr in which \TRPrimeOrdered{\event{e}{_2}}{\event{e}{_1}} or $\event{e}{_2} \in \trPrime \land \event{e}{_1} \notin \trPrime$.
  Then either
  \TRPrimeOrdered{\event{e}{}}{\event{e}{_1}},
  \TRPrimeOrdered{\event{e}{_2}}{\event{e}{}},
  $\event{e}{_2} \in \trPrime \land \event{e}{} \notin \trPrime$, or
  $\event{e}{} \in \trPrime \land \event{e}{_1} \notin \trPrime$. Any of these possibilities
  makes \trPrime an invalid predictable trace of \tr
  according to the induction hypothesis on \WBROrdered{\event{e}{_1}}{\event{e}} and \WBROrdered{\event{e}}{\event{e}{_2}}.

  If \WBROrdered{\event{e}{_1}}{\event{e}{_2}} by ``$\Acquire{m}\ltWBR\Release{m} \implies \Release{m}\ltWBR\Release{m}$'',
  then there exist acquire events \event{a}{_1} and \event{a}{_2} on the
  same lock such that \event{e}{_1} and \event{e}{_2} are their corresponding
  releases, \WBROrdered{\event{a}{_1}}{\event{e}{_2}}, and
  $\dist{\event{e}{_1}}{\event{e}{_2}} = 1 + \dist{\getAcquire{\event{e}{}}}{\event{e}{'}}$.
  Let \trPrime be a
  predictable trace of \tr where \TRPrimeOrdered{\event{e}{_2}}{\event{e}{_1}} or $\event{e}{_2} \in \trPrime \land \event{e}{_1} \notin \trPrime$.
  Then because of the \LS rule of predictable traces,
  either \TRPrimeOrdered{\event{e}{_2}}{\event{a}{_1}} or $\event{e}{_2} \in \trPrime \land \event{a}{_1} \notin \trPrime$.
  Either of these possibilities makes \trPrime invalid predictable trace of \tr
  according to the induction hypothesis on \WBROrdered{\event{a}{_1}}{\event{e}{_2}}.
\end{proof}

\section{Vindicating \WBR-Races}
\label{sec:vindication-full}

\notes{
\mike{Fixed ``\CORE'' in this section by redefining the macro as ``causal.''}
}

\renewcommand{\CORE}{causal\xspace}
\newcommand{\CapitalCORE}{Causal\xspace}
\renewcommand\GetCOREReads{\textsc{Get\CapitalCORE{}Reads}\xspace}

\WBR analysis is unsound: not every \WBR-race indicates a true predictable race 
(Section~\ref{sec:BR-WBR-relations}).
While \WBR-races often indicate true predictable races in practice,
true and false races alike take hours or days for programmers to understand and 
fix~\cite{literace, benign-races-2007, fasttrack, microsoft-exploratory-survey, billion-lines-later}.
We thus introduce an algorithm called \emph{\checkWBRRace}
that determines whether a \WBR-race is a true predictable race.
As shown in Algorithm~\ref{alg:verify-race},
\checkWBRRace takes as input a single \WBR-race
and a constraint graph corresponding to the execution trace with \WBR ordering,
and either returns a valid reordering exposing a predictable race, or reports failure.

\CheckWBRRace extends prior work's algorithm for vindicating \DC-races~\cite{vindicator}, 
referred to here as \emph{\checkDCRace}.
The main challenge in developing \checkWBRRace is that unlike \DC, \WBR does not in general order a read after its last writer,
since only \emph{\CORE} reads in a predictable trace \trPrime need to be ordered after their last writers from \tr.
\CheckWBRRace thus computes reads that must be \CORE reads in \trPrime,
both when computing constraints and when constructing the reordered trace \trPrime.

\notes{
The differences between \checkWBRRace and \checkDCRace
are related to the fact that \WBR is (by design) weaker than \DC,
so the initial \WBR constraint graph is weaker than a \DC constraint graph.
In addition, \checkWBRRace enforces a weaker (but still correct) definition of valid reordering
than \checkDCRace, incorporating control and data dependence precisely.
}

The rest of this section first defines the constraint graph 
and then explains \checkWBRRace and its component procedures.

\subsection{Constraint Graph}
\label{subsec:constraint-graph}

The \emph{constraint graph}, \Gv, is a directed graph in which
the nodes are the events in \tr and the edges represent reordering constraints on events in any reordered trace \trPrime.
Notationally, we represent \Gv as a set of edges, \eg, $\edge{\event{e}}{\event{e'}} \in \Gv$.

We use the notation \Gpath{\event{e}}{\event{e}{'}}{\Gv} 
to indicate that \event{e}{'} is reachable from \event{e} in \Gv:
\[
\Gpath{\event{e}}{\event{e}{'}}{\Gv} \;\; \equiv \;\; \edge{\event{e}}{\event{e}{'}} \in \Gv \lor \exists \event{e}{''} \mid \Gpath{\event{e}}{\event{e}{''}}{\Gv} \land \Gpath{\event{e}{''}}{\event{e}{'}}{\Gv} 
\]
The initial constraint graph is constructed by \WBR analysis,
which creates a node for each event and adds edges that correspond to \WBR ordering between events.
That is, initially the following property holds:
\[
\forall \event{e}, \event{e}{'} \in \tr \; \big( \WBROrdered{\event{e}{}}{\event{e}{'}} \iff \Gpath{\event{e}{}}{\event{e}{'}}{\Gv} \big)
\]

Figures~\ref{fig:br-rd-wr-conflict:wbr-false-race}, \ref{fig:br-rd-wr-conflict:wbr-no-edge},
\ref{fig:wr-wr-conflict:original},
\ref{fig:wr-wr-conflict:extra},
\ref{fig:simple-BR:original}, and
\ref{fig:simple-BR:branch:original}
(pages~\pageref{fig:br-rd-wr-conflict:wbr-false-race}--\pageref{fig:simple-BR:original})
essentially show initial constraint graphs
for traces that have a \WBR-race but no \DC-race.
The reader should consider only the edges labeled ``\WBR'' to be part of the constraint graph,
ignoring edges labeled ``\DC'' and ``\BR.''
The arrows in the figures represent edges corresponding to \WBR conflicting critical section ordering.\footnote{These
figures do not explicitly depict any
``$\Acquire{m}\ltWBR\Release{m} \implies \Release{m}\ltWBR\Release{m}$'' edges because
all such edges are already implied by other edges in the examples,
\eg, Figure~\ref{fig:simple-BR:branch:original}'s 
\Release{m} events are already ordered by a conflicting critical section edge composed with \PO.}
The figures omit showing \PO edges that exist between events by the same thread.

\subsection{\textsc{AddConstraints}} 

\algdef{SE}[DOWHILE]{Do}{doWhile}{\algorithmicdo}[1]{\algorithmicwhile\ #1}%

\begin{algorithm*}
\caption{\hfill Check if \WBR-race is a true predictable race}
\begin{small}
\raggedright An execution trace is an ordered list of events: $\langle e, \dots, e' \rangle$.\\
\raggedright The operator $\oplus$ concatenates two traces. \\
\end{small}
\small
\begin{algorithmic}[1]
\Procedure{\CheckWBRRace}{$\Gv, \event{e}{_1}, \event{e}{_2}$} \Comment{Inputs: constraint graph and \WBR-race events}
	\State $\Gv \gets \textsc{AddConstraints}(G, \event{e}{_1},\event{e}{_2})$\label{line:approach:call-wdc-b}
	\lIf {$\Gv = \emptyset$}{\textbf{return} \underline{No predictable race}}\label{line:approach:falseRace}
	\State \algorithmicelse\
	\Indent
		\State $\trPrime \gets \textsc{ConstructReorderedTrace}(\Gv, \event{e}{_1}, \event{e}{_2})$ \label{line:approach:backReorder}
		\lIf {$\trPrime \ne \langle \, \rangle\,$}{\textbf{return} \underline{Predictable race witnessed by \trPrime}} \label{line:approach:trueRace} \Comment{Check for non-empty trace}
		\lElse{\textbf{return} \underline{Don't know}} \label{line:approach:questionableRace}
	\EndIndent
\EndProcedure

\medskip

\Procedure{AddConstraints}{$\Gv, \event{e}{_1}, \event{e}{_2}$} \label{alg:addConstraints}
    \State $C \gets \{\edge{\mathit{src}}{\event{e}{_2}} \mid \edge{\mathit{src}}{\event{e}{_1}} \in G\} \cup \{\edge{\mathit{src}}{\event{e}{_1}} \mid \edge{\mathit{src}}{\event{e}{_2}} \in G\}$ \label{line:addConstraints:start-initial-constraints}
    \Comment{Consecutive-event constraints}
	\State $\Gv \gets \Gv \cup C$ \label{line:addConstraints:end-initial-constraints}
	\Do \label{line:addConstraints:start-adding-LSConstraints}
		\LineCommentxx{Add \LW constraints}
		\State $M \gets \GetCOREReads(\Gv, \event{e}{_1}, \event{e}{_2})$ \Comment{Compute \CORE reads} \label{line:addConstraints:find-core-reads}
		\ForEach{$\event{r}{} \in M$} \label{line:addConstraints:begin-adding-last-writer-edges}
			\State \textbf{let} $w \gets \lastwr{\event{r}{}}{}$
			\lIf {$w \ne \nolastwr$}{$\Gv \gets \Gv \cup \{\edge{\event{w}{}}{\event{r}{}}\}$} \label{line:addConstraints:add-wr-rd-to-G}
		\EndFor \label{line:addConstraints:end-adding-last-writer-edges}
		\LineCommentxx{Add \LS constraints}
		\ForEach {$\edge{\mathit{src}}{\mathit{snk}} \in C$}\label {line:addConstraints:outer-foreach-acq-rel}
			\ForEach {$(\event{a}{}, \event{r}{}) \mid 
			\event{a}{} \textnormal{ is an \code{acq}} \land
			\event{r}{} \textnormal{ is a  \code{rel}} \land 
			\Gpath{\event{a}{}}{\mathit{src}}{G} \land
			\Gpath{\mathit{snk}}{\event{r}{}}{G} \land
			\getLock{\event{a}{}} = \getLock{\event{r}{}}$} \label{line:addConstraints:foreach-acq-rel}
				\If {$(\Gpath{A(\event{r}{})}{\event{e}{_1}}{G} \lor \Gpath{A(\event{r}{})}{\event{e}{_2}}{G}) \land (\Gpath{\event{a}{}}{\event{e}{_1}}{G} \lor \Gpath{\event{a}{}}{\event{e}{_2}}{G})$} \label{line:addConstraints:acq-rel-reach-race}
					\State $C \gets C \cup \{\edge{R(\event{a}{})}{A(\event{r}{})}\}$ \label{line:addConstraints:add-acq-rel-to-C}
					\State $G \gets G \cup \{\edge{R(\event{a}{})}{A(\event{r}{})}\}$ \label{line:addConstraints:add-acq-rel-to-G}
				\EndIf
			\EndFor
		\EndFor
		\lIf {$\exists \event{e}{} \mid \Gpath{\event{e}{}}{\event{e}{}}{\Gv} \land (\Gpath{\event{e}{}}{\event{e}{_1}}{\Gv} \lor \Gpath{\event{e}{}}{\event{e}{_2}}{\Gv})$}{\textbf{return} $\emptyset$} \label{line:addConstraints:detect-cycle} \Comment{Cycle detected; no predictable race}
	\doWhile {$G$ has changed} \label{line:addConstraints:end-adding-LSConstraints}
	\State \textbf{return} $\Gv$
\EndProcedure

\medskip

\newcommand\constructHelper{\textsc{AttemptToConstructTrace}\xspace}
\newcommand\ConstructHelper{\constructHelper}
\Procedure{ConstructReorderedTrace}{$\Gv, \event{e}{_1}, \event{e}{_2}$}
	\Do
		\State $\trPrime \gets \constructHelper(G, \event{e}{_1}, \event{e}{_2})$\label{line:fullConstruct:construct-call}
		\lIf {$\trPrime = \langle \event{r}{} \rangle$}{$G \gets G \cup \{ \edge{\event{r}}{\event{e}{_1}} \}$} \Comment {Missing needed release event?} \label{line:fullConstruct:recompute-r}
	\doWhile{$G$ has changed}
	\State \textbf{return} \trPrime
\EndProcedure

\newcommand\legal{\ensuremath{\mathit{legal}}\xspace}
\newcommand\semilegal{\ensuremath{\mathit{next}}\xspace}

\medskip

\Procedure{\constructHelper}{$G, \event{e}{_1}, \event{e}{_2}$}
	\State $R \gets \{\event{e}{} \mid \Gpath{\event{e}{}}{\event{e}{_1}}{G} \lor \Gpath{\event{e}{}}{\event{e}{_2}}{G}\}$ \Comment{Compute reachable events}\label{line:fullConstruct:compute-r}
	\State $M \gets \GetCOREReads(\Gv, \event{e}{_1}, \event{e}{_2})$ \Comment{Compute \CORE reads} \label{line:backReorder:find-core-reads}
	\State $\trPrime \gets \langle \event{e}{_1}, \event{e}{_2} \rangle$ \label{line:backReorder:initialize-trPrime}
	\While {$R \setminus \trPrime \ne \emptyset$} \label{line:backReorder:Rloop}
		\State $ 
		\semilegal \gets \{ \event{e}{} \in R \setminus \trPrime \mid 
		( \nexists \event{e}{'} \mid (e, e') \in G \land e' \in R \setminus \trPrime ) \}$%
		\label{line:backReorder:semiLegalEvent}%
		\State
		$\legal \gets \{ \event{e}{} \in \semilegal \mid 
		(\langle \event{e}{} \rangle \oplus \trPrime \textnormal{ satisfies \LS}) \land 
		(\langle \event{e}{} \rangle \oplus \trPrime \textnormal{ satisfies \LW according to } M) \}$
		\label{line:backReorder:legalEvent}
		\If {$\legal = \emptyset$}

			\If{$\exists \event{r}{} \mid (\exists \event{e}{} \in \semilegal \mid \event{e}{} \in \CS{\event{r}{}} \land 
			                         \event{r}{} \notin R \land \langle \event{r}{} \rangle \oplus \trPrime \textnormal{ satisfies \LS})$}
			\label{line:backReorder:check-missing-release}
				\State \textbf{return} $\langle \event{r}{} \rangle$ \Comment{Return missing release} \label{line:backReorder:missing-release}
			\EndIf
			\State \textbf{return} $\langle \, \rangle$ \Comment{Failed to construct trace} \label{line:backReorder:stuck}
		\Else
			\State \textbf{let} $e \in \legal$ s.t.\  $\nexists e' \in \legal \mid \TROrdered{e}{e'}$ \Comment {Select latest legal event in \tr order} \label{line:backReorder:pick-latest-event}
 			\State $\trPrime \gets \langle \event{e}{} \rangle \oplus \trPrime$ \Comment{Prepend event to trace} \label{line:backReorder:append}
		\EndIf
	\EndWhile \label{line:backReorder:RloopEnd}
	\State \textbf{return} \trPrime \Comment{Return valid reordering} \label{line:backReorder:return-trPrime}
\EndProcedure

\end{algorithmic}
\label{alg:verify-race}
\label{alg:Approach}
\end{algorithm*}

The initial constraint graph \Gv is insufficiently constrained---a
reordered trace that obeys \Gv's constraints will not in general be a valid reordering that exposes a predictable race---both
because \WBR is unsound and because \Gv's constraints do not ensure that a \WBR-race's accesses occur consecutively.

\CheckWBRRace first calls \emph{\textsc{AddConstraints}} (at line~\ref{line:approach:call-wdc-b} in Algorithm~\ref{alg:verify-race}),
which adds \emph{necessary} constraints for a valid reordering exposing a predictable race: constraints that
(1) force the input \WBR-race's accesses to be consecutive (to satisfy the definition of predictable race),
(2) order some \CORE events (to satisfy the \LW rule of valid reordering), and
(3) order some critical sections (to satisfy the \LS rule of valid reordering).

\subsubsection{Consecutive-event constraints}\label{sssec:consecutive-event-const}

For an input \WBR-race pair \event{e}{_1} and \event{e}{_2}
to be a predictable race, \event{e}{_1} and \event{e}{_2} must be consecutive in a valid reordering.
To force this ordering, every event ordered before \event{e}{_1} \textbf{or} \event{e}{_2}
must be ordered before both \event{e}{_1} \textbf{and} \event{e}{_2}.
Lines~\ref{line:addConstraints:start-initial-constraints} and 
\ref{line:addConstraints:end-initial-constraints} of Algorithm~\ref{alg:Approach}
add \emph{consecutive-event} constraints to \Gv.
For every immediate predecessor event $\mathit{src}$ of \event{e}{_1} or \event{e}{_2} in \Gv,
\textsc{AddConstraints} adds a constraint from $\mathit{src}$ to \event{e}{_2} or \event{e}{_1}, respectively.

Figures~\ref{fig:constraints:wr-wr-conflict:consecutive-event} and \ref{fig:constraints:complex-noRace:consecutive-event}
show constraint graphs after adding consecutive-event constraints (dashed lines).
Note that edges from \event{e}{_2} to \event{e}{_1}'s predecessor
typically point \emph{backward} relative to \TROrdered{}{} order.

\begin{figure*}
\captionsetup{farskip=0pt} 
\subfloat[After adding consecutive-event constraints]{
\small
\centering
\sf
\makebox[.33\linewidth]{
\begin{tabular}{@{}lll@{}}
\textnormal{Thread 1} & \textnormal{Thread 2} & \textnormal{Thread 3}\\\hline
\Acquire{m}\tikzmark{10} \\
\tikzmark{5}\Write{x} \\
\textbf{\Write{y}}\tikzmark{4} \\
\Release{m} \\
					  & \Acquire{m} \\
				 	  & \Write{x}\tikzmark{7} \\
					  & \Release{m}\tikzmark{1} \\
                      &						  & \Acquire{m} \\
                      &						  & \tikzmark{8}\Read{x} \\
                      &						  & \tikzmark{2}\BrUniv \\
                      & 					  & \Release{m}\tikzmark{3}\tikzmark{9} \\
                      &						  & \tikzmark{6}\textbf{\Read{y}}
\end{tabular}
\textunderlink{1}{2}{\WBR}%
\init{3}{4}{0}{0}%
\init{5}{6}{180}{180}%
\label{fig:constraints:wr-wr-conflict:consecutive-event}%
}
}
\subfloat[After adding \LW constraints]{
\small
\centering
\sf
\makebox[.33\linewidth]{
\begin{tabular}{@{}lll@{}}
\textnormal{Thread 1} & \textnormal{Thread 2} & \textnormal{Thread 3}\\\hline
\Acquire{m}\tikzmark{10} \\
\tikzmark{5}\Write{x} \\
\textbf{\Write{y}}\tikzmark{4} \\
\Release{m} \\
					  & \Acquire{m} \\
				 	  & \Write{x}\tikzmark{7} \\
					  & \Release{m}\tikzmark{1} \\
                      &						  & \Acquire{m} \\
                      &						  & \tikzmark{8}\Read{x} \\
                      &						  & \tikzmark{2}\BrUniv \\
                      & 					  & \Release{m}\tikzmark{3}\tikzmark{9} \\
                      &						  & \tikzmark{6}\textbf{\Read{y}}
\end{tabular}
\textunderlink{1}{2}{\WBR}%
\init{3}{4}{0}{0}%
\init{5}{6}{180}{180}%
\textlink{7}{8}{\LW}%
\label{fig:constraints:wr-wr-conflict:last-writer}%
}
}
\subfloat[After adding \LS constraints]{
\small
\centering
\sf
\makebox[.33\linewidth]{
\begin{tabular}{@{}lll@{}}
\textnormal{Thread 1} & \textnormal{Thread 2} & \textnormal{Thread 3}\\\hline
\Acquire{m}\tikzmark{10} \\
\tikzmark{5}\Write{x} \\
\textbf{\Write{y}}\tikzmark{4} \\
\Release{m} \\
					  & \Acquire{m} \\
				 	  & \Write{x}\tikzmark{7} \\
					  & \Release{m}\tikzmark{1} \\
                      &						  & \Acquire{m} \\
                      &						  & \tikzmark{8}\Read{x} \\
                      &						  & \tikzmark{2}\BrUniv \\
                      & 					  & \Release{m}\tikzmark{3}\tikzmark{9} \\
                      &						  & \tikzmark{6}\textbf{\Read{y}}
\end{tabular}
\textunderlink{1}{2}{\WBR}%
\init{3}{4}{0}{0}%
\init{5}{6}{180}{180}%
\textlink{7}{8}{\LW}%
\back{9}{10}{30}{30}%
\label{fig:constraints:wr-wr-conflict:lock-semantics}
}}

\caption{Constraint graph after successive steps of \textsc{AddConstraints}
for \WBR-race on \code{y}, for an example execution.
The \WBR-race events are in \textbf{bold}.
\label{fig:constraints:wr-wr-conflict}}

\bigskip
\small
\centering
\subfloat[After adding consecutive-event constraints]{
\centering
\sf
\begin{tabular}{@{}l@{\;\;}l@{\;\;}l@{\;\;}l@{\;\;}l@{}}
\textnormal{Thread 1} & \textnormal{Thread 2} & \textnormal{Thread 3} & \textnormal{Thread 4} & \textnormal{Thread 5}\\\hline
					  & \Acquire{m}\tikzmark{18} \\
					  & \Acquire{n}\tikzmark{16} \\
					  & \tikzmark{7}\Sync{t} \\
\Sync{t}\tikzmark{8} \\
\tikzmark{11}\Acquire{p}\tikzmark{14} \\
\textbf{\Write{x}}\tikzmark{10} \\
\Sync{o}\tikzmark{1} \\
\Write{z}\tikzmark{21} & \tikzmark{2}\Sync{o} \\
\Release{p} & \Release{n} \\
					  & \Release{m} 		  &						  & \Acquire{n} \\
					  &						  &						  & \Sync{q}\tikzmark{3} \\
					  &						  &						  &						  & \tikzmark{4}\Sync{q} \\
					  &						  &						  &						  & \Acquire{p} \\
					  &						  &						  &						  & \tikzmark{22}\Read{z} \\
					  &						  &						  &						  & \Release{p}\tikzmark{9}\tikzmark{13} \\
                      &						  & \Acquire{m} 		  &						  & \tikzmark{12}\textbf{\Read{x}} \\
                      &						  & \Sync{s}\tikzmark{5} &					  & \Acquire{l} \\
                      &						  &						  &						  &	\tikzmark{19}\Write{y} \\
                      &						  &						  &			 			  & \Release{l} \\
                      &						  &	\Acquire{l} \\
                      &						  &	\Read{y}\tikzmark{20} & \tikzmark{6}\Sync{s}\\ 		  
                      &						  &	\BrUniv 		  & \Release{n}\tikzmark{15} \\
                      &						  &	\Release{l} 	 	  \\
                      & 					  & \tikzmark{17}\Release{m}
\end{tabular}
\label{fig:constraints:complex-noRace:consecutive-event}
\textlink{1}{2}{\WBR}%
\textunderlink{3}{4}{\WBR}%
\textlink{5}{6}{\WBR}%
\textlink{7}{8}{\WBR}%
\init{9}{10}{30}{30}%
\init{11}{12}{200}{180}%
}
\hspace*{1em}
\subfloat[After adding \LW and \LS constraints]{
\centering
\sf
\begin{tabular}{@{}l@{\;\;}l@{\;\;}l@{\;\;\;}l@{\;\;}l@{}}
\textnormal{Thread 1} & \textnormal{Thread 2} & \textnormal{Thread 3} & \textnormal{Thread 4} & \textnormal{Thread 5}\\\hline
					  & \Acquire{m}\tikzmark{18} \\
					  & \Acquire{n}\tikzmark{16} \\
					  & \tikzmark{7}\Sync{t} \\
\Sync{t}\tikzmark{8} \\
\tikzmark{11}\Acquire{p}\tikzmark{14} \\
\textbf{\Write{x}}\tikzmark{10} \\
\Sync{o}\tikzmark{1} \\
\Write{z}\tikzmark{21} & \tikzmark{2}\Sync{o} \\
\Release{p} & \Release{n} \\
					  & \Release{m} 		  &						  & \Acquire{n} \\
					  &						  &						  & \Sync{q}\tikzmark{3} \\
					  &						  &						  &						  & \tikzmark{4}\Sync{q} \\
					  &						  &						  &						  & \Acquire{p} \\
					  &						  &						  &						  & \tikzmark{22}\Read{z} \\
					  &						  &						  &						  & \Release{p}\tikzmark{9}\tikzmark{13} \\
                      &						  & \Acquire{m} 		  &						  & \tikzmark{12}\textbf{\Read{x}} \\
                      &						  & \Sync{s}\tikzmark{5} &					  & \Acquire{l} \\
                      &						  &						  &						  &	\tikzmark{19}\Write{y} \\
                      &						  &						  &			 			  & \Release{l} \\
                      &						  &	\Acquire{l} \\
                      &						  &	\Read{y}\tikzmark{20} & \tikzmark{6}\Sync{s}\\ 		  
                      &						  &	\BrUniv 		  & \tikzmark{15}\Release{n} \\
                      &						  &	\Release{l} 	 	  \\
                      & 					  & \tikzmark{17}\Release{m}
\end{tabular}
\label{fig:constraints:complex-noRace:last-writer-lock-semantics}
\textlink{1}{2}{\WBR}%
\textunderlink{3}{4}{\WBR}%
\textunderlink{5}{6}{\WBR}%
\textlink{7}{8}{\WBR}%
\back{13}{14}{30}{30}%
\back{15}{16}{260}{30}%
\back{17}{18}{170}{30}%
\textlink{19}{20}{\LW}%
\textcurvelink{21}{22}{-75}{180}{\LW}%
\init{9}{10}{30}{30}%
\init{11}{12}{200}{180}%
}


%



\caption{Constraint graph after successive steps of \textsc{AddConstraints} for the \WBR-race on \code{x},
for an example execution.
The \WBR-race events are in \textbf{bold}.
The execution has a \WBR-race 
(\nWBROrdered{\Write{y}{}}{\Read{y}{}}) but no predictable race.
\Sync{o} is an abbreviation for the sequence \Acquire{o}\code{;} \Read{oVar}\code{;} \BrUniv{}\code{;} \Write{oVar}\code{;} \Release{o}.}
\label{fig:constraints:complex-noRace}

\end{figure*}


\textsc{AddConstraints} adds the new constraint edges not only to \Gv
but also to a new set $C$ whose edges are the starting point
for discovering ordering constraints on critical sections (described later).

\subsubsection{Last-writer constraints}\label{sssec:last-writer-const}
\label{subsec:lw-constraints}

\later{
\mike{Why exactly do we add this particular kind of constraint?
If we don't, do we find that \textsc{ConstructReorderedTrace} sometimes gets stuck (and returns ``Don't know'')?
\jake{If we don't, \textsc{ConstructReorderedTrace} will get stuck when a
cycle should be detected and I think \textsc{ConstructReorderedTrace} could get
stuck when a valid reordering exists for a true race.}
\mike{After the submission, let's verify whether that's true empirically.}
\mike{Update: Practical benefits aside, last-writer edges preserve completeness,
so adding them is in accordance with the idea that \textsc{AddConstraints} adds as many (complete) constraints as it can
before calling \textsc{ConstructReorderedTrace}.}}
}

After adding consecutive-event constraints,
\textsc{AddConstraints} identifies and adds ordering constraints for \CORE reads,
called \emph{last-writer (\LW)} constraints
(lines~\ref{line:addConstraints:find-core-reads}--\ref{line:addConstraints:end-adding-last-writer-edges} in Algorithm~\ref{alg:verify-race}).
These constraints have the following form:
if a read event \event{e}{} may affect the outcome of some branch that must be in any valid reordering
(\ie, a branch ordered in \Gv before \event{e}{_1} or \event{e}{_2}),
then the last writer of \event{e}{}, if any, must be ordered to the read in any valid reordering.

These read events are exactly the set of \CORE reads (\Def~\ref{def:causal-events}).
\textsc{AddConstraints} identifies the \CORE reads by calling a procedure \emph{\GetCOREReads} at line~\ref{line:addConstraints:find-core-reads}.
\GetCOREReads (definition not shown) performs a backward traversal of \Gv
starting from \event{e}{_1} and \event{e}{_2} to identify reads
that are \CORE events according to the recursive definition of \CORE events.
This traversal includes not only constraint edges in \Gv but also includes
traversing backward from any \CORE read to its last writer (if any), as computed by \WBR analysis.
Note that edges between a \CORE read and its last writer
do \emph{not} in general exist in \Gv
(intuitively, \WBR analysis cannot know which reads must be \CORE reads in valid reorderings for a particular \WBR-race).

After identifying \CORE reads,
\textsc{AddConstraints} adds edges between each \CORE read and its last writer, if any,
at lines~\ref{line:addConstraints:begin-adding-last-writer-edges}--\ref{line:addConstraints:end-adding-last-writer-edges}.
These \LW constraints are \emph{necessary} constraints on a valid reordering;
we find that these constraints are useful for identifying false races during \textsc{AddConstraints}
(instead of during \textsc{ConstructReorderedTrace}; Section~\ref{subsec:ConstructReorderedTrace})
and for constraining reordering enough during \textsc{ConstructReorderedTrace}
so that true races can be successfully vindicated.
On the other hand, \LW constraints are \emph{insufficient} to ensure the \LW rule of valid reordering:
Enforcing merely that each read's last writer \emph{precede} it does not ensure preservation of each read's last writer.
\textsc{AddConstraints} could potentially add more constraints,
but we have found the added \LW constraints to be sufficient in practice
for identifying false races during \textsc{AddConstraints}
and for enabling \textsc{ConstructReorderedTrace} to vindicate true races.


Figure~\ref{fig:constraints:wr-wr-conflict:last-writer} shows the constraint graph
from Figure~\ref{fig:constraints:wr-wr-conflict:consecutive-event} after adding the sole \LW edge (solid arrow labeled \emph{\LW}).
The \Read{x} is a \CORE read (since it can affect the \BrUniv event, which must be in a valid reordering since \Gpath{\BrUniv}{\Read{y}}{\Gv}),
so \textsc{AddConstraints} adds an edge to the read from its last writer, \WriteT{x}{T2}
(\ie, the \Write{x} executed by Thread~2).


Likewise, Figure~\ref{fig:constraints:complex-noRace:last-writer-lock-semantics}
shows the constraint graph from Figure~\ref{fig:constraints:complex-noRace:consecutive-event}
after adding \LW edges (solid arrows labeled \emph{\LW}).
However, \textsc{AddConstraints} does not actually add these edges until after it has added \LS edges (dotted arrows):
Thread~3's \BrUniv event does not reach \Write{x} or \Read{x} without the \LS edges.
\textsc{AddConstraints} adds \LW and \LS constraints iteratively until convergence
(lines~\ref{line:addConstraints:start-adding-LSConstraints}--\ref{line:addConstraints:end-adding-LSConstraints}).
After adding the shown \LS constraints (explained below),
the next loop iteration of \textsc{AddConstraints} detects that
\Gpath{\BrUniv}{\Write{x}}{\Gv}
and thus that \Read{y} and \Read{z} are \CORE reads, and it adds the shown \LW constraints.


\later{
\mike{Not sure in what depth we've already discussed this,
but it seems to me that last writer--read edges are conceptually like critical sections in terms of ordering constraints,
and we can use a similar approach for ordering groups of a last writer and its readers.}
}

\subsubsection{Lock-semantics constraints}\label{sssec:lock-semantics-const}

\textsc{AddConstraints} also identifies and adds necessary ordering constraints on critical sections,
called \emph{lock-semantics (\LS) constraints}
(lines~\ref{line:addConstraints:outer-foreach-acq-rel}--\ref{line:addConstraints:add-acq-rel-to-G} in Algorithm~\ref{alg:verify-race}).
These constraints have the following form:
if two critical sections on the same lock 
are ordered in \Gv, at least in part, and
both critical sections are ordered in \Gv, at least in part,
before \event{e}{_1} or \event{e}{_2},
then the critical sections must be \emph{fully} 
ordered in any valid reordering.
(The helper function \getLock{e} returns \code{m} if $e$ is an \Acquire{m} or \Release{m}.)


Figure~\ref{fig:constraints:wr-wr-conflict:lock-semantics} shows the constraint
graph from Figure~\ref{fig:constraints:wr-wr-conflict:last-writer} after adding \LS constraints (dotted arrows).
The algorithm detects that \Gpath{\AcquireT{m}{T3}}{\ReleaseT{m}{T1}}{\Gv}
and that both critical sections reach at least one access to \code{y},
so it adds an edge from \ReleaseT{m}{T3} to \AcquireT{m}{T1}.
After the first iteration of its \code{do--while} loop, \textsc{AddConstraints} finds no new \LW or \LS edges to add (convergence).

Figure~\ref{fig:constraints:complex-noRace:last-writer-lock-semantics} shows the constraint graph
from Figure~\ref{fig:constraints:complex-noRace:consecutive-event} after adding all \LS constraints (dotted arrows) as well as \LW constraints.
During the first iteration of the loop
(lines~\ref{line:addConstraints:start-adding-LSConstraints}--\ref{line:addConstraints:end-adding-LSConstraints}),
\textsc{AddConstraints} detects that
\Gpath{\AcquireT{p}{T5}}{\ReleaseT{p}{T1}}{\Gv} and 
\Gpath{\AcquireT{n}{T4}}{\ReleaseT{n}{T2}}{\Gv} and that both pairs of critical sections reach at least one access to \code{x},
so it adds edges from \ReleaseT{p}{T5} to \AcquireT{p}{T1} and
from \ReleaseT{n}{T4} to \AcquireT{n}{T2}.
During the second loop iteration, because of the added edge from \ReleaseT{n}{T4} to \AcquireT{n}{T2},
\textsc{AddConstraints} detects the path \Gpath{\AcquireT{m}{T3}}{\ReleaseT{m}{T2}}{\Gv},
and thus adds the edge \ReleaseT{m}{T3} to \AcquireT{m}{T2}.
Finally, in the third loop iteration, as explained above,
\GetCOREReads identifies \Read{y} and \Read{z} as \CORE events and adds the corresponding \LW edges shown in the figure.

\subsubsection{Detecting cycles}

\CheckWBRRace will definitely \emph{not} be able to construct a valid reordering that satisfies \Gv's constraints
if \Gv contains a cycle that reaches \event{e}{_1} or \event{e}{_2}.
At the end of each loop iteration, 
\textsc{AddConstraints} checks for such a cycle at line~\ref{line:addConstraints:detect-cycle},
and returns an empty constraint graph if it detects a cycle.

Figure~\ref{fig:constraints:wr-wr-conflict:lock-semantics}'s constraint graph is acyclic,
and \textsc{AddConstraints} returns the shown constraint graph.
In contrast, Figure~\ref{fig:constraints:complex-noRace:last-writer-lock-semantics}'s constraint graph
contains a cycle that reaches \event{e}{_1} (and \event{e}{_2}):
\Gpath{\WriteT{x}{T1}}{\Gpath{\ReleaseT{p}{T5}}{\Gpath{\AcquireT{p}{T1}}{\WriteT{x}{T1}}{\Gv}}{\Gv}}{\Gv}.
\textsc{AddConstraints} detects the cycle and returns $\emptyset$,
and \checkWBRRace reports that no predictable race exists.

\later{
\subsubsection{Completeness of \textsc{AddConstraints}}

\kaan{Reviewer B asks ``do you have an example showing that these constraints are not sufficient?'', we can refer to \WDC's appendix example or add a revised version of it to this paper.
\mike{Do we have any examples of vindication failing that are \WBR specific or specific to this paper's definition of reordered trace
that incorporates dependencies,
or are we only aware of cases like those in \cite{vindicator} that involve pairs of critical sections with irreconcilable constraints?}}

Here we argue that \textsc{AddConstraints} is complete, meaning that it will not create a cycle if
there is a predictable race. To prove this, we first define the concept of \CORE events in the context of the constraint graph, then formally define the
edges contained in \Gv after \textsc{AddConstraints} has finished executing.

Since
\textsc{AddConstraints} runs separately for every \WBR-race,
the following definitions apply within the context of an observed trace \tr and some \WBR-race
between events \event{e}{_1} and \event{e}{_2}
(\ie, \nWBROrdered{\event{e}{_1}}{\event{e}{_2}} and
$\commonLocks{\event{e}{_1}}{\event{e}{_2}}=\emptyset$).

To identify read events that must have the same last writer in a reordering as in the original trace,
we introduce the concept of a \CORE event with respect to a race and constraint graph:

\begin{definition}[\CapitalCORE events for a constraint graph and race]
  Let event \event{e} be some event in \tr.
  An event \event{e} is a \emph{\CORE event for a constraint graph $G$ and a \WBR-race (\event{e}{_1},\event{e}{_2})}
  if and only if at least one of the following is true:
  \begin{enumerate}
    \item \event{e} is a read, and there exists a branch event \event{b} such that \BrDepsOn{\event{b}}{\event{e}},
    and \Gpath{\event{b}}{\event{e}{_1}}{\Gv} or \Gpath{\event{b}}{\event{e}{_2}}{\Gv}.\label{def:core-g-read}
    \item \event{e} is a write, and there exists a \CORE read \event{e}{'} for $G$ and (\event{e}{_1},\event{e}{_2})
    such that $\lastwr{\event{e}{'}}{\tr} = \event{e}$.\label{def:core-g-write}
    \item \event{e} is a read, and there exists a \CORE write \event{e}{'} for $G$ and (\event{e}{_1},\event{e}{_2})
    such that \POOrdered{\event{e}}{\event{e}{'}}.\label{def:core-g-recur}
  \end{enumerate}
\end{definition}

This definition is similar to the definition of a \CORE event in a trace (\Def~\ref{def:core-event}),
with the only difference being that the first rule only includes branches
that must be in the reordered trace (\ie, must reach \event{e}{_1} or \event{e}{_2}).

\begin{lemma}[Edges of \Gv after \textsc{AddConstraints}]
  If $\edge{\event{e}}{\event{e}{'}}\in\Gv$ then the edge must be either an initial
  constraint added during the execution, or a constraint added by \textsc{AddConstraints} for the race $(\event{e}{_1},\event{e}{_2})$ being vindicated,
  \ie, at least one of the following must hold:

  \begin{enumerate}
    \item \emph{Initial constraint (Section~\ref{subsec:constraint-graph}):} \WBROrdered{\event{e}}{\event{e}{'}}.\label{addConstraints:wbr-ordered}
    \item \emph{Consecutive-event constraint (Section~\ref{sssec:consecutive-event-const}):} $\event{e}{'}=\event{e}{_1}$ and \POOrdered{\event{e}}{\event{e}{_2}}, or $\event{e}{'}=\event{e}{_2}$ and \POOrdered{\event{e}}{\event{e}{_1}}.\label{addConstraints:consecutive}
    \item \emph{\LW constraint (Section~\ref{sssec:last-writer-const}):} \event{e}{'} is a \CORE read in \Gv for the race (\event{e}{_1},\event{e}{_2}), and $\lastwr{\event{e}{'}}{\tr}=\event{e}$.\label{addConstraints:last-writer}
    \item \emph{\LS constraint (Section~\ref{sssec:lock-semantics-const}):} \event{e} is a release and \event{e}{'} is an acquire over the same lock, \Gpath{\getAcquire{\event{e}}}{\getRelease{\event{e}{'}}}{\Gv}, and $(\Gpath{\event{e}}{\event{e}{_1}}{\Gv}\lor\Gpath{\event{e}}{\event{e}{_2}}{\Gv}) \land (\Gpath{\event{e}{'}}{\event{e}{_1}}{\Gv}\lor\Gpath{\event{e}{'}}{\event{e}{_2}}{\Gv})$.\label{addConstraints:lock-semantics}
  \end{enumerate}\label{lem:Gv-after-addConstraints}
\end{lemma}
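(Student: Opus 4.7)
My plan is to prove the lemma by induction on the order in which edges are added to \Gv during the execution of \textsc{AddConstraints}, invoking monotonicity of reachability \Gpath{\,\cdot\,}{\,\cdot\,}{\Gv} (edges are only added, never removed) to lift any condition verified at the moment of addition to the final \Gv. The base case covers the initial edges constructed by \WBR analysis, all of which satisfy condition~(1) by definition. The inductive step then considers the three locations where \textsc{AddConstraints} adds edges: line~\ref{line:addConstraints:end-initial-constraints} (consecutive-event), line~\ref{line:addConstraints:add-wr-rd-to-G} (\LW), and line~\ref{line:addConstraints:add-acq-rel-to-G} (\LS).

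For consecutive-event constraints (condition~2), the algorithm adds $\edge{\mathit{src}}{\event{e}{_2}}$ whenever $\edge{\mathit{src}}{\event{e}{_1}} \in \Gv$, and symmetrically. The key observation is that among the direct edges that \WBR analysis places in the initial \Gv---namely \PO edges and the release-to-branch edges produced by the conflicting critical section rule and the ``acq--rel implies rel--rel'' rule---only \PO edges can terminate at an access event such as \event{e}{_1} or \event{e}{_2}, since the other rules produce edges ending at branches or releases. Hence \POOrdered{\mathit{src}}{\event{e}{_1}} in every case where the consecutive-event rule fires on the initial graph, matching condition~(2).

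For \LW constraints (condition~3), each added edge $\edge{\event{w}}{\event{r}}$ satisfies $\event{w} = \lastwr{\event{r}}{\tr}$ by construction, and \event{r} was identified by \GetCOREReads as a \CORE read of the current \Gv. To conclude that \event{r} remains a \CORE read in the \emph{final} \Gv, I would argue that \CORE-ness is monotonic in $G$: its recursive definition depends only on reachability conditions of the form \Gpath{\event{b}}{\event{e}{_1}}{\Gv} or \Gpath{\event{b}}{\event{e}{_2}}{\Gv}, and adding edges can only preserve or create such paths. A similar monotonicity argument handles the three reachability conjuncts of the \LS case that the algorithm checks directly before adding an edge at line~\ref{line:addConstraints:add-acq-rel-to-G}.

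The main obstacle is a subtle self-reference in the \LS case. The lemma requires $(\Gpath{\event{e}}{\event{e}{_1}}{\Gv} \lor \Gpath{\event{e}}{\event{e}{_2}}{\Gv})$ for the release $\event{e} = \getRelease{\event{a}}$, but the algorithm verifies only $(\Gpath{\event{a}}{\event{e}{_1}}{\Gv} \lor \Gpath{\event{a}}{\event{e}{_2}}{\Gv})$, and one cannot derive the former from the latter in general since \event{a} precedes $\getRelease{\event{a}}$ in \PO rather than following it. The resolution, which must be argued explicitly, is that the newly added edge $\edge{\getRelease{\event{a}}}{\getAcquire{\event{r}}}$ itself, composed with the algorithm's check $(\Gpath{\getAcquire{\event{r}}}{\event{e}{_1}}{\Gv} \lor \Gpath{\getAcquire{\event{r}}}{\event{e}{_2}}{\Gv})$, immediately yields \Gpath{\getRelease{\event{a}}}{\event{e}{_i}}{\Gv} in the final \Gv. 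Handling this bootstrapping---where the edge being added participates in establishing its own post-condition---is what distinguishes the \LS case from the otherwise routine monotonicity reasoning used for the other three cases.
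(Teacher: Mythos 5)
Your proof is correct, and it is considerably more explicit than the paper's own treatment: the paper simply declares this lemma ``trivially true,'' on the grounds that the initial \WBR edges and the constraints added by \textsc{AddConstraints} are the only edges ever placed in \Gv, leaving the reader to match each edge-addition site against the four conditions. Your induction on the order of edge insertion, with monotonicity of reachability (and hence of causal-read status) lifting checks made at insertion time to the final graph, supplies exactly the bookkeeping the paper leaves implicit, and your treatment of the \LS case identifies a genuine subtlety the one-line proof glosses over: the algorithm tests reachability to \event{e}{_1} or \event{e}{_2} from the acquire \event{a}, whereas condition~(4) asserts it for the release $\event{e}=\getRelease{\event{a}}$, and the gap is closed only because the freshly added edge \edge{\getRelease{\event{a}}}{\getAcquire{\event{r}}}, composed with the checked path from \getAcquire{\event{r}} to \event{e}{_1} or \event{e}{_2}, creates the required path in the final \Gv---precisely the bootstrapping argument you spell out. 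Two small points to tighten: (i) your consecutive-event case rests on the structural fact that in the initial graph the only direct edges terminating at a read or write are \PO edges (the conflicting-critical-section and release--release rules produce edges ending at branches or releases); this is a property of the graph that \WBR analysis constructs, not of the \WBR relation as an abstract partial order, so it should be stated as such; and (ii) the conjunct \Gpath{\getAcquire{\event{e}}}{\getRelease{\event{e}{'}}}{\Gv} of condition~(4) is also not literally checked by the algorithm---it follows from \Gpath{\event{a}}{\mathit{src}}{\Gv}, the edge \edge{\mathit{src}}{\mathit{snk}} being in \Gv, and \Gpath{\mathit{snk}}{\event{r}}{\Gv}---so it deserves the same explicit composition step you give for the bootstrapped conjunct rather than being folded into ``checked directly.''
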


Lemma~\ref{lem:Gv-after-addConstraints} is trivially true, as the initial edges of the graph and the added constraints are the only edges within the graph at the time of vindication.

To prove that \textsc{AddConstraints} is complete, we will first prove that all
edges in the graph are necessary for a valid reordering where the race is observed.

\begin{lemma}
  Given a \WBR-race $(\event{e}{_1},\event{e}{_2})$, events \event{e} and
  \event{e}{'} such that \Gpath{\event{e}}{\event{e}{'}}{\Gv}, and a reordering
  \trPrime of \tr; if \TRPrimeOrdered{\event{e}{'}}{\event{e}} or
  $\event{e}{'}\in\trPrime \land \event{e}\not\in\trPrime$ then \trPrime is invalid or there exists
  some event $\event{e}{''}\in\trPrime$ such that
  \TRPrimeOrdered{\event{e}{_1}}{\TRPrimeOrdered{\event{e}{''}}{\event{e}{_2}}} or \TRPrimeOrdered{\event{e}{_2}}{\TRPrimeOrdered{\event{e}{''}}{\event{e}{_1}}}.\label{lem:graph-edges-necessary}
\end{lemma}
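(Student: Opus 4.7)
}

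The plan is to proceed by induction on the length of the path \Gpath{\event{e}}{\event{e}{'}}{\Gv}. In the inductive step, we use the transitivity of reachability: if there is an intermediate event \event{f}{} on the path with $\Gpath{\event{e}}{\event{f}}{\Gv} \land \Gpath{\event{f}}{\event{e}{'}}{\Gv}$, then we case-split on the trace position of \event{f}{}. If $\event{f} \notin \trPrime$ or \trPrime reorders \event{f}{} with respect to \event{e}{}, we apply the induction hypothesis to the prefix $\Gpath{\event{e}}{\event{f}}{\Gv}$; otherwise, \event{f}{} is ordered consistently with \event{e}{}, and we then apply the hypothesis to $\Gpath{\event{f}}{\event{e}{'}}{\Gv}$ to conclude. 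This reduces the problem to proving the base case for each of the four edge types enumerated in Lemma~\ref{lem:Gv-after-addConstraints}.

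For the base case (a single edge $\edge{\event{e}}{\event{e}{'}} \in \Gv$), we case-split on the edge type. For \emph{initial \WBR constraints} ($\WBROrdered{\event{e}}{\event{e}{'}}$), we invoke Lemma~\ref{lem:wbr-necessary-ordering} directly to conclude \trPrime is invalid. For \emph{consecutive-event constraints} (without loss of generality $\event{e}{'}=\event{e}{_1}$ and $\POOrdered{\event{e}}{\event{e}{_2}}$), either \event{e}{} is absent from \trPrime, or \TRPrimeOrdered{\event{e}{_1}}{\event{e}}: in the first situation \event{e}{_2} must also be absent (else the \PO rule is violated), making \trPrime trivially satisfy the disjunction vacuously by being invalid; in the second situation, by \PO \TRPrimeOrdered{\event{e}}{\event{e}{_2}}, so \event{e}{} itself is an event strictly between \event{e}{_1} and \event{e}{_2} in \trPrime, witnessing the conclusion with $\event{e}{''}=\event{e}$.

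For \emph{\LW constraints}, \event{e}{'} is a causal read for \Gv (in the sense of Section~\ref{subsec:lw-constraints}) with $\lastwr{\event{e}{'}}{\tr}=\event{e}$. If the causal-ness witness---a branch \event{b}{} with \BrDepsOn{\event{b}}{\event{e}{'}} and $\Gpath{\event{b}}{\event{e}{_1}}{\Gv} \lor \Gpath{\event{b}}{\event{e}{_2}}{\Gv}$ (possibly via a chain of \PO-ordered intermediate causal reads/writes)---is itself present in \trPrime and ordered consistently with \event{e}{_1},\event{e}{_2}, then \event{e}{'} is a causal event in \trPrime, and reordering \event{e}{} after \event{e}{'} (or dropping \event{e}{} while keeping \event{e}{'}) violates the \LW rule, so \trPrime is invalid. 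Otherwise, reachability of the witnessing branch from \event{e}{_1}/\event{e}{_2} in \Gv lets us apply the induction hypothesis to obtain either invalidity of \trPrime or an event between \event{e}{_1} and \event{e}{_2}.

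The main obstacle is the \emph{\LS constraint} case: here \event{e}{} is a release and \event{e}{'} is an acquire on the same lock, with $\Gpath{\getAcquire{\event{e}}}{\getRelease{\event{e}{'}}}{\Gv}$ and both endpoints reaching \event{e}{_1} or \event{e}{_2}. If $\TRPrimeOrdered{\event{e}{'}}{\event{e}}$, then by well-formedness of \trPrime the critical sections must be fully ordered, forcing $\TRPrimeOrdered{\getRelease{\event{e}{'}}}{\getAcquire{\event{e}}}$. Now applying the induction hypothesis to $\Gpath{\getAcquire{\event{e}}}{\getRelease{\event{e}{'}}}{\Gv}$ yields either invalidity or an interposed event; combined with the reachability of \event{e}{_1}/\event{e}{_2} from \getAcquire{\event{e}{}} and \getRelease{\event{e}{'}}, this gives the desired conclusion after a careful case analysis on which of \event{e}{_1},\event{e}{_2} each endpoint reaches. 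The subtlety is in exhausting these sub-cases while correctly handling the exclusion of \event{e}{} or \event{e}{'} from \trPrime, and in tracking that the inductively supplied \event{e}{''} remains a valid witness for the outer pair $(\event{e}{_1},\event{e}{_2})$; this bookkeeping, rather than any deep new idea, is where the proof's technical weight lies.
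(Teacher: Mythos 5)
Your overall strategy matches the paper's: induct over the constraint graph, case-split on the four edge types of Lemma~\ref{lem:Gv-after-addConstraints}, discharge initial \WBR edges via Lemma~\ref{lem:wbr-necessary-ordering}, and handle transitive reachability through an intermediate event. However, there is a genuine gap in your choice of induction measure. You induct on the \emph{length of the path} \Gpath{\event{e}}{\event{e}{'}}{\Gv}, yet in the single-edge ``base case'' for an \LS constraint you ``apply the induction hypothesis'' to \Gpath{\getAcquire{\event{e}}}{\getRelease{\event{e}{'}}}{\Gv}---a path between a \emph{different} pair of events whose length is in general not smaller than $1$ (it may itself traverse other \LS or \LW edges added in later iterations of \textsc{AddConstraints}). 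The same problem appears in your \LW case, where you invoke the induction hypothesis to argue that the witnessing branch \event{b} with \Gpath{\event{b}}{\event{e}{_1}}{\Gv} or \Gpath{\event{b}}{\event{e}{_2}}{\Gv} must be present and correctly placed in \trPrime. Under a path-length measure these appeals are circular, so the induction is not well-founded and the \LS case---which you yourself identify as the crux---does not go through as written.

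The paper avoids this by inducting on a bespoke \Gdist in which an \LS edge \edge{\event{e}}{\event{e}{'}} is assigned weight $1 + \dist{\getAcquire{\event{e}}}{\getRelease{\event{e}{'}}}$, i.e., the induction is effectively over the \emph{derivation depth} of constraints (an \LS edge is only added once a supporting path built from previously justified edges exists), so the recursive appeal for the matching acquire/release pair is legitimately to a smaller measure. To repair your proof you would need to adopt that measure, or equivalently induct lexicographically on (iteration of \textsc{AddConstraints} at which the edge was added, path length), and similarly justify the branch-to-race-event reachability used in your \LW case by that same measure rather than by the bare path-length hypothesis. A smaller, shared blemish: in the consecutive-event case, concluding that \trPrime is ``invalid'' when \event{e} (and hence \event{e}{_2}) is absent tacitly assumes $\event{e}{_1}, \event{e}{_2} \in \trPrime$; the paper's proof makes the same implicit assumption, so this is not a point of divergence, but it is worth stating explicitly.
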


\begin{proof}[Proof of Lemma~\ref{lem:graph-edges-necessary}]
  We prove the lemma by induction on the \emph{\Gdist} which is defined as follows
  (the rules refer to Lemma~\ref{lem:Gv-after-addConstraints}):
  
  \begin{align*}
    \dist{\event{e}{}}{\event{e}{'}} = \min
    \begin{cases}
      0 & \textnormal{ if } (\event{e},\event{e}{'})\in\Gv \textnormal{ by Rules~\ref{addConstraints:wbr-ordered},\ref{addConstraints:consecutive},\ref{addConstraints:last-writer} } \\
      1 + \dist{\getAcquire{\event{e}}}{\getRelease{\event{e}{'}}} & \textnormal{ if } (\event{e},\event{e}{'})\in\Gv \textnormal{ by Rule~\ref{addConstraints:lock-semantics} }\\
      \multicolumn{2}{r}{\textnormal{ if } \Gpath{\event{e}}{\event{e}{''}}{\Gv} \textnormal{ and } \Gpath{\event{e}{''}}{\event{e}{'}}{\Gv}}\\
      \multicolumn{2}{l}{\hspace{-.5em} 1 + \min_{\event{e}{''}}(\dist{\event{e}}{\event{e}{''}}+\dist{\event{e}{''}}{\event{e}{'}})}\\
      \infty & \textnormal{ otherwise }
    \end{cases}
  \end{align*}

  \paragraph{Base case} Let \event{e} and \event{e}{'} be events in a trace \tr
  such that $\dist{\event{e}}{\event{e}{'}}=0$ and
  $\Gpath{\event{e}}{\event{e}{'}}{\Gv}$. Since
  $\dist{\event{e}}{\event{e}{'}}=0$, \event{e} and \event{e}{'} must have a
  direct edge between them added to \Gv by one of the following rules:
  
  If this edge was added by Rule~\ref{addConstraints:wbr-ordered}, then
  Lemma~\ref{lem:wbr-necessary-ordering} proves that \trPrime is invalid.

  If this edge was added by Rule~\ref{addConstraints:consecutive}, then either
  $\event{e}{'}=\event{e}{_1}$ and \POOrdered{\event{e}}{\event{e}{_2}}, or
  $\event{e}{'}=\event{e}{_2}$ and \POOrdered{\event{e}}{\event{e}{_1}}. Then, a
  reordering \trPrime where \TRPrimeOrdered{\event{e}{'}}{\event{e}} would have
  either
  \TRPrimeOrdered{\event{e}{_1}}{\TRPrimeOrdered{\event{e}}{\event{e}{_2}}} or
  \TRPrimeOrdered{\event{e}{_2}}{\TRPrimeOrdered{\event{e}}{\event{e}{_1}}}
  respectively. Otherwise if $\event{e}\not\in\trPrime$, then \trPrime is invalid due to \PO rule since either \POOrdered{\event{e}}{\event{e}{_2}} or \POOrdered{\event{e}}{\event{e}{_1}}.

  If this edge was added by Rule~\ref{addConstraints:last-writer}, then
  \event{e}{'} is a \CORE read in \Gv for the race (\event{e}{_1},\event{e}{_2}) and
  $\lastwr{\event{e}{'}}{\tr}=\event{e}$. If \event{e}{'} is a \CORE read in \Gv,
  then it must be able to reach \event{e}{_1} or \event{e}{_2} following a chain of
  \CORE writes, reads, and finally a branch. Since \LW constraints are not added from branches,
  this branch then must be able to reach \event{e}{_1} or \event{e}{_2} via
  a constraint added by another rule.

  \paragraph{Inductive step} Given some $k > 0$, assume that the lemma statement holds for $\dist{\event{e}}{\event{e}{'}} < k$.

  Let \event{e} and \event{e}{'} be events in trace \tr such that $\dist{\event{e}}{\event{e}{'}}=k$ and \TROrdered{\event{e}}{\event{e}{'}}. Since $\dist{\event{e}}{\event{e}{'}}>0$, the distance must have been established by Rule~\ref{addConstraints:lock-semantics} or by transitivity.

  If Rule~\ref{addConstraints:lock-semantics} established the distance, then \event{e} is a
  release and \event{e}{'} is an acquire over the same lock and
  \Gpath{\getAcquire{\event{e}}}{\getRelease{\event{e}{'}}}{\Gv}. By definition,
  $\dist{\getAcquire{\event{e}}}{\getRelease{\event{e}{'}}}<k$, which means that
  there is no reordering \trPrime where
  \TRPrimeOrdered{\getRelease{\event{e}{'}}}{\getAcquire{\event{e}}}. This means
  that any valid reordering \trPrime must have
  \TRPrimeOrdered{\getAcquire{\event{e}}}{\getRelease{\event{e}{'}}}. As a result, any reordering \trPrime of \tr where \TRPrimeOrdered{\event{e}{'}}{\event{e}} is invalid as such a trace is either not well formed or violates the \LS rule. Otherwise, if $\event{e}\not\in\trPrime$, then \trPrime is invalid due to \LS rule as \TRPrimeOrdered{\getAcquire{\event{e}}}{\event{e}{'}}, and without \event{e} the critical sections would overlap.

  If the distance was established by transitivity, then there must be some event
  \event{e}{''} such that $\dist{\event{e}}{\event{e}{'}} =
  1+\dist{\event{e}}{\event{e}{''}}+\dist{\event{e}{''}}{\event{e}{'}}$. As a
  result, $\dist{\event{e}}{\event{e}{''}}<k$ and
  $\dist{\event{e}{''}}{\event{e}{'}}<k$. This shows that any valid reordering
  \trPrime must have \TRPrimeOrdered{\event{e}}{\event{e}{''}} and
  \TRPrimeOrdered{\event{e}{''}}{\event{e}{'}}. Therefore, any reordering
  \trPrime of \tr where \TRPrimeOrdered{\event{e}{'}}{\event{e}} must also have
  \TRPrimeOrdered{\event{e}{'}}{\event{e}{''}} or
  \TRPrimeOrdered{\event{e}{''}}{\event{e}}, which would make \trPrime invalid.
  Otherwise if $\event{e}\not\in\trPrime$ then \trPrime is invalid or there exists
  some event $\event{e}{''_{'}}\in\trPrime$ such that
  \TRPrimeOrdered{\event{e}{_1}}{\TRPrimeOrdered{\event{e}{''_{'}}}{\event{e}{_2}}} or \TRPrimeOrdered{\event{e}{_2}}{\TRPrimeOrdered{\event{e}{''_{'}}}{\event{e}{_1}}}, due to the inductive hypothesis since $\dist{\event{e}}{\event{e}{''}}<k$.
\end{proof}

\begin{theorem}[\textsc{AddConstraints} completeness]
  Given a \WBR-race $(\event{e}{_1},\event{e}{_2})$ and the graph \Gv after
  \textsc{AddConstraints} has processed it; if $(\event{e}{_1},\event{e}{_2})$ is a predictable race then \Gv does not contain a cycle before \event{e}{_1} or \event{e}{_2} such that for some events \event{e} and \event{e}{'}, \Gpath{\event{e}}{\event{e}{'}}{\Gv}, \Gpath{\event{e}{'}}{\event{e}}{\Gv}, \Gpath{\event{e}}{\event{e}{_1}}{\Gv} or \Gpath{\event{e}}{\event{e}{_2}}{\Gv}.\label{thm:addConstraints-complete}
\end{theorem}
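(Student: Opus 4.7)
The plan is to prove Theorem~\ref{thm:addConstraints-complete} by contradiction: assume that $(\event{e}{_1},\event{e}{_2})$ is a predictable race, witnessed by some valid predictable trace \trPrime in which \event{e}{_1} and \event{e}{_2} appear consecutively, and yet \Gv (as produced by \textsc{AddConstraints}) contains a cycle $\event{e} \leadsto \event{e}' \leadsto \event{e}$ such that either \Gpath{\event{e}}{\event{e}{_1}}{\Gv} or \Gpath{\event{e}}{\event{e}{_2}}{\Gv}. The key lever is Lemma~\ref{lem:graph-edges-necessary}, which says that any path in \Gv forces a corresponding order in every valid predictable trace (or else the predictable trace is invalid, or else some event must lie strictly between \event{e}{_1} and \event{e}{_2}).

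First I would do a case split on whether \event{e} and \event{e}' both belong to \trPrime. In the symmetric case where both are present, \trPrime totally orders them, so without loss of generality \TRPrimeOrdered{\event{e}{'}}{\event{e}}. Applying Lemma~\ref{lem:graph-edges-necessary} to the path \Gpath{\event{e}}{\event{e}{'}}{\Gv} (which exists because of the cycle), the conclusion is that either \trPrime is invalid, contradicting its role as a witness to a predictable race, or there exists an event $\event{e}{''}\in\trPrime$ with \TRPrimeOrdered{\event{e}{_1}}{\TRPrimeOrdered{\event{e}{''}}{\event{e}{_2}}} (or the symmetric variant), contradicting consecutiveness of \event{e}{_1} and \event{e}{_2} in \trPrime.

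In the remaining case, at least one of \event{e}, \event{e}' is absent from \trPrime; without loss of generality $\event{e}\notin\trPrime$. Here I use the hypothesis that the cycle reaches a race endpoint: \Gpath{\event{e}}{\event{e}{_i}}{\Gv} for some $i\in\{1,2\}$, and $\event{e}{_i}\in\trPrime$ by definition of predictable race. Lemma~\ref{lem:graph-edges-necessary} applied to this path with $\event{e}{_i}\in\trPrime \land \event{e}\notin\trPrime$ again yields either invalidity of \trPrime or an event $\event{e}{''}$ strictly between \event{e}{_1} and \event{e}{_2} in \trPrime, both contradictions. (If $\event{e}'\notin\trPrime$ instead of $\event{e}$, the same argument goes through using \Gpath{\event{e}{'}}{\event{e}{_i}}{\Gv}, which follows from \Gpath{\event{e}{'}}{\event{e}}{\Gv} composed with \Gpath{\event{e}}{\event{e}{_i}}{\Gv}.)

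The main obstacle I anticipate is purely bookkeeping: being careful that Lemma~\ref{lem:graph-edges-necessary} is invoked with hypotheses that actually match its statement, in particular when \event{e} or \event{e}' coincides with \event{e}{_1} or \event{e}{_2}, or when the cycle itself passes through one of the race events. These degenerate situations need to be handled explicitly (e.g., noting that a self-loop \Gpath{\event{e}{_1}}{\event{e}{_1}}{\Gv} immediately violates \PO-consistency of \trPrime), but none of them introduce genuinely new ideas beyond what Lemma~\ref{lem:graph-edges-necessary} already supplies. No additional machinery about \textsc{AddConstraints}'s internal fixed-point iteration is required, since the lemma already characterizes every edge that can appear in \Gv on termination.
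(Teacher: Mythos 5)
Your proposal is correct and follows essentially the same route as the paper's proof: a contradiction argument that takes the witness predictable trace in which \event{e}{_1} and \event{e}{_2} are consecutive and repeatedly invokes Lemma~\ref{lem:graph-edges-necessary}, ruling out its two escape clauses (invalidity of \trPrime and an intervening event) to force both orientations of the cycle into \trPrime. The only difference is organizational---you case-split on whether \event{e} and \event{e}{'} appear in \trPrime, whereas the paper first uses the reachability to \event{e}{_1} or \event{e}{_2} to conclude $\event{e}\in\trPrime$ and then derives \TRPrimeOrdered{\event{e}}{\event{e}{'}} and \TRPrimeOrdered{\event{e}{'}}{\event{e}} directly---which does not change the substance of the argument.
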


\begin{proof}[Proof of Theorem~\ref{thm:addConstraints-complete}]
  Let us prove the theorem by contradiction. Assume that
  $(\event{e}{_1},\event{e}{_2})$ is a predictable race but \Gv does contain a
  cycle.

  Then, there must be a pair of events \event{e} and \event{e}{'} such that
  \Gpath{\event{e}}{\event{e}{'}}{\Gv}, \Gpath{\event{e}{'}}{\event{e}}{\Gv},
  and \Gpath{\event{e}}{\event{e}{_1}}{\Gv} or
  \Gpath{\event{e}}{\event{e}{_2}}{\Gv}. Let \trPrime be a valid reordering of
  \tr where the race is observed. Since the race is observed, there can not
  exist some event $\event{e}{''}\in\trPrime$ such that
  \TRPrimeOrdered{\event{e}{_1}}{\TRPrimeOrdered{\event{e}{''}}{\event{e}{_2}}}
  or
  \TRPrimeOrdered{\event{e}{_2}}{\TRPrimeOrdered{\event{e}{''}}{\event{e}{_1}}},
  and $\event{e}{_1}\in\trPrime$ and $\event{e}{_2}\in\trPrime$ must be true.
  Lemma~\ref{lem:graph-edges-necessary} shows that $\event{e}\in\trPrime$ since
  $\event{e}{_1}\in\trPrime$ and $\event{e}{_2}\in\trPrime$. Again using
  Lemma~\ref{lem:graph-edges-necessary}, \Gpath{\event{e}}{\event{e}{'}}{\Gv}
  gives us \TRPrimeOrdered{\event{e}}{\event{e}{'}} and
  \Gpath{\event{e}{'}}{\event{e}}{\Gv} gives us
  \TRPrimeOrdered{\event{e}{'}}{\event{e}}, which is a contradiction.
\end{proof}

While the constraints in the graph returned by \textsc{AddConstraints} are necessary, they are \emph{not} however sufficient:
an acyclic graph does not ensure that a predictable race exists,
and the constraints are not sufficient to ensure a valid reordering even if a predictable race does exist.
\notes{
We have never seen this happen in practice.
\mike{I think we can say this after talking about \textsc{ConstructReorderedTrace}.}
}%
However, \checkWBRRace is sound overall because it ensures that a predictable race exists,
by attempting to construct a valid reordering.
}%

\subsection{\textsc{ConstructReorderedTrace}}
\label{subsec:ConstructReorderedTrace}

\later{
\kaan{Reviewer B asks ``do you have a (contrived) example where one obtains a "don't know" result?''. The example from \DC can be adapted for here as well.
\mike{If there's a ``don't know'' result unique to \WBR, we can try to show that.
Otherwise, referring to \DC's example(s)~\cite{vindicator} is probably fine.}}
}

If \textsc{AddConstraints} returns an acyclic constraint graph \Gv,
a predictable race does not necessarily exist; and if a predictable race does exist,
what is a valid reordering that exposes the race?
\CheckWBRRace addresses both issues by calling \textsc{ConstructReorderedTrace}
(at line~\ref{line:approach:backReorder} in Algorithm~\ref{alg:verify-race}),
which attempts to construct a valid reordering
that satisfies \Gv's constraints.

\paragraph{Construction algorithm}

\textsc{ConstructReorderedTrace} calls \textsc{AttemptToConstructTrace},
which does the work of attempting to construct a valid reordering \trPrime satisfying \Gv's constraints.

\textsc{AttemptToConstructTrace} first computes the set of (backward) \emph{reachable} events $R$
that reach \event{e}{_1} or \event{e}{_2} (line~\ref{line:fullConstruct:compute-r}).
These events, along with \event{e}{_1} and \event{e}{_2}, must be in a valid reordering.
The algorithm then computes the set of \CORE reads (line~\ref{line:backReorder:find-core-reads}),
using the \GetCOREReads procedure described in Section~\ref{subsec:lw-constraints}.
The \CORE reads are events that must be in \trPrime and that must each have the same last writer as in \tr.

\textsc{AttemptToConstructTrace} constructs \trPrime
in \emph{reverse} order, starting with \event{e}{_2} and \event{e}{_1},
prepending events to \trPrime until all events in $R$ have been added
(lines~\ref{line:backReorder:initialize-trPrime}--\ref{line:backReorder:append}).
A prepended event must satisfy the constraints in \Gv (line~\ref{line:backReorder:semiLegalEvent}), 
and must not violate the \LW or \LS rules of valid reordering (line~\ref{line:backReorder:legalEvent}).

The algorithm omits the detailed logic for checking \LW and \LS rules;
we describe the logic here briefly.
An event in a critical section on \code{m} cannot be
prepended if \code{m} is currently held by a different thread,
and a critical section on \code{m} must be prepended in its entirety if \trPrime
already contains events from another critical section on \code{m}.
For each \CORE read, the last write must appear in the reordered trace before the read,
and no write to the same variable can interleave the latest write and read. 
If a \CORE read has no last writer in \tr,
then the read must have no last writer \trPrime.

\textsc{AttemptToConstructTrace} is a greedy algorithm:
it chooses the latest event in \TROrdered{}{} order among the legal events
(line~\ref{line:backReorder:pick-latest-event}).
The intuition for choosing the latest event
(shared by \checkDCRace~\cite{vindicator})
is that the original order of critical sections and memory accesses
is most likely to avoid failure to produce a valid reordering, especially for events executed \emph{before} \event{e}{_1}
in the original execution. 
\later{
\jake{We have not tested if this insight is valid for \WBR. I would expect it to be, but we don't know.}
}%

\paragraph{Retrying construction}

As mentioned above in the context of enforcing the \LS rule,
if \trPrime already contains an \Acquire{m} event,
then the algorithm cannot prepend an event $\event{e} \in \CS{\event{r}}$,
where \event{r} is a \Release{m} event,
before first prepending \event{r} to \trPrime.
However, \event{r} may \emph{not} be in $R$.
If \textsc{AttemptToConstructTrace} encounters this case (line~\ref{line:backReorder:check-missing-release}),
it returns the missing event \event{r} (line~\ref{line:backReorder:missing-release}).
\textsc{ConstructReorderedTrace} then ensures that $r$ will be part of $R$ (line~\ref{line:fullConstruct:recompute-r})
and again calls \textsc{AttemptToConstructTrace}, which recomputes $R$ and the set of \CORE reads.
In the worst case, $R$ might be missing release events for each critical section
that contains a thread's last event in $R$,
bounding the number of times that \textsc{ConstructReorderedTrace} can retry \textsc{AttemptToConstructTrace}.

\textsc{AttemptToConstructTrace} eventually returns either a valid reordered trace \trPrime
that demonstrates a predictable race (line~\ref{line:backReorder:return-trPrime}),
or it fails if no release events are missing and no legal events can be added to \trPrime,
in which case it returns an empty trace (line~\ref{line:backReorder:stuck}).

\paragraph{Discussion}

\textsc{ConstructReorderedTrace} is sound:
if it returns a non-empty trace \trPrime, then
\trPrime is a valid reordering in which
\event{e}{_1} and \event{e}{_2} are consecutive.

\textsc{ConstructReorderedTrace} is however incomplete.
Because of its greedy algorithm for choosing among legal events to prepend,
\textsc{AttemptToConstructTrace} may fail to construct a valid reordering even when one exists.
(If \textsc{AttemptToConstructTrace} backtracked, it would be complete but incur exponential worst-case complexity.)
Since \textsc{ConstructReorderedTrace} is incomplete,
if \textsc{ConstructReorderedTrace} fails to construct a valid reordering,
\checkWBRRace reports ``Don't know'' (line~\ref{line:approach:questionableRace}).
However, \textsc{ConstructReorderedTrace} never reported ``Don't know'' in our experiments (Section~\ref{sec:evaluation}).




If \tr has a \emph{predictable deadlock} (a deadlock in some valid reordering)
but no predictable race,
then no valid reordering exposing a predictable race exists,
so \checkWBRRace will report either ``No race predictable'' or ``Don't know.''
It might be possible to extend \checkWBRRace to check for predictable deadlocks.


\subsection{Asymptotic Complexity}

\later{
\kaan{Reviewer B says ``this argument is too brief.''
\mike{Meh. Maybe we should be downplaying stating asymptotic complexity, and instead mention somewhere
(not in its own subsection) that it's worst-case polynomial like \checkDCRace,
but both algorithms rely on doing much better than the worst case in practice.}}
}

As for \checkDCRace~\cite{vindicator},
\checkWBRRace's worst-case time complexity is O($N^k$), \ie, polynomial in $N$,
the number of events in $G$.
Briefly, every loop iteration count is bounded by the number of $G$'s nodes and edges.
The exact degree $k$ of the polynomial depends on various implementation details.
\CheckWBRRace uses $\Omega(N)$ space for \Gv and \trPrime.


\section{Results with Confidence Intervals}
\label{sec:extended-results}

This section presents Tables~\ref{tab:extended-races-counts-hbwcpbr}---%
\ref{tab:extended-performance-vindication}, which show the same results as Section~\ref{sec:evaluation},
but include 95\% confidence intervals for every result, delimited with a $\pm$ sign.

\begin{table*}
\centering
\small
\newcommand{\none}{\mc{1}{r@{$\;\;\textcolor{white}{\rightarrow}\;\;$}}{0}}
\newcommand{\K}[2]{#2$\;\!$K}
\newcommand{\MM}[2]{#2$\;\!$M}

\begin{tabular}{@{}l|r@{$\;\;\pm\;\;$}lr@{$\;\;\pm\;\;$}l|r@{$\;\;\pm\;\;$}lr@{$\;\;\pm\;\;$}l|r@{$\;\;\pm\;\;$}lr@{$\;\;\pm\;\;$}l@{}}
    Program & \mc{4}{c|}{\HB-races} & \mc{4}{c|}{\WCP-races} & \mc{4}{c}{\BR-races} \\
    \hline
    \bench{avrora}& 5 & 0 & (\K{202034}{202} & \K{1079}{1}) & 5 & 0 & (\K{202960}{203} & \K{1140}{1}) & 5 & 0 & (\K{202963}{203} & \K{1139}{1})  \\
    \bench{batik}& 0 & 0 & (0 & 0) & 0 & 0 & (0 & 0) & 0 & 0 & (0 & 0)  \\
    \bench{h2}& 12 & 1 & (\K{52704}{53} & 344) & 12 & 1 & (\K{52788}{53} & 340) & 12 & 1 & (\K{53538}{54} & 367)  \\
    \bench{luindex}& 1 & 0 & (1 & 0) & 1 & 0 & (1 & 0) & 1 & 0 & (1 & 0)  \\
    \bench{lusearch}& 0 & 0 & (0 & 0) & 0 & 0 & (0 & 0) & 0 & 0 & (0 & 0)  \\
    \bench{pmd}& 8 & 1 & (436 & 286) & 8 & 1 & (443 & 283) & 10 & 1 & (651 & 283)  \\
    \bench{sunflow}& 2 & 0 & (20 & 1) & 2 & 0 & (26 & 17) & 2 & 0 & (26 & 17)  \\
    \bench{tomcat}& 98 & 2 & (\K{35935}{36} & 159) & 99 & 2 & (\K{35956}{36} & 164) & 103 & 2 & (\K{38752}{39} & 163)  \\
    \bench{xalan}& 7 & 1 & (208 & 15) & 15 & 1 & (\K{531236}{531} & 920) & 37 & 1 & (\MM{2072070}{2} & \K{3648}{4})  \\
\end{tabular}
\caption{Static and dynamic (in parentheses) race counts from \HB, \WCP, and \BR analyses running together.
A ``$\pm\,$0'' confidence interval is the result of rounding to the nearest integer and
implies that the interval is smaller than $\pm\,$0.5.\label{tab:extended-races-counts-hbwcpbr}}
\end{table*}

\begin{table*}
    \centering
    \small
    \newcommand{\none}{\mc{1}{r@{$\;\;\textcolor{white}{\rightarrow}\;\;$}}{0}}
    \newcommand{\K}[2]{#2$\;\!$K}
    \newcommand{\MM}[2]{#2$\;\!$M}

    \begin{tabular}{@{}l|r@{$\;\;\pm\;\;$}lr@{$\;\;\pm\;\;$}l|r@{$\;\;\pm\;\;$}lr@{$\;\;\pm\;\;$}l@{}}
        Program & \mc{4}{c|}{\WDC-races} & \mc{4}{c}{\WBR-races} \\
        \hline
        \bench{avrora}& 5 & 0 & (\K{203497}{203} & 673) & 5 & 0 & (\K{406047}{406} & 806)  \\
        \bench{batik}& 0 & 0 & (0 & 0) & 0 & 0 & (0 & 0)  \\
        \bench{h2}& 11 & 13 & (\K{53768}{54} & \K{1360}{1}) & 12 & 13 & (\K{62526}{63} & 781)  \\
        \bench{luindex}& 1 & 0 & (1 & 0) & 1 & 0 & (1 & 0)  \\
        \bench{lusearch}& 0 & 0 & (0 & 0) & 1 & 0 & (30 & 0)  \\
        \bench{pmd}& 9 & 1 & (\K{1678}{2} & 191) & 10 & 1 & (\K{3183}{3} & 332)  \\
        \bench{sunflow}& 2 & 0 & (49 & 3) & 2 & 0 & (100 & 3)  \\
        \bench{tomcat}& 94 & 4 & (\K{35875}{36} & 593) & 284 & 72 & (\K{125078}{125} & \K{5383}{5})  \\
        \bench{xalan}& 17 & 1 & (\K{648507}{649} & 940) & 170 & 1 & (\MM{15324942}{15} & \K{37438}{37})  \\
    \end{tabular}
    \caption{Static and dynamic (in parentheses) race counts from \DC and \WBR analyses running together.
    A ``$\pm\,$0'' confidence interval is the result of rounding to the nearest integer and
implies that the interval is smaller than $\pm\,$0.5.\label{tab:extended-races-counts-dcwbr}}
    \end{table*}

\begin{table*}
    \centering
    \small
    \newcommand{\none}{\mc{1}{r@{$\;\;\textcolor{white}{\pm}\;\;$}}{0}}
    \newcommand{\K}[2]{#2$\;\!$K}
    \newcommand{\MM}[2]{#2$\;\!$M}

    \begin{tabular}{@{}l|r@{$\;\;\pm\;\;$}lr@{$\;\;\pm\;\;$}l|r@{$\;\;\pm\;\;$}lr@{$\;\;\pm\;\;$}l|r@{$\;\;\pm\;\;$}l@{$\;\;\rightarrow\;\;$}l@{$\;\;\pm\;\;$}l@{}}
        Program & \mc{4}{c|}{\BR-races} & \mc{4}{c|}{\WBR-races} & \mc{2}{c}{\WBR-only} & \mc{2}{c}{Verified} \\
        \hline
        \bench{avrora}& 5 & 0 & (\K{201613}{202} & 431)& 5 & 0 & (\K{406567}{407} & 354)&\none \\
        \bench{batik}& 0 & 0 & (0 & 0)& 0 & 0 & (0 & 0)&\none \\
        \bench{h2}& 12 & 2 & (\K{53431}{53} & 814)& 12 & 1 & (\K{62606}{63} & \K{1032}{1})&1 & 0 &  0 & 0 \\
        \bench{luindex}& 1 & 0 & (1 & 0)& 1 & 0 & (1 & 0)&\none \\
        \bench{lusearch}& 0 & 0 & (0 & 0)& 1 & 0 & (30 & 0)&1 & 0 &  0 & 0 \\
        \bench{pmd}& 9 & 1 & (470 & 152)& 11 & 1 & (\K{2944}{3} & 806)&1 & 1 & 1 & 1 \\
        \bench{sunflow}& 2 & 0 & (38 & 21)& 2 & 0 & (100 & 3)&\none \\
        \bench{tomcat}& 99 & 3 & (\K{37642}{38} & 563)& 321 & 74 & (\K{125743}{126} & \K{9980}{10})&222 & 72 & 53 & 10 \\
        \bench{xalan}& 33 & 2 & (\MM{1898266}{2} & \K{3952}{4})& 170 & 1 & (\MM{15277795}{15} & \K{79528}{80})&137 & 1 & 135 & 1 \\
    \end{tabular}

    \caption{Static and dynamic (in parentheses) race counts. The \colname{\WBR-only $\rightarrow$ Verified} column reports static \WBR-only races,
followed by how many static \WBR-only races were verified as predictable races by \checkWBRRace.
A ``$\pm\,$0'' confidence interval is the result of rounding to the nearest integer and
implies that the interval is smaller than $\pm\,$0.5.\label{tab:extended-races-vindication}}
\end{table*}

\begin{table*}
    \newcommand{\seconds}[1]{#1~s}
    \small

    \begin{tabular}{@{}l|r@{$\;\;\pm\;\;$}l|r@{$\;\;\pm\;\;$}lr@{$\;\;\pm\;\;$}l@{}}
        Program & \mc{2}{c|}{Base} & \mc{2}{c}{w/o br} & \mc{2}{c}{w/ br} \\
        \hline
        \bench{avrora} & \seconds{7.7} & \seconds{1.0} & \slowdown{2.1} & \slowdown{0.1} & \slowdown{2.6} & \slowdown{0.1} \\
        \bench{batik} & \seconds{3.3} & \seconds{0.0} & \slowdown{3.9} & \slowdown{0.0} & \slowdown{5.5} & \slowdown{0.0} \\
        \bench{h2} & \seconds{9.4} & \seconds{0.3} & \slowdown{6.6} & \slowdown{0.4} & \slowdown{9.3} & \slowdown{0.3} \\
        \bench{luindex} & \seconds{1.4} & \seconds{0.1} & \slowdown{5.9} & \slowdown{0.0} & \slowdown{11} & \slowdown{0.2} \\
        \bench{lusearch} & \seconds{3.8} & \seconds{0.8} & \slowdown{4.4} & \slowdown{0.8} & \slowdown{5.2} & \slowdown{0.6} \\
        \bench{pmd} & \seconds{2.6} & \seconds{0.2} & \slowdown{7.6} & \slowdown{0.4} & \slowdown{10.0} & \slowdown{0.7} \\
        \bench{sunflow} & \seconds{2.3} & \seconds{0.3} & \slowdown{11} & \slowdown{0.9} & \slowdown{13} & \slowdown{0.5} \\
        \bench{tomcat} & \seconds{1.6} & \seconds{0.1} & \slowdown{5.9} & \slowdown{0.4} & \slowdown{6.2} & \slowdown{0.2} \\
        \bench{xalan} & \seconds{5.3} & \seconds{1.3} & \slowdown{2.6} & \slowdown{0.3} & \slowdown{3.3} & \slowdown{0.2} \\        
    \end{tabular}
    \caption{Slowdowns of program instrumentation over uninstrumented execution.
    \label{tab:extended-performance-base}}
\end{table*}
\begin{table*}
    \small
    \newcommand{\none}{\mc{1}{r@{$\;\;\textcolor{white}{\pm}\;\;$}}{0}}
    \begin{tabular}{@{}l|r@{$\;\;\pm\;\;$}l|r@{$\;\;\pm\;\;$}l||r@{$\;\;\pm\;\;$}l|r@{$\;\;\pm\;\;$}l@{}}
        Program & \mc{2}{c|}{\WCP} & \mc{2}{c||}{\BR} & \mc{2}{c|}{\BR{}+\WDC} & \mc{2}{c}{\BR{}+\WBR} \\
        \hline
        \bench{avrora} & \slowdown{17} & \slowdown{0.6} & \slowdown{20} & \slowdown{1.3} & \slowdown{28} & \slowdown{1.3} & \slowdown{34} & \slowdown{2.1} \\
        \bench{batik} & \slowdown{19} & \slowdown{1.2} & \slowdown{20} & \slowdown{2.0} & \slowdown{24} & \slowdown{2.4} & \slowdown{33} & \slowdown{5.4} \\
        \bench{h2} & \slowdown{129} & \slowdown{13} & \slowdown{139} & \slowdown{21} & \slowdown{202} & \slowdown{20} & \slowdown{212} & \slowdown{22} \\
        \bench{luindex} & \slowdown{91} & \slowdown{11} & \slowdown{93} & \slowdown{12} & \slowdown{118} & \slowdown{20} & \slowdown{139} & \slowdown{32} \\
        \bench{lusearch} & \slowdown{20} & \slowdown{2.9} & \slowdown{23} & \slowdown{3.6} & \slowdown{29} & \slowdown{3.7} & \slowdown{32} & \slowdown{4.1} \\
        \bench{pmd} & \slowdown{27} & \slowdown{3.0} & \slowdown{29} & \slowdown{4.7} & \slowdown{32} & \slowdown{3.2} & \slowdown{36} & \slowdown{5.8} \\
        \bench{sunflow} & \slowdown{142} & \slowdown{9.8} & \slowdown{196} & \slowdown{31} & \slowdown{210} & \slowdown{34} & \slowdown{221} & \slowdown{29} \\
        \bench{tomcat} & \slowdown{30} & \slowdown{4.3} & \slowdown{37} & \slowdown{4.4} & \slowdown{55} & \slowdown{11} & \slowdown{59} & \slowdown{7.5} \\
        \bench{xalan} & \slowdown{40} & \slowdown{7.1} & \slowdown{46} & \slowdown{6.9} & \slowdown{63} & \slowdown{7.8} & \slowdown{86} & \slowdown{9.0} \\

    \end{tabular}
    \caption{Slowdowns of various analyses over uninstrumented execution.
    \label{tab:extended-performance-analysis}}
\end{table*}
\begin{table*}
    \newcommand{\seconds}[1]{#1~s}
    \newcommand{\none}{\mc{1}{r@{$\;\;\textcolor{white}{\pm}\;\;$}}{}}
    \small
    \begin{tabular}{@{}l|r@{$\;\;\pm\;\;$}l|r@{$\;\;\pm\;\;$}l|r@{$\;\;\pm\;\;$}l@{\;\;}r@{$\;\;\pm\;\;$}l@{}}
                & \mc{2}{c|}{}           & \mc{6}{c}{\BR{}+\WBR{}+graph} \\
        Program & \mc{2}{c|}{\BR{}+\WBR} & \mc{2}{c|}{Slowdown} & \mc{2}{c}{Failed} & \mc{2}{c}{Verified} \\
        \hline
        \bench{avrora} & \slowdown{34} & \slowdown{2.1} & \slowdown{42} & \slowdown{7.0} & \none &  & \none &  \\
        \bench{batik} & \slowdown{33} & \slowdown{5.4} & \slowdown{32} & \slowdown{5.5} & \none &  & \none &  \\
        \bench{h2} & \slowdown{212} & \slowdown{22} & \slowdown{280} & \slowdown{51}& \seconds{233}& \seconds{185} & \none &  \\
        \bench{luindex} & \slowdown{139} & \slowdown{32} & \slowdown{154} & \slowdown{12} & \none &  & \none &  \\
        \bench{lusearch} & \slowdown{32} & \slowdown{4.1} & \slowdown{42} & \slowdown{3.8}& \seconds{< 0.1}& \seconds{< 0.1} & \none &  \\
        \bench{pmd} & \slowdown{36} & \slowdown{5.8} & \slowdown{36} & \slowdown{1.3} & \none & & \seconds{8.6}& \seconds{20} \\
        \bench{sunflow} & \slowdown{221} & \slowdown{29} & \slowdown{264} & \slowdown{10} & \none &  & \none &  \\
        \bench{tomcat} & \slowdown{59} & \slowdown{7.5} & \slowdown{56} & \slowdown{3.1}& \seconds{2.0}& \seconds{< 0.1}& \seconds{55}& \seconds{26} \\
        \bench{xalan} & \slowdown{86} & \slowdown{9.0} & \slowdown{124} & \slowdown{6.1}& \seconds{30}& \seconds{9.4}& \seconds{0.1}& \seconds{< 0.1} \\        
    \end{tabular}
    \caption{Slowdowns of \WBR analysis without and with graph generation, and the average time taken to vindicate \WBR-only races.
    \label{tab:extended-performance-vindication}}
\end{table*}

}{}

\later{
\section{\WBR SmartTrack optimizations}
\label{sec:wbr-smarttrack}

As \WBR doesn't form a total order on conflicting writes, only storing the clock
of the last write is not sufficient. An access may be ordered to the latest
write while racing with an earlier write, even if that is the first race in the
execution. We solve this by introducing the cases \emph{After Ordered Writes}
and \emph{After Unordered Writes}. An access is after ordered writes only if it
is ordered to all prior conflicting writes, otherwise it is after unordered
writes.

\begin{algorithm*}

\newcommand{\AcqMQ}[3]{\ensuremath{\mathit{Acq}_{#1,#2}(#3)}\xspace}
\newcommand{\RelMQ}[3]{\ensuremath{\mathit{Rel}_{#1,#2}(#3)}\xspace}
\newcommand{\LockVarQ}[3]{\ensuremath{L_{#1,#2}^{#3}}\xspace}
\newcommand{\PerThrLockQ}[2]{\ensuremath{T_{#2}^{#1}}\xspace}
\newcommand{\LockVarThrQRd}[3]{\ensuremath{R_{#1,#2,#3}}\xspace}
\newcommand{\LockVarThrQWr}[3]{\ensuremath{W_{#1,#2,#3}}\xspace}
\newcommand{\epLeq}{\ensuremath{\preceq}\xspace}
\newcommand{\epochLeqVC}[2]{\ensuremath{#1 \epLeq #2}}
\newcommand{\epochLeqVCThr}[3]{\ensuremath{#1 \epLeq #2}}
\newcommand{\case}[1]{\textsc{\small [#1]}}
\newcommand{\ReadSameEpoch}{\case{Read Same Epoch}\xspace}
\newcommand{\WriteSameEpoch}{\case{Write Same Epoch}\xspace}
\newcommand{\SharedSameEpoch}{\case{Shared Same Epoch}\xspace}
\newcommand{\ReadOwned}{\case{Read Owned}\xspace}
\newcommand{\ReadExclusive}{\case{Read Exclusive}\xspace}
\newcommand{\ReadShare}{\case{Read Share}\xspace}
\newcommand{\ReadSharedOwned}{\case{Read Shared Owned}\xspace}
\newcommand{\ReadShared}{\case{Read Shared}\xspace}
\newcommand{\WriteOwned}{\case{Write Owned}\xspace}
\newcommand{\WriteExclusive}{\case{Write Exclusive}\xspace}
\newcommand{\WriteShared}{\case{Write Shared}\xspace}

\newcommand{\epoch}[2]{\ensuremath{#1\:\!@\:\!#2}\xspace}
\newcommand{\initE}{\ensuremath{\bot}\xspace}

\caption{\hfill SmartTrack-\WBR (SmartTrack-based \WBR analysis)}

\newcommand\Ht{\ensuremath{H_t}\xspace}
\newcommand\Rc{\ensuremath{C}\xspace}
\newcommand\dereference[1]{\ensuremath{#1}\xspace}

\small

\vspace*{-1.5em}
\begin{algorithmic}[1]

	\Procedure{Acquire}{$t$, $m$}
		\lForEach{$t' \neq t$}{$\AcqMQ{m}{t'}{t}$.Enque($C_t(t)$)} \label{line:RE-VC:AcqMQEnque}
		\State \textbf{let} $\Rc =$ reference to new vector clock \label{line:RE-VC:newCm}
		\State $\dereference{\Rc}(t) \gets \infty$ \label{line:RE-VC:inftyCm}
		\State $\Ht \gets \langle \Rc, m \rangle \oplus \Ht$ \Comment{Prepend $\langle \Rc, m \rangle$ to head of list}\label{line:RE-VC:addHt}
		\State $C_t(t) \gets C_t(t) + 1$ \label{line:RE-VC:CtincAcq}
	\EndProcedure

	\Procedure{Release}{$t$, $m$}
		\ForEach{$t' \neq t$} \label{line:RE-VC:AcqMQFrontLoop}
			\While{\epochLeqVCThr{\AcqMQ{m}{t}{t'}\textnormal{.Front()}}{C_t}{t'}}
				\State $\AcqMQ{m}{t}{t'}$.Deque()
				\State $C_t \gets C_t \sqcup \RelMQ{m}{t}{t'}$.Deque() \label{line:RE-VC:RelMQDeque}
			\EndWhile
		\EndFor
		\lForEach{$t' \neq t$}{$\RelMQ{m}{t'}{t}$.Enque($C_t$)} \label{line:RE-VC:RelMQEnque}
		\State \textbf{let} $\langle \Rc, \_ \rangle = \textnormal{head}(\Ht)$ \Comment{head() returns first element} \label{line:RE-VC:headHt}
		\State $\dereference{\Rc} \gets C_t$ \Comment{Update vector clock referenced by $C$} \label{line:RE-VC:setCm}
		\State $\Ht \gets$ rest(\Ht) \Comment{rest() returns list without first element} \label{line:RE-VC:removeHt}
		\State $C_t(t) \gets C_t(t) + 1$ \label{line:RE-VC:CtincRel}
	\EndProcedure

	\Procedure{Write}{$t$, $x$}
		\lIf {$W_x = C_t(t)@t$} \textbf{return} \CaseComment{\WriteSameEpoch} \label{line:RE-VC:wrSameEpoch}
		\lIf {$W_x(t) = C_t(t)$} \textbf{return}

		\If {$R_x = \epoch{c}{u}$} \CaseComment{\textsc{[Write After Ordered Read]}}
			\State \textbf{check} $u=t \lor c \leq C_t(t) \lor L^r_x \cap L \neq \emptyset$
		\Else \CaseComment{\textsc{[Write After Unordered Reads]}}
			\ForEach{$u \neq t$}
				\State \textbf{check} $ R_x(u) \leq C_t(u) \lor L^r_x(u) \cap L \neq \emptyset$
			\EndFor
		\EndIf

		\If {$W_x = \epoch{c}{u}$} \CaseComment{\textsc{[Write After Ordered Writes]}}
			\If {$u = t \lor c \leq C_t(u)$}
				\State $W_x \gets C_t(t)@t$ \Comment{Current write is ordered too}
				\State $L^w_x \gets H_t$
			\Else
				\State \textbf{check} $L^w_x \cap L \neq \emptyset$
				\State $W_x \gets C_t$ \Comment{Current write unordered, use VC}
				\State $L^w_x \gets \{ L^w_x , \Ht \}$
				\State $T_x \gets t$
			\EndIf
		\Else \CaseComment{\textsc{[Write After Unordered Writes]}}
			\State $o \gets \textnormal{true}$
			\ForEach{$u \neq t$}
				\lIf {$R_x(u) \leq C_t(u)$} \textbf{continue}
				\State \textbf{check} $L^r_x \cap L \neq \emptyset$
				\State $o \gets \textnormal{false}$
			\EndFor
			\If {$o = \textnormal{true}$}
				\State $W_x \gets C_t(t)@t$ \Comment{Write ordered, use epoch}
				\State $L^w_x \gets \Ht$
			\Else
				\State $W_x(t) \gets C_t(t)$ \Comment{Write unordered, keep VC}
				\State $L^w_x(t) \gets \Ht$
				\State $T_x \gets t$
			\EndIf
		\EndIf
	\EndProcedure


	\Procedure{Read}{$t$, $x$, $L$}
		\lIf {$R_x = C_t(t)@t$}{\textbf{return}} \CaseComment{\ReadSameEpoch}
		\lIf {$R_x(t) = C_t(t)$}{\textbf{return}} \CaseComment{\SharedSameEpoch}

		\If {$W_x = \epoch{c}{u}$} \CaseComment{\textsc{[Read After Ordered Writes]}}
			\If {$u \neq t$}
				\State \textbf{check} $c \leq C_t(u) \lor L^w_x \cap L \neq \emptyset$
				\If {$c > C_t(u) \land L^w_x \cap L \neq \emptyset$}
					\State $\textsc{CSCheck}(L^w_x)$
				\EndIf
			\EndIf
		\Else \CaseComment{\textsc{[Read After Unordered Writes]}}
			\ForEach{$u \neq t$}
				\State \textbf{check} $W_x(u) \leq C_t(u) \lor L^w_x(u) \cap L \neq \emptyset$
				\If {$T_x = u \land W_x(u) > C_t(u) \land L^w_x(u) \cap L \neq \emptyset$}
					\State $\textsc{CSCheck}(L^w_x(u))$
				\EndIf
			\EndFor
		\EndIf

		\If {$R_x = \epoch{c}{u}$} \CaseComment{\textsc{[Read After Ordered Read]}}
			\If {$u = t \lor c \leq C_t(u)$}
				\State $R_x \gets C_t(t)@t$
				\State $L^r_x \gets H_t$
			\Else
				\State $R_x \gets C_t$ \Comment{Unordered read, use VC}
				\State $L^r_x \gets \{ L^r_x , \Ht \}$
			\EndIf
		\Else \CaseComment{\ReadShared}
			\If {$R_x \lessvc C_t$}
				\State $R_x \gets C_t(t)@t$
				\State $L^r_x \gets \Ht$
			\Else
				\State $R_x(t) \gets C_t(t)$
				\State $L^r_x(t) \gets \Ht$
			\EndIf
		\EndIf
	\EndProcedure

\kaan{Branches are not using source locations, which makes it faster since we only need a single vector clock for pending rule-a edges then.}

	\Procedure{Branch}{$t$}
		\lIf {$B_t = -1@-1$} \textbf{return}
		\State $C_t \gets C_t \sqcup B_t$
		\State $B_t \gets -1@-1$
	\EndProcedure

	\Procedure{CSCheck}{$L$}
		\ForEach {$\langle \Rc, m \rangle \textnormal{ in } L$ in tail-to-head order}
			\If {$m \in \textnormal{heldby}(t)$}
				\State $B_t \gets B_t \sqcup \dereference{\Rc}$
				\State \textbf{return}
			\EndIf
		\EndFor
	\EndProcedure

\end{algorithmic}
\vspace*{-1em}
\label{alg:RE-VC}
\end{algorithm*}

}

\end{document}